\def\dOi{12(4:8)2016}
\subjclass{ F.4.1 Mathematical Logic, H.2.3 Languages (Query
        Languages), F.1.3 Complexity Measures and Classes } 
\renewcommand{\phi}{\varphi}
\renewcommand{\geq}{\ensuremath{\geqslant}}
\renewcommand{\leq}{\ensuremath{\leqslant}}
\newcommand{\sphere}[3][k]{\ensuremath{(\NS^{#2}_{#1}(#3)},#3)}
\newcommand{\id}[1]{\operatorname{id}_{#1}}
\newenvironment{proofc}[1]{\begin{proof}[Proof of #1]}{\end{proof}}
\theoremstyle{plain}
\newtheorem{theorem}{Theorem}[section]
\newtheorem{lemma}[theorem]{Lemma}
\newtheorem{corollary}[theorem]{Corollary}
\newtheorem{proposition}[theorem]{Proposition}
\newtheorem*{claim}{Claim}
\theoremstyle{definition}
\newtheorem{definition}[theorem]{Definition}
\newtheorem{example}[theorem]{Example}
\newtheorem{remark}[theorem]{Remark}
\newcommand{\myparagraph}[1]{\noindent\textbf{\large #1. }}
\newcommand{\abs}[1]{\vert{}#1\vert{}}
\newcommand{\set}[1]{\ensuremath{\{ #1 \}}}
\newcommand{\setc}[2]{\set{#1 \,:\, #2}}
\newcommand{\isom}{\ensuremath{\cong}}
\newcommand{\Hanfequiv}{\ensuremath{\leftrightarrows}}
\newcommand{\dist}{\ensuremath{\textit{dist}}}
\newcommand{\ov}[1]{\ensuremath{\overline{#1}}}
\newcommand{\mymod}{\ensuremath{\textup{ mod }}}
\newcommand{\NN}{\ensuremath{\mathbb{N}}}
\newcommand{\NNpos}{\ensuremath{\NN_{\mbox{\tiny $\scriptscriptstyle \geq 1$}}}}
\newcommand{\union}{\ensuremath{\cup}}
\newcommand{\und}{\ensuremath{\wedge}}
\newcommand{\Und}{\ensuremath{\bigwedge}}
\newcommand{\oder}{\ensuremath{\vee}}
\newcommand{\Oder}{\ensuremath{\bigvee}}
\newcommand{\nicht}{\ensuremath{\neg}}
\newcommand{\existsc}[1]{\ensuremath{\exists^{0 \textup{ mod } #1}}}
\newcommand{\existscc}[2]{\ensuremath{\exists^{#1 \textup{ mod } #2}}}
\newcommand{\free}{\ensuremath{\textit{free}}}
\newcommand{\ar}{\ensuremath{\textit{ar}}}
\newcommand{\ARB}{\ensuremath{\textrm{arb}}}
\newcommand{\Rep}{\mathrm{Rep}}
\newcommand{\pat}{\mathit{pattern}}
\newcommand{\struct}[1]{\ensuremath{\mathcal{#1}}}
\newcommand{\A}{\struct{A}}
\newcommand{\B}{\struct{B}}
\newcommand{\C}{\struct{C}}
\newcommand{\G}{\struct{G}}
\renewcommand{\S}{\struct{S}}
\newcommand{\T}{\struct{T}}
\newcommand{\HS}{\struct{T}}
\newcommand{\NS}{\struct{N}}
\newcommand{\rela}[2]{\ensuremath{#1\!\upharpoonright\!#2}}
\newcommand{\Class}{\ensuremath{\mathfrak{C}}}
\newcommand{\ClassStrings}[1][\Sigma]{\ensuremath{#1\textit{-strings}}}
\newcommand{\stringfont}[1]{\texttt{#1}}
\newcommand{\logic}[1]{\ensuremath{\textrm{\upshape{#1}}}}
\newcommand{\FO}{\logic{FO}}
\newcommand{\FOMOD}[1]{\ensuremath{\textrm{\upshape FO+MOD}_{#1}}}
\newcommand{\FOMODp}{\ensuremath{\FOMOD{p}}}
\newcommand{\arbinv}{\ensuremath{\textup{arb-inv-}}}
\newcommand{\arbinvFO}{\ensuremath{\textup{arb-inv-}\FO}}
\newcommand{\ordinvFO}{\ensuremath{{<}\textup{-inv-}\FO}}
\newcommand{\arbinvFOMODp}{\ensuremath{\textup{arb-inv-$\FOMODp$}}}
\newcommand{\ordinvFOMODp}{\ensuremath{{<}\textup{-inv-$\FOMODp$}}}
\newcommand{\arbinvFOMODpC}{\ensuremath{\textup{arb-inv-$\textrm{\upshape FO+MOD}_{p}^{\Class}$}}}
\newcommand{\ordinvFOMODnum}[1]{\ensuremath{{<}\textup{-inv-$\FOMOD{#1}$}}}
\newcommand{\compclass}[1]{\ensuremath{\textrm{\upshape{#1}}}}
\newcommand{\ACO}{\ensuremath{\compclass{AC}^0}}
\newcommand{\MOD}[1]{\ensuremath{\textup{MOD}_{#1}}}
\newcommand{\MODp}{\MOD{p}}
\newcommand{\ACCO}{\ensuremath{\compclass{ACC}^0}}
\newcommand{\NEXP}{\compclass{NEXP}}
\newcommand{\NLOGSPACE}{\compclass{NLOGSPACE}}
\newcommand{\PTIME}{\compclass{PTIME}}
\newcommand{\twist}{\mathop{twist}}
\newcommand{\tturn}{\mathop{turn}}
\newcommand{\occ}{\operatorname{occ}}
\begin{document}

\title[Arb-invariant first-order formulas with
modulo counting quantifiers]{On the locality of arb-invariant
  first-order formulas with modulo counting quantifiers\rsuper*}
\titlecomment{{\lsuper*}This is the full version of the conference
  contribution \cite{HarwathSchweikardt2013}}

      \author[F.~Harwath]{Frederik Harwath\rsuper a} \address{{\lsuper
          a}Institut f\"ur Informatik,
        Goethe-Universit\"at Frankfurt, Germany}
      \email{harwath@cs.uni-frankfurt.de}
  
      \author[N.~Schweikardt]{Nicole Schweikardt\rsuper b}
      \address{{\lsuper b}Institut f\"ur Informatik,
        Humboldt-Universit\"at zu Berlin, Germany}
      \email{schweikn@informatik.hu-berlin.de}

\keywords{finite model theory, Gaifman and Hanf locality, first-order
        logic with modulo counting quantifiers, order-invariant and
        arb-invariant formulas, lower bounds in circuit complexity }

\begin{abstract}
  We study Gaifman locality and Hanf locality of an extension of
  first-order logic with modulo $p$ counting quantifiers ($\FOMODp$,
  for short) with arbitrary numerical predicates. We require that the
  validity of formulas is independent of the particular interpretation
  of the numerical predicates and refer to such formulas as
  arb-invariant formulas.  This paper gives a detailed picture of
  locality and non-locality properties of arb-invariant $\FOMODp$.
  For example, on the class of all finite structures, for any
  $p\geq 2$, arb-invariant $\FOMODp$ is neither Hanf nor Gaifman local
  with respect to a sublinear locality radius. However, in case that
  $p$ is an odd prime power, it is \emph{weakly} Gaifman local with a
  polylogarithmic locality radius. And when restricting attention to
  the class of string structures, for odd prime powers $p$,
  arb-invariant $\FOMODp$ is both Hanf and Gaifman local with a
  polylogarithmic locality radius. Our negative results build on
  examples of order-invariant $\FOMODp$ formulas presented in
  Niemist\"o's PhD thesis. Our positive results make use of the close
  connection between $\FOMODp$ and Boolean circuits built from
  NOT-gates and AND-, OR-, and MOD$_p$- gates of arbitrary fan-in.
\end{abstract}

\maketitle

\section{Introduction}\label{section:introduction}
Expressivity of logics over finite structures plays an important
role in various areas of computer science. In descriptive complexity,
logics are used to characterise complexity classes, and concerning
databases, common query languages have well-known logical equivalents.
These applications have motivated a systematic study of the expressive
powers of logics on finite structures. The classical inexpressibility
arguments for logics over finite structures (i.e., back-and-forth
systems or Ehrenfeucht-Fra\"\i{}ss\'{e} games; cf.\ 
\cite{Libkin2004}) often involve nontrivial combinatorics. Notions of
\emph{locality} have been proposed as an alternative that allows to
contain much of the hard combinatorial work in generic results.

The two best known
notions of locality are \emph{Gaifman locality} and \emph{Hanf locality},
introduced in \cite{HellaLibkinNurmonen1999,FSV95}.
A $k$-ary query is called \emph{Gaifman local with locality radius
$f(n)$} if in a structure of cardinality $n$, the question whether a
given tuple satisfies the query only depends on the isomorphism type
of the tuple's neighbourhood of radius $f(n)$.
A Boolean query is \emph{Hanf local with locality radius
$f(n)$} if the question whether a structure of size $n$ satisfies the
query only depends on the number of occurrences of isomorphism types
of neighbourhoods of radius $f(n)$.
If a given logic is capable
of defining only Gaifman or Hanf local queries with a sublinear
locality radius, then this logic cannot express ``non-local'' queries
such as, e.g., the query asking whether two nodes of a graph are
connected by a path, or the query asking whether a graph is acyclic
(cf., e.g., the textbook \cite{Libkin2004}).
It is well-known that first-order logic $\FO$, as well as extensions
of $\FO$ by various kinds of counting quantifiers, are Gaifman local and
Hanf local with a constant locality radius \cite{HellaLibkinNurmonen1999,FSV95}.
Also, locality properties of extensions of $\FO$ by invariant uses of order and arithmetic
have been considered \cite{Grohe1998,AMSS-SICOMP}.

Order-invariant and arb-invariant formulas
were introduced to capture
the data independence principle in databases: An implementation of a
database query may exploit the order in which the database elements
are stored in memory, and thus identify the elements with natural
numbers on which arithmetic can be performed. But the use of order and
arithmetic should be restricted in such a way that the result of the
query does not depend on the particular order in which the data is
stored.
In addition to the relations of a given structure, \emph{arb-invariant} formulas are allowed to use a linear order $<$ and arithmetic predicates such as $+$ or $\times$ induced by the order.
But the truth of such a formula in a structure must be independent of the particular linear order which is chosen for $<$. Arb-invariant formulas that only use the linear order, but no further arithmetic predicates, are called \emph{order-invariant}.
In \cite{Grohe1998} it was shown that order-invariant $\FO$ can express
only queries that are Gaifman local with a constant locality
radius, and from \cite{AMSS-SICOMP} we know that arb-invariant $\FO$ can
express only queries that are Gaifman local with a polylogarithmic
locality radius.
The proof of \cite{AMSS-SICOMP} relies on a reduction using strong lower
bound results from circuit complexity, concerning
$\ACO$-circuits. 
Similar lower bounds are known also for the
extension of $\ACO$-circuits by modulo $p$ counting gates, for a prime
power $p$ \cite{Smolensky}. This naturally raises the question
whether the locality results from \cite{AMSS-SICOMP} can be generalised
to the extension of $\FO$ by modulo $p$ counting quantifiers
($\FOMODp$, for short), which precisely corresponds to $\ACO$-circuits
with modulo $p$ counting gates \cite{BIS90}. 
This question was the starting point
for the investigations carried out in the present paper. \par Our results
give a detailed picture of the locality and non-locality properties of
order-invariant and arb-invariant $\FOMODp$:
For every natural number $p\geq 2$, order-invariant $\FOMODp$ is
neither Hanf nor Gaifman local with a sublinear locality radius
(see Section~\ref{section:strings} and Proposition~\ref{prop:notGaifmanLocal}).
For \emph{even} numbers $p\geq 2$, order-invariant $\FOMODp$ is
not even \emph{weakly} Gaifman local with a sublinear locality radius
(Proposition~\ref{prop:notWeaklyGaifmanLocal}). 
Here, \emph{weak} Gaifman locality is a relaxed notion of Gaifman
locality referring only to tuples with disjoint neighbourhoods (cf.,
\cite{Libkin2004}).
However, for \emph{odd prime powers} $p$ we can show that
arb-invariant $\FOMODp$ is weakly Gaifman local with a
polylogarithmic locality radius (Theorem~\ref{thm:weakGaifmanLocality}). 
For showing the latter result, we introduce a new locality notion
called \emph{shift locality}, for which we can prove for all prime
powers $p$ that arb-invariant $\FOMODp$ is shift local with a
polylogarithmic locality radius
(Theorem~\ref{thm:CyclicShiftLocality}). 
Our proof relies on Smolensky's
circuit lower bound \cite{Smolensky}. Generalising our result from
prime powers $p$ to arbitrary numbers $p$ can be expected to be
difficult, since it would solve long-standing open questions in
circuit complexity (see Remark~\ref{remark:circuitcomplexity}).
\par
When restricting attention to the class of \emph{string structures}, we
obtain for odd prime powers $p$, that arb-invariant $\FOMODp$ is both
Hanf and Gaifman local with a polylogarithmic locality radius
(Theorem~\ref{thm:HanfLocalityOnStrings} and Corollary~\ref{cor:GaifmanLocalityOnStrings}).
On the other hand, for even numbers $p\geq 2$, order-invariant
$\FOMODp$ on string structures is neither Gaifman nor Hanf local
with a sublinear locality radius
(Proposition~\ref{prop:notWeaklyGaifmanLocal} and Section~\ref{section:strings}). 
This, in particular, implies
that order-invariant $\FOMODp$ is strictly more expressive on strings
than $\FOMODp$, refuting a conjecture of Benedikt and Segoufin \cite{BenediktSegoufin2009a}.

\medskip

\myparagraph{Outline}
Section~\ref{section:preliminaries} fixes the basic notation,
introduces order-invariance and arb-invariance,
and recalls two examples of order-invariant $\FOMODp$-formulas from
Niemist\"o's PhD-thesis
\cite{Niemistoe-PhD}. Section~\ref{section:gaifman} presents our
results concerning Gaifman locality, weak Gaifman locality, and shift
locality. Section~\ref{section:strings} deals with Gaifman and Hanf locality on string structures. Section~\ref{section:conclusion} concludes the paper.

\smallskip

\myparagraph{Acknowledgements} We want to thank the anonymous reviewers for their valuable suggestions.

\section{Preliminaries}\label{section:preliminaries}
We assume that the reader is familiar with the basic concepts and
notations concerning first-order logic and extensions thereof
which can be found in most modern introductory books on mathematical logic.
In particular, we use the standard semantics of first-order logic.
The exposition given in the textbooks \cite{Libkin2004,Ebbinghaus1999} is
close to our concerns. These textbooks also provide the reader with the necessary background
on the relation of first-order logic to circuit complexity, on notions of locality, and on order-invariant first-order formulas.\footnote{These topics are more emphasised in \cite{Libkin2004} which, in particular, contains several examples for the different notions of locality.}

\myparagraph{Basic notation}
We write $\NN$ for the set of non-negative integers and
let $\NNpos:=\NN\setminus\set{0}$. 
For $n\in \NNpos$ we write $[n]$ for the set
$\setc{i\in\NN}{0\leq i<n}$, i.e., $[n]=\set{0,\ldots,n{-}1}$.
For integers $i,j,p$ with $p\geq 2$, we write $i\equiv
j \mymod p$ (and say that $i$ is congruent $j$ modulo $p$) iff
there exists an integer $k$ such that $i=j+kp$. For integers $i,i'$, the
term $(i{+}i'\mymod p)$ denotes the number $j\in [p]$ such that
$i{+}i'\equiv j \mymod p$.
A number $p$ is called a \emph{prime power} if
$p=\hat{p}^i$ for a prime $\hat{p}$ and an integer $i\geq 1$,
and $p$ is called an
\emph{odd} prime power if $p$'s prime factor is different from $2$
(i.e., $p$ is odd).
We write $\log n$ to denote
the logarithm of a number $n$ with respect to base 2, and we often
simply write $\log n$ instead of $\lfloor \log n \rfloor$.

By $2^A$ we denote the power set of $A$, i.e., the set
$\setc{Y}{Y\subseteq A}$.
The set of all non-empty finite strings built from symbols in $A$ is
denoted $A^+$. We write $|w|$ for the length of a string $w\in A^+$.
For an $a\in A$ we write $|w|_a$ for the number of occurrences
of the letter $a$ in the string $w$.

\bigskip

\myparagraph{Structures}
A \emph{signature} $\sigma$ is a set of relation symbols $R$, each of
them associated with a fixed arity $\ar(R)\in \NNpos$.
Throughout this paper, $\sigma$ will usually denote a fixed finite signature.

A \emph{$\sigma$-structure} $\A$ consists of a non-empty set $A$
called the \emph{universe} of $\A$, and a relation $R^\A\subseteq
A^{\ar(R)}$ for each relation symbol $R\in\sigma$.
The \emph{cardinality} of a $\sigma$-structure $\A$ is the cardinality
of its universe. 
\emph{Finite} $\sigma$-structures are $\sigma$-structures of finite cardinality.
For $\sigma$-structures $\A$ and $\B$ and tuples $\ov{a}=(a_1,\ldots,a_k)\in A^k$ and
$\ov{b}=(b_1,\ldots,b_k)\in B^k$ we write $(\A,\ov{a})\isom (\B,\ov{b})$ to indicate
that there is an isomorphism $\pi$ from $\A$ to $\B$ that maps
$\ov{a}$ to $\ov{b}$ (i.e., $\pi(a_i)=b_i$ for each $i\leq k$).

We represent strings over a finite alphabet $\Sigma$ by successor-based
structures as follows:
We choose $\sigma_\Sigma:=\set{E}\cup\setc{P_a}{a\in\Sigma}$, where $E$ is a
binary relation symbol and $P_a$ is a unary relation symbol, for each
$a\in\Sigma$.
We represent a non-empty string $w\in\Sigma^+$ by the
$\sigma_\Sigma$-structure $\S_w$, where the universe of
$\S_w$ is the set $\set{1,\ldots,|w|}$ of positions of $w$, the edge
relation $E^{\S_w}$ is the successor relation, i.e.,
$E^{\S_w}=\setc{(i,i+1)}{1\leq i<|w|}$, and for each $a\in\Sigma$,
the set $P_a^{\S_w}$ consists of all positions of $w$ that carry the
letter $a$.
Structures of the form $\S_w$ (for a string $w$) are called \emph{string
  structures}. Note that in the literature, strings are also often represented by linearly ordered structures,
whereas our representation contains only the successor relation, but not the linear order.

In this paper, we restrict attention to \emph{finite} structures. All
\emph{classes} $\Class$ of $\sigma$-structures will be closed
under isomorphism, i.e., if $\A$ and $\B$ are isomorphic
$\sigma$-structures, then $\A\in\Class$ iff $\B\in\Class$.
We will write $\ClassStrings$ to denote the class of all
$\sigma_\Sigma$-structures that represent strings in $\Sigma^+$ (i.e.,
$\ClassStrings$ is the closure under isomorphisms
of the set $\setc{\S_w}{w\in\Sigma^+}$).

\bigskip

\myparagraph{First-order logic with modulo counting quantifiers}
By $\free(\varphi)$ we denote the set of all free variables of a formula
$\varphi$. A \emph{sentence} is a formula $\varphi$ with
$\free(\varphi)=\emptyset$.
We often write $\varphi(\ov{x})$, for
$\ov{x}=(x_1,\ldots,x_k)$, to indicate that
$\free(\varphi)=\set{x_1,\ldots,x_k}$.  
If $\A$ is a $\sigma$-structure and $\ov{a}=(a_1,\ldots,a_k)\in A^k$, we write
$\A\models\varphi[\ov{a}]$ to indicate that the formula
$\varphi(\ov{x})$ is satisfied in $\A$ when interpreting the free
occurrences of the variables $x_1,\ldots,x_k$ with the elements
$a_1,\ldots,a_k$. 

We write $\FO(\sigma)$ to denote the class of all
first-order formulas of signature $\sigma$.
In this paper, we consider the extension of $\FO(\sigma)$ by \emph{modulo counting quantifiers},
defined as follows:
Let $p$ be a natural number with $p\geq 2$. 
A \emph{modulo $p$ counting quantifier} is of the form
$\existscc{i}{p}$ for some $i\in [p]$. A formula of the form $\existscc{i}{p}x\,\varphi(x,\ov{y})$ is satisfied by a finite $\sigma$-structure $\A$ and an interpretation $\ov{b}\in A^k$ of the variables $\ov{y}$ iff the number of elements $a\in A$ such that $\A\models\varphi[a,\ov{b}]$ is congruent $i$ modulo $p$. 

For a fixed natural number $p\geq 2$ 
we write $\FOMODp(\sigma)$ to denote the extension of $\FO(\sigma)$ by modulo $p$ counting quantifiers. That is, $\FOMODp(\sigma)$ is built from atomic formulas of the form $x_1{=}x_2$ and
$R(x_1,\ldots,x_{\ar(R)})$, for $R\in\sigma$ and variables
$x_1,x_2,\ldots,x_{\ar(R)}$, and closed under Boolean connectives
$\und$, $\oder$, $\nicht$, existential and universal first-order
quantifiers $\exists$, $\forall$, and modulo $p$ counting quantifiers
$\existscc{i}{p}$, for $i\in [p]$.
This logic has been studied in depth, see e.g.,
\cite{Straubing1994,HellaLibkinNurmonen1999,BIS90}.
Note that if $m$ is a multiple of $p$, then $\FOMOD{m}$ can express
modulo $p$ counting quantifiers, since \ $\existscc{i}{p}x\
\varphi(x,\ov{y})$ \ is
equivalent to \ $\Oder_{0\leq j < m/p}\existscc{jp+i}{m}x\ \varphi(x,\ov{y})$.

\bigskip

\myparagraph{Arb-invariant formulas}
We can extend the expressive power of a logic by allowing formulas to use,
apart from the relation symbols present in the signature $\sigma$, also 
a linear order $<$, arithmetic predicates such as $+$ or $\times$, or
arbitrary numerical predicates. We define an $r$-ary \emph{numerical predicate} $P^\NN$ as an $r$-ary relation on $\NN$
(i.e., $P^\NN\subseteq \NN^r$). 
Note that, compared to the definition of numerical predicates used, e.g., in \cite{Straubing1994}, this definition simplifies some details, while
it is essentially equivalent. That is, each $r$-ary numerical predicate according to the definition of \cite{Straubing1994}
can be represented by an $(r{+}1)$-ary numerical predicate according to our definition.
Two examples of numerical predicates are
the linear order 
$<^\NN$ consisting of all tuples $(a,b)\in\NN^2$ with $a<b$, and
the addition predicate $+^\NN$ consisting of all triples
$(a,b,c)\in\NN^3$ with $a+b=c$.

To allow formulas to use numerical predicates, we fix
the following notation: For every $r\in\NNpos$ and every $r$-ary
numerical predicate $P^\NN$, let $P$
be a new relation symbol of arity $r$ (``new'' meaning that $P$
does not belong to $\sigma$). 
We write $\eta_\ARB$ to denote the set of all the relation symbols $P$
obtained this way, and let $\sigma_\ARB:=\sigma\cup\eta_\ARB$
(the subscript ``{\ARB}'' stands for ``arbitrary numerical predicates'').

Next, we would like to allow $\FOMODp(\sigma_\ARB)$-formulas to make
meaningful statements about finite $\sigma$-structures. To this end,
for a finite $\sigma$-structure $\A$, 
we consider embeddings $\iota$ of the universe of $\A$
into the initial segment of $\NN$ of size $n=|A|$, i.e., the set 
$[n]=\set{0,\ldots,n{-}1}$.

\begin{definition}[Embedding]
 Let $\A$ be a finite $\sigma$-structure, and let $n:=|A|$.\\
 An \emph{embedding} $\iota$ of $\A$ is a bijection $\iota:A\to [n]$.
\end{definition}

Given a finite $\sigma$-structure $\A$ and an embedding $\iota$ of
$\A$,  we can translate $r$-ary numerical predicates
$P^\NN$ into $r$-ary predicates on $A$ as follows:
$P^\NN$ induces
the $r$-ary predicate $P^\iota$ on $A$, consisting of all $r$-tuples 
$\ov{a}=(a_1,\ldots,a_r)\in A^r$ where
$\iota(\ov{a})=(\iota(a_1),\ldots,\iota(a_r))\in P^\NN$. 
In particular, the linear order $<^\NN$ induces the linear order $<^\iota$ on $A$ where
for all $a,b\in A$ we have $a<^\iota b$ iff $\iota(a)<^{\NN} \iota(b)$.

The $\sigma_\ARB$-structure $\A^\iota$ associated with $\A$ and
$\iota$ is the expansion of $\A$ by the predicates $P^\iota$ for all
$P\in\eta_\ARB$. That is, $\A^\iota$ has the same universe as $\A$, all
relation symbols $R\in\sigma$ are interpreted in $\A^\iota$ in the
same way as in $\A$, and every numerical symbol $P\in\eta_\ARB$ is interpreted
by the relation $P^\iota$.

To ensure that an $\FOMODp(\sigma_\ARB)$-formula $\varphi$ makes a meaningful
statement about a $\sigma$-structure $\A$, we evaluate $\varphi$ in
$\A^\iota$, and we restrict attention to those
formulas whose truth value is independent of the particular choice of
the embedding $\iota$. This is formalised by the following notion.

\begin{definition}[Arb-invariance]
Let $\varphi(\ov{x})$ be
an $\FOMODp(\sigma_\ARB)$-formula with $k$ free variables, and
let $\A$ be a finite $\sigma$-structure.
The formula $\varphi(\ov{x})$ is \emph{arb-invariant on $\A$} if for
\emph{all} embeddings $\iota_1$ and $\iota_2$ of $\A$ and for all
tuples $\ov{a}\in A^k$ we have: \
$
   \A^{\iota_1} \models \varphi[\ov{a}] 
   \iff 
   \A^{\iota_2} \models \varphi[\ov{a}].
$

Let $\varphi(\ov{x})$ be arb-invariant on $\A$. 
We write
$\A\models\varphi[\ov{a}]$, if $\A^\iota\models\varphi[\ov{a}]$ for
some (and hence every) embedding $\iota$ of $\A$.
\end{definition}

\begin{definition}[$\arbinvFOMODp$]
An $\FOMODp(\sigma_\ARB)$-formula
$\varphi(\ov{x})$ is \emph{arb-invariant on a class $\Class$} of
finite $\sigma$-structures, if $\varphi(\ov{x})$ is
arb-invariant on every $\A\in\Class$.
We denote the set of all
$\FOMODp(\sigma_\ARB)$-formulas that are arb-invariant on $\Class$ by $\arbinvFOMODpC(\sigma)$.

A formula $\varphi(\ov{x})$ is called \emph{arb-invariant} if it is
arb-invariant on the class of all finite $\sigma$-structures.
We write $\arbinvFOMODp(\sigma)$ to denote the set of all
arb-invariant $\FOMODp(\sigma_\ARB)$-formulas.
\end{definition}

\begin{definition}[Order-invariance and $\ordinvFOMODp$]\ \\
An arb-invariant formula that only uses the numerical predicate $<^\NN$ is called \emph{order-invariant}.
\\
By $\ordinvFOMODp(\sigma)$ we denote the set of all arb-invariant $\FOMODp(\sigma\cup\set{<})$-formulas.
\end{definition}

Next, we present two examples of $\ordinvFOMODp(\sigma)$-sentences that were developed by Niemist\"o in \cite{Niemistoe-PhD} and that will be used later on in this paper
as examples for the locality and non-locality properties of $\arbinvFOMODp(\sigma)$-sentences.
\begin{example}[Niemist\"o (Proposition 6.22 in
  \cite{Niemistoe-PhD})]\label{example:niemistoe-evencycles}

Let $\sigma=\set{E}$ be the signature consisting of a binary relation symbol $E$. This example presents 
an $\ordinvFOMODnum{2}(\sigma)$-sentence $\varphi_{\textit{even
    cycles}}$ that is satisfied by exactly those finite
$\sigma$-structures $\A$ that are disjoint unions of directed cycles
where the number of cycles of even length is even. 

Clearly, a finite $\sigma$-structure $\A$ is a disjoint union of
directed cycles iff every element $a\in A$ has in-degree 1 and
out-degree 1 --- and this can easily be expressed by an
$\FO(\sigma)$-sentence $\varphi_{\textit{cycles}}$.

Each $\A$ with $\A\models\varphi_{\textit{cycles}}$ can be identified
with the permutation $\pi_\A$ of $A$ where, for every $a\in A$,
$\pi_\A(a)=b$ for the unique element $b\in A$ with $(a,b)\in E^\A$.  
Note that the cycles of $\A$ precisely correspond to the cycle
decomposition of the permutation $\pi_\A$.  

Now let $n:=|A|$, let $\iota$ be an arbitrary embedding of $\A$ into
$[n]$, and let $\pi_\A^\iota$ be defined via $\pi_\A^\iota(j):=
\iota(\pi_\A(\iota^{-1}(j)))$, for every $j\in [n]$.  
Clearly, $\pi_\A^\iota$ is a permutation of $[n]$, and the cycle
decomposition of $\pi_\A^\iota$ is obtained from the cycle
decomposition of $\pi_\A$ by replacing every $a\in A$ with the number
$\iota(a)$. 

It is a well-known fact concerning permutation groups
(see \cite{Rotman-GroupTheory})
that the cycle decomposition of $\pi_\A^\iota$ has an even number of
cycles of even length if, and only if, the number of
\emph{inversions}, i.e., pairs $(x,y)\in [n]^2$ with $x<y$ and
$\pi_\A^\iota(y)<\pi_\A^\iota(x)$, is even. 
The latter can easily be expressed in $\FOMOD{2}(\sigma\cup\set{<})$
by the formula 
\[
  \varphi_{\textit{inversions}} \ := \ \ 
  \existscc{0}{2}x\ \ \existscc{1}{2}y \ \ \big(\
    x\,{<}\,y \ \und \ \ \pi(y)\,{<}\,\pi(x)
  \ \big),
\]
where $\pi(y)\,{<}\,\pi(x)$ is an abbreviation for the formula
$\exists x'\exists y' \big( E(x,x')\und E(y,y') \und \
y'\,{<}\,x'\big)$. 

In summary, the formula \ $\varphi_{\textit{even
    cycles}}:=(\varphi_{\textit{cycles}}\und\varphi_{\textit{inversions}})$
\ is an $\ordinvFOMODnum{2}(\sigma)$-sentence that is satisfied by
exactly those finite $\sigma$-structures that are disjoint unions of
cycles where the number of cycles of even length is even. 
\qed
\end{example}

\begin{example}[Niemist\"o (Proposition 6.20 in
  \cite{Niemistoe-PhD})]\label{example:niemistoe-torus}
  Let $\sigma=\set{E_1,E_2}$ be the signature consisting of two binary
  relation symbols $E_1$ and $E_2$. Let $h \in \NN$. A \emph{torus} is
  a $\sigma$-structure $\T$ with universe $[h]\times [w]$, for some $w
  \geq 2$, and relations
  \begin{eqnarray*}
    E_1^{\T} & := & 
    \setc{\ \big((i,j),\,(i{+}1\mymod h,\,j)\big)\ }{\ i \in [h],\ \
      j\in [w]\ }
    \\[0.5ex]
    E_2^{\T} & := & \setc{\;\big((i,j),\,(i,j{+}1)\big)\;}{\;i\in
      [h],\ j\in [w{-}1]\;} \ \cup \\
    & &\setc{\ \big((i,w{-}1)\,,\,(i{+}k \mymod h,\, 0)\big)\ }{\;i\in [h]\;} \ , \  \text{ for some $k\in [h]$. }
  \end{eqnarray*}
  The number $w$ is the \emph{width} of $\T$. The set $[h] \times
  \set{j}$ is the \emph{$j$-th column} of $\T$, for each $j\in [w]$.
  The number $h$ is the \emph{height} of $\T$ which will be fixed
  throughout this example. The number $k$ is the \emph{twist} of $\T$
  which we also denote by $\twist(\T)$. See Figure~\ref{figure:torus-and-twistedtorus}
  for an illustration of two tori with different twist.
  \begin{figure}[ht]
    \centering\vspace{-6ex}
    \includegraphics{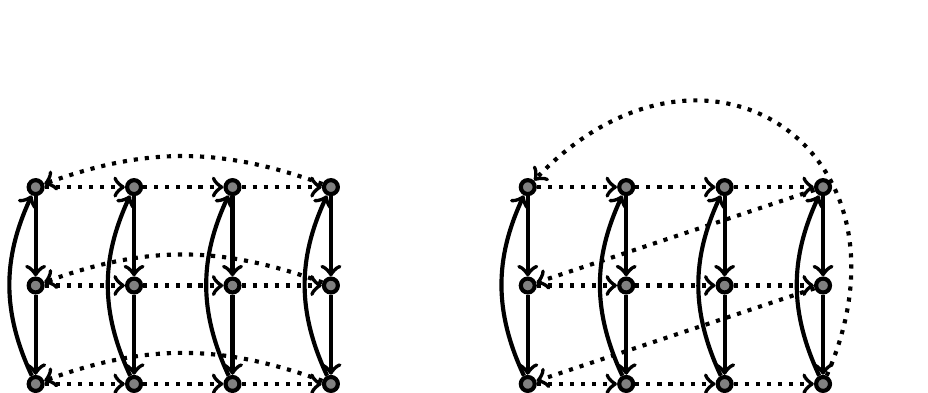}
    \caption{Tori of height $3$ and width $4$ with twist $0$ (left)
      and twist $1$ (right). The $E_1$- and $E_2$-edges are depicted by solid
      arcs and dotted arcs, respectively.  }
    \label{figure:torus-and-twistedtorus}
  \end{figure}
        
  We consider disjoint unions of tori of height $h$. The \emph{twist}
  of a disjoint union $\A$ of tori $\T_{1}, \ldots, \T_{\ell}$ is
  defined as $\twist(\A) := \twist(\T_{1}) + \dotsb +
  \twist(\T_{\ell}) \mymod h$. This example presents an
  $\ordinvFOMODnum{h}(\sigma)$-sentence $\varphi_{h}$ which defines
  the class of disjoint unions of tori of height $h$ and twist $0$ on
  the class of finite $\sigma$-structures.

  Using the Hanf-locality of $\FOMOD{h}$ (cf. \cite{HellaLibkinNurmonen1999}), it can be
  shown that plain $\FOMOD{h}$ cannot distinguish between tori with
  twist $0$ and tori with twist $1$ if their width is sufficiently
  large, depending on the sentence. Here, we show that
  $\ordinvFOMODnum{h}$ can distinguish between tori with twist $0$ and
  tori with twist $1$.  To this end, we use the order to choose a
  distinguished element from each column of a torus.
  \parpic[r]{\includegraphics{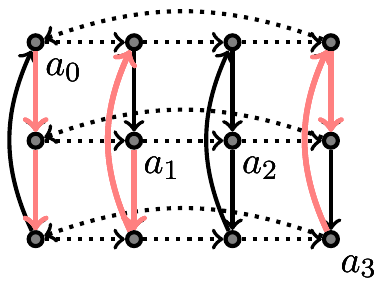}}

  A \emph{list of representatives} for a torus
  $\T$ of width $w$ and height $h$ is a tuple $R=(a_0, \ldots, a_{w-1})$ of nodes of
  $\T$ where, for each $j\in [w]$, \ $a_{j}$ belongs to the $j$-th
  column of $\T$.  The \emph{$j$-th turn path} with respect to $R$ is
  the directed $E_1^{\T}$-path from $a_{j}$ to the
  $E_2^{\T}$-successor of $a_{(j-1 \mymod w)}$. The length of the $j$-turn path w.r.t $R$ is called the \emph{$R$-turn of the $j$-th column of $\T$},
  in symbols: $\tturn_{j}(\T, R)$. Note that $\tturn_{j}(\T, R) \in [h]$.
  The \emph{$R$-turn of $\T$} is defined as $\tturn(\T,R) :=
  \tturn_{0}(\T,R) + \dotsb + \tturn_{w-1}(\T,R)$.  See the picture to
  the right for an illustration of a torus with a set of
  representatives $R$ with $R$-turn $0$; the turn paths are drawn thick and red.
  \begin{claim}
    $\tturn(\T,R) \equiv \twist(\T) \mymod h$, for every list $R$ of
    representatives for $\T$.
  \end{claim}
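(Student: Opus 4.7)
The plan is to argue by a direct telescoping calculation. First I would fix notation: write $a_j = (r_j, j)$ with $r_j \in [h]$ for each $j \in [w]$, so that the list of representatives $R$ is encoded by the tuple $(r_0, \ldots, r_{w-1})$ of ``heights''. The $E_1^\T$-edges inside column $j$ are exactly the arcs $(i,j)\to(i{+}1\mymod h,\,j)$, so the $E_1^\T$-distance from $(r_j,j)$ to any other node $(r,j)$ of the same column is the unique number in $[h]$ congruent to $r-r_j$ modulo $h$.

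Next I would read off $\tturn_j(\T,R)$ for each $j$. For $1 \leq j \leq w-1$, the $E_2^\T$-successor of $a_{j-1}$ is $(r_{j-1}, j)$, so the $j$-th turn path goes inside column $j$ from $(r_j,j)$ to $(r_{j-1},j)$, giving
\[
   \tturn_j(\T,R) \;\equiv\; r_{j-1} - r_j \pmod{h}.
\]
For $j=0$, the relevant predecessor index is $w-1$, and the $E_2^\T$-successor of $a_{w-1}=(r_{w-1},w-1)$ is $(r_{w-1}{+}k\mymod h,\,0)$, where $k=\twist(\T)$. Hence the $0$-th turn path runs inside column $0$ from $(r_0,0)$ to $(r_{w-1}{+}k\mymod h,\,0)$, so
\[
   \tturn_0(\T,R) \;\equiv\; r_{w-1} + k - r_0 \pmod{h}.
\]

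Finally I would sum these congruences. The contributions $r_{j-1}-r_j$ for $j=1,\ldots,w-1$ together with $r_{w-1}-r_0$ from the $j=0$ term form a telescoping sum that collapses to $0$ modulo $h$, leaving only the twist:
\[
   \tturn(\T,R) \;=\; \sum_{j=0}^{w-1}\tturn_j(\T,R) \;\equiv\; k \;=\; \twist(\T) \pmod{h}.
\]
There is no real obstacle here; the only small care required is the correct handling of the wrap-around index $j-1 \mymod w$, which is where the twist $k$ enters the calculation. Everything else is a one-line telescoping identity.
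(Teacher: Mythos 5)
Your proof is correct, but it takes a genuinely different route from the paper's. The paper argues by induction on the number of representatives $a_j$ that differ from the canonical choice $(0,j)$: it first verifies the claim for the base case $R = ((0,0),\ldots,(0,w{-}1))$, and then shows that swapping a single $a_j$ for $(0,j)$ changes $\tturn_j$ and $\tturn_{j+1 \mymod w}$ by compensating amounts $+p$ and $-p$ modulo $h$, so the total turn is invariant. Your approach instead parametrises $a_j = (r_j,j)$ and computes each $\tturn_j(\T,R)$ explicitly as a difference $r_{j-1} - r_j$ (with the twist $k$ entering only at the wrap-around index $j=0$), so that the total is a telescoping sum collapsing to $k$. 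Your version is shorter and more transparent — it exposes the telescoping structure directly and avoids both the choice of a base configuration and the case analysis in the induction step. The paper's induction is arguably closer in spirit to the ``local perturbation'' intuition, and does not require the reader to first establish the formula $\tturn_j(\T,R) \equiv r_{j-1} - r_j \pmod h$, but it is more verbose. Either way, the content is the same and both correctly isolate the twist $k$ as the only surviving contribution.
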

  \begin{proof}
    Let $k := \twist(\T)$.  First note that if
    $R=(a_0,\ldots,a_{w-1})$ is such that $a_j=(0,j)$, for every $j\in
    [w]$, then the following is true: $\tturn_{j}(\T, R)=0$ for every
    $j\in\set{1,\ldots,w{-}1}$ (as $a_j$ is the $E_2^\T$-successor of
    $a_{j-1}$), and $\tturn_{0}(\T,R)=k$ (as $a_0=(0,0)$ and the
    $E_2^\T$-successor of $(0,w{-}1)$ is $(k,0)$). Thus, $\tturn(\T,R)
    = k$.

    Now consider arbitrary lists $R=(a_0,\ldots,a_{w-1})$ of
    representatives.  By induction on the number $n_R$ of elements
    $a_{j}$ such that $a_{j}\neq (0,j)$, we show that \ $\tturn(\T,R)
    \equiv k \mymod h$.

    The induction base where $n_R=0$ has been shown already.  For the
    induction step, let $j$ be such that $a_j= (p,j)$ for some $p \in
    \set{1, \ldots, h{-}1}$.  Let $R'$ be obtained from $R$ by
    replacing $a_j$ with the element $(0,j)$.  By induction we know
    that $\tturn(\T,R')\equiv k\mymod h$. By a straightforward case
    distinction, one can verify that the following is true:
                            \ $\tturn_{j}(\T,R') \equiv \tturn_{j}(\T,R)+p \mymod h$ \ and
    \ $\tturn_{j{+}1\mymod w}(\T,R') \equiv \tturn_{j{+}1 \mymod
      w}(\T,R) -p \mymod h$. Furthermore,
    $\tturn_{j'}(\T,R')=\tturn_{j'}(\T,R)$ for all $j' \in [w]
    \setminus \set{\,j,\ j{+}1\mymod w\,}$.  Hence, $\tturn(\T,R)
    \equiv \tturn(\T,R') \mymod h$. This completes the proof of the
    claim.
  \end{proof}
  Now consider a disjoint union $\A$ of tori $\T_{1}, \ldots,
  \T_{\ell}$ of height $h$. From the previous claim, we obtain
  \begin{eqnarray*}
    \twist(\A) &\equiv &\twist(\T_{1}) + \dotsb + \twist(\T_{\ell}) \mymod h \\ &\equiv &\tturn(\T_{1},R_{1}) + \dotsb + \tturn(\T_{\ell},R_{\ell}) \mymod h,
  \end{eqnarray*}
  for all lists of representatives $R_{1}, \ldots, R_{\ell}$ of $\T_{1}, \ldots, \T_{\ell}$.
  
  We proceed with the construction of an $\FOMOD{h}(\sigma \union
  \set{<})$-sentence $\psi_{h}$ which, when evaluated in a disjoint
  union of tori $\T_{1}, \ldots, \T_{\ell}$ of height $h$ and width
  $w_{1}, \ldots, w_{\ell}$, computes $\tturn(\T_{1},R_{1}) + \dotsb +
  \tturn(\T_{\ell},R_{\ell}) \mymod h$, for lists of representatives
  $R_{1}, \ldots, R_{\ell}$ of $\T_{1}, \ldots, \T_{\ell}$ which
  depend on $<$. The sentence $\psi_{h}$ is satisfied iff this sum
  modulo $h$ is $0$, i.e., iff $\twist(\A) = 0$.  The sentence uses the
  order $<$ only to choose a particular list of representatives
  $R=(a_0,\ldots,a_{w-1})$ for each torus $\T$ of the disjoint union
  $\A$ by letting $a_j$ be the smallest element of the $j$-th
  column. Hence, $\psi_{h}$ is order-invariant on the class of all
  disjoint unions of tori of height $h$. To compute $\twist(\A)$,
  the formula uses a formula $\psi_{\textit{on-turn-path}}(x)$ which
  is satisfied by exactly those elements $b$ in
  the $j$-th column, for each  $j\in [w]$, of a torus $\T$ that are different from $a_j$ and
  which lie on the $j$-turn path w.r.t $R$. Then, $\twist(\T) = \tturn(\T,R)$ is exactly the number,
  modulo $h$, of all nodes $b$ of $\T$ for which
  $\T\models\psi_{\textit{on-turn-path}}[b]$. Consequently,
  $\twist(\A)$ is exactly the number, modulo $h$, of all nodes $b$ of
  $\A$ for which $\A\models\psi_{\textit{on-turn-path}}[b]$.  Thus, we
  can choose
  \[
  \psi_h \ := \ \ \existsc{h} x\ \psi_{\textit{on-turn-path}}(x).
        \]

  It remains to define the formula $\psi_{\textit{on-turn-path}}(x)$.
  Since the height $h$ is fixed, it is easy to see that there are
  $\FO(\sigma)$-formulas $\varphi_{\textit{col}}(x,y)$ and
  $\varphi_{\textit{between}}(x,y,z)$, and an $\FO(\sigma \union
  \set{<})$-formula $\varphi_{R}(x)$, such that for each torus $\T$
  and all elements $a,b,c\in T$ the following holds:
  \begin{itemize}
  \item $\T\models\varphi_{\textit{col}}[a,b]$ iff $a$ and $b$ belong
    to the same column of $\T$.
  \item $\T\models\varphi_{\textit{between}}[a,b,c]$ iff $a,b,c$
    belong to the same column, $b\neq a$, and $b$ lies on the directed
    $E_1^{\T}$-path of length at most $h{-}1$ from $a$ to $c$.
  \item $\T\models\varphi_{R}[a]$ iff $a$ is the smallest element with
    respect to $<$ in its column.
  \end{itemize}

\noindent
Now, we can choose
\begin{align*}
  \psi_{\textit{on-turn-path}}(x) \ := \quad \exists y\, \exists z\,
  \exists z' \; \big(\ \varphi_{R}(y)\ \land\ \varphi_{R}(z)\ \land\
  E_2(z,z')\ \land\ \varphi_{\textit{between}}(y,x,z') \ \big).
\end{align*}

\medskip

Finally, we want to construct the sentence $\varphi_{h}$ which is
order-invariant on the class of all finite $\sigma$-structures and
which defines the class of disjoint unions of tori with twist $0$.  It
is not difficult to see that a finite $\sigma$-structure $\A$ is
isomorphic to a disjoint union of tori iff it satisfies the following
three properties:
\begin{itemize}
\item The relations $E_{1}^{\A}$ and $E_{2}^{\A}$ are graphs of permutations $\pi_{1}$ and $\pi_{2}$ of $A$
  (i.e. disjoint unions of cycles).
\item The cycles of $E_{1}^{\A}$, which we call \emph{columns},
  all have length $h$.
\item The cycles of $E_{2}^{\A}$ all have length at least $2$.
\item The $\pi_{2}$-image of each column $\C$ is a column $\C'$ and the
  restriction of $\pi_{2}$ to $\C$ is an isomorphism of $\C$ and $\C'$.
\end{itemize}
Note that these properties can be expressed by an
$\FO(\sigma)$-sentence $\theta_h$. For the last property, this can be
done easily since the length $h$ of the columns is fixed. We let
$\varphi_{h} := \theta_{h} \land \psi_{h}$.  Since $\psi_{h}$ is
order-invariant on disjoint unions of tori and $\theta_{h}$ defines the
class of all disjoint unions of tori, $\varphi_{h}$ is
order-invariant on the class of all finite $\sigma$-structures.

\end{example}

\section{Locality of queries}\label{section:gaifman}

A \emph{$k$-ary query} $q$ is a mapping that associates with every
finite $\sigma$-structure $\A$ a $k$-ary relation $q(\A)\subseteq
A^k$, which is invariant under isomorphisms, i.e., if $\pi$ is an
isomorphism from a $\sigma$-structure $\A$ to a $\sigma$-structure $\B$, then for all
$\ov{a}=(a_1,\ldots,a_k)\in A^k$ we have $\ov{a}\in q(\A)$ iff
$\pi(\ov{a}) = (\pi(a_1),\ldots,\pi(a_k)) \in q(\B)$.
If $\Class$ is a class of finite $\sigma$-structures, then
every $\arbinvFOMODpC(\sigma)$-formula $\varphi(\ov{x})$ with $k$
free variables defines a $k$-ary query $q_\varphi$ on $\Class$ via
$q_\varphi(\A)=\setc{\ov{a}\in A^k}{\A\models \varphi[\ov{a}]}$, for
every $\sigma$-structure $\A\in\Class$. 

The \emph{Gaifman graph} of a $\sigma$-structure $\A$ is the undirected
graph $\G(\A)$ with vertex set $A$, where for any $a,b\in
A$ with $a\neq b$ there is an undirected edge between $a$ and $b$ iff
there is an $R\in \sigma$ and a tuple $(a_1,\ldots,a_{\ar(R)})\in R^\A$ such that
$a,b\in\set{a_1,\ldots,a_{\ar(R)}}$.
The \emph{distance} $\dist^\A(a,b)$ between two elements $a,b\in A$ is the length
of a shortest path between $a$ and $b$ in $\G(\A)$;
in case that there is no path between $a$ and $b$ in $\G(\A)$, we let $\dist^\A(a,b) := \infty$.
The distance $\dist^\A(b,\ov{a})$ between an element $b\in A$ and a
tuple $\ov{a}=(a_1,\ldots,a_k)\in A^k$ is the the minimum of
$\dist^\A(b,a_i)$ for all $i\in\set{1,\ldots,k}$.
For every $r\in\NN$, the \emph{$r$-ball} $N_r^\A(\ov{a})$ around a
tuple $\ov{a}\in A^k$
is the set of all elements $b$ with $\dist^\A(b,\ov{a})\leq r$.
The \emph{$r$-neighbourhood} of $\ov{a}$ is the induced substructure $\NS_r^\A(\ov{a})$
of $\A$ on $N_r^\A(\ov{a})$.

\subsection{Gaifman locality}\label{subsection:GaifmanLocality}

The notion of \emph{Gaifman locality} provides a standard tool for showing
that particular queries are not definable in certain logics (cf.,
e.g., the textbook \cite{Libkin2004} for an overview).
\begin{definition}[Gaifman locality]\label{def:gaifman-locality}
Let 
$\Class$ be a class of finite $\sigma$-structures,
$k\in\NNpos$ and $f:\NN\to\NN$.
A $k$-ary query $q$ is \emph{Gaifman $f(n)$-local
on $\Class$} 
if there is an $n_0\in\NN$ such that for every $n\in\NN$ with $n\geq n_0$
and every $\sigma$-structure 
$\A\in\Class$
with $|A|=n$, the following is true
for all $k$-tuples $\ov{a},\ov{b}\in A^k$ with 
$(\NS_{f(n)}^\A(\ov{a}),\ov{a}) \isom (\NS_{f(n)}^\A(\ov{b}),\ov{b})$: \quad
$\ov{a}\in q(\A) \iff \ov{b}\in q(\A)$.
\\
The query $q$ is \emph{Gaifman $f(n)$-local} if it is Gaifman
$f(n)$-local on the class of all finite $\sigma$-structures.
\end{definition}
That is, in a $\sigma$-structure of cardinality $n$, a query that is Gaifman $f(n)$-local
cannot distinguish between $k$-tuples of nodes whose neighbourhoods of radius $f(n)$ are isomorphic.
The function $f(n)$ is
called the \emph{locality radius} of the query.
As a simple example, consider the unary query
which maps an undirected graph to the set of all vertices
which are the central vertex of a star with an even number of rays. Here, a \emph{star} is a connected graph where all but the unique central vertex are rays, i.e.
vertices of degree one. This query is Gaifman $2$-local on the class of all undirected graphs.
As an example of a query which is not Gaifman $f(n)$-local for any sublinear function $f$, consider the binary reachability query which maps a graph to all pairs of vertices such that there exists a path from the first vertex to the second vertex.

It is well-known that queries definable in $\FO$ or $\FOMODp$ (for
any $p\geq 2$) are Gaifman local with a constant locality
radius \cite{HellaLibkinNurmonen1999}. The articles \cite{Grohe1998} and \cite{AMSS-SICOMP}
generalised this to order-invariant $\FO$ (for constant locality
radius) and arb-invariant $\FO$ (for polylogarithmic locality radius) in
the following sense:
Let $k\in\NNpos$, and let $q$ be a $k$-ary query.
If $q$ is definable in $\ordinvFO(\sigma)$, then there is a $c\in\NN$ such
that $q$ is Gaifman $c$-local.
If $q$ is definable in
$\arbinvFO(\sigma)$, then there is a  $c\in\NN$ such that $q$ is
Gaifman $(\log n)^c$-local. 
However, for every $d\in\NN$ there is a unary query $q_d$ that is
definable in $\arbinvFO(\set{E})$ and that is not Gaifman $(\log n)^d$-local.

Somewhat surprisingly, 
using Example~\ref{example:niemistoe-torus} one obtains that the
Gaifman locality result
cannot be generalised to order- or arb-invariant $\FOMODp$.
In fact, $\ordinvFOMODp$ can define queries that are 
not even Gaifman local with locality radius as big as
$(\frac{n}{h}{-}2)$, for the smallest prime divisor $h$ of $p$:
\begin{proposition}\label{prop:notGaifmanLocal}
 Let $h\in\NN$ with $h\geq 2$, and let $\sigma=\set{R,E_1,E_2}$ be a signature consisting of a unary 
 relation symbol $R$ and two binary relation symbols $E_1,E_2$. 
 There exists a unary query $q$ that is
 not Gaifman $(\frac{n}{h}{-}2)$-local, but
 definable in $\ordinvFOMODnum{p}(\sigma)$, for every multiple $p\geq 2$ of $h$.
\end{proposition}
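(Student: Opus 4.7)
My plan centres on Niemist\"o's sentence $\varphi_h\in\ordinvFOMODnum{h}(\{E_1,E_2\})$ from Example~\ref{example:niemistoe-torus}, which defines the class of disjoint unions of tori of height $h$ with twist $0$. Crucially, $\varphi_h$ globally distinguishes tori of different twists even though such tori are locally indistinguishable at small neighbourhood radii, and this is precisely the source of non-locality I want to propagate into a unary query. Since $\FOMOD{h}$-quantifiers embed into $\FOMOD{p}$ whenever $p$ is a multiple of $h$ (cf.\ the equivalence stated right after the definition of $\FOMODp$ in Section~\ref{section:preliminaries}), it is enough to construct the query in $\ordinvFOMODnum{h}(\sigma)$; the same formula then lies in $\ordinvFOMODnum{p}(\sigma)$ for every multiple $p\geq 2$ of $h$.

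I would define a formula $\varphi(x)\in\FOMOD{h}(\sigma\cup\{<\})$ by combining $\varphi_h$ with a condition involving the unary predicate $R$; order-invariance on all finite $\sigma$-structures follows from the order-invariance of $\varphi_h$ together with the trivial order-invariance of atomic formulas using $R$. The induced unary query $q$ thus depends on both the global twist of $\A$ (via $\varphi_h$) and local information about $x$ (via $R$).

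To witness non-Gaifman-$(n/h-2)$-locality, I plan to construct, for arbitrarily large $n$, a $\sigma$-structure $\A_n$ of size $n$ that is a disjoint union of tori of height $h$ suitably decorated by $R$, together with two elements $a,b\in A_n$ whose $(n/h-2)$-neighbourhoods in $\A_n$ are isomorphic as pointed $\sigma$-structures but for which $a\in q(\A_n)$ and $b\notin q(\A_n)$. The elements $a$ and $b$ lie in different tori of $\A_n$ that are globally distinguished by $\varphi_h$ (because their individual twists differ) but indistinguishable locally at radius $n/h-2$ (the twist being invisible in a sufficiently small neighbourhood, as already observed in Example~\ref{example:niemistoe-torus}).

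The main technical obstacle is the combinatorial choice of the tori's widths, their individual twists, and the placement of $R$ inside $\A_n$: the radius $n/h-2$ must fall within the ``twist-invisible'' range, and the $R$-marking must agree locally around $a$ and $b$ while remaining globally informative enough to make $q$ separate the two elements. This balancing is the heart of the proof and draws directly on the torus construction of Example~\ref{example:niemistoe-torus}.
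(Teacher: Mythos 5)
There is a genuine gap in your plan, and it comes in two parts.

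First, your intended choice of $\A_n$ as a disjoint union of full tori of height $h$ cannot produce two elements whose $(\frac{n}{h}-2)$-neighbourhoods are isomorphic and proper. In a torus of height $h$ and width $w$, the Gaifman graph makes columns form a cycle of length $w$, so the ball of radius $r$ around any element covers the whole torus as soon as $2r+1\ge w$. Since $n = h \cdot (\text{total width})$, a single torus of width $w$ contributes $hw$ elements, and one always has $2(\frac{n}{h}-2)+1 \ge w$ unless $w$ is essentially the full width of one huge torus; in every case the radius-$(\frac{n}{h}-2)$ ball around a point swallows its whole torus. So the twist is visible in these neighbourhoods, contradicting what you need. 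The paper escapes this by cutting the torus open: it deletes the $E_2$-edges from the last to the first column, turning the torus into a path-like ``hose.'' That structure is locally a cylinder, and an element near one end has a small neighbourhood that does not reach the other end.

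Second, and more fundamentally, there is a logical obstruction to the way you want $q$ to separate $a$ and $b$. You propose $q$ to be a conjunction of the global sentence $\varphi_h$ with a condition involving $R$, and you put $a$ and $b$ in different tori. But $\varphi_h$ is a sentence, so it contributes the same truth value for both $a$ and $b$; all the separating power must then come from the $R$-related local condition. Yet you simultaneously require the $(\frac{n}{h}-2)$-neighbourhoods of $a$ and $b$ (which include the $R$-labelling) to be isomorphic, so a condition that only looks at $R$ locally cannot tell them apart either. The missing idea in the paper's proof is that the free variable $x$ is used to syntactically \emph{rewrite} the $E_2$-atoms of $\varphi_h$, inserting a ``virtual'' edge from the $R$-marked column back to the column of $x$. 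Thus the formula $\psi(x)$ effectively evaluates $\varphi_h$ on a \emph{virtual torus} that is closed at a position determined by $x$; shifting $x$ from $a=(0,0)$ to $b=(1,0)$ changes the twist of this virtual torus from $0$ to $1$. That is what makes $a$ and $b$, at distance $1$ from each other in the same hose (not in different tori), receive different answers from $\psi$, while their $(w-2)$-neighbourhoods in the hose agree because they do not reach the marked last column. Without this relativization trick, the balancing you describe as ``the heart of the proof'' has no mechanism to succeed.
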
 
\begin{proof}
 Recall the $\set{E_1,E_2}$-structures from Example~\ref{example:niemistoe-torus} which we called tori.
 From a torus $\T$ of height $h$ and width $w$, we obtain a $\sigma$-structure
 $\HS$ by deleting all $E_{2}$-edges from the last to the first column
 and marking the least element of the last column with a unary relation.
 We call the structure $\HS$ a \emph{hose}.
 More precisely, the universe of $\HS$ is $H:=T=[h]\times [w]$
 and the relations of $\HS$ are
 \begin{eqnarray*}
   &E_{1}^{\HS} &:= \ E_{1}^{\T},\\
   &E_{2}^{\HS} &:= \ E_{2}^{\T} \setminus
\setc{\ \big((i,w{-}1)\,,\,(i{+}\twist(\T) \mymod h,\, 0)\big)\ }{\;i\in [h]\;}, \\
&R^{\HS} &:= \ \set{\, (0,w{-}1)\, }.
\end{eqnarray*}

\begin{figure}[t]    
  \centering\vspace{-6ex}
  \includegraphics{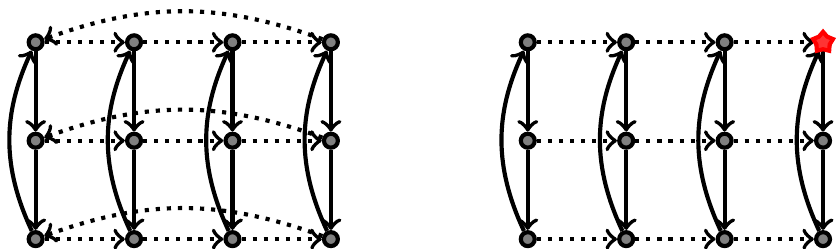}
  \caption{A torus $\T$ (left) and the hose $\HS$ (right) obtained from $\T$. The unique node in the relation $R^{\HS}$ of the hose is the rightmost node in the top row, depicted by a red star.}
\label{figure:torus-and-hose}
\end{figure}

 From Example~\ref{example:niemistoe-torus} we obtain an $\ordinvFOMODnum{h}(\set{E_1,E_2,<})$-sentence
 $\varphi_{h\textit{-torus}}$ which defines the class of all tori $\T$ of height $h$ and $\twist(\T)=0$
 on the class of all finite $\set{E_{1},E_{2}}$-structures.
 We modify $\varphi_{h\textit{-torus}}$ in such a way that we obtain
 an $\ordinvFOMODnum{h}(\sigma \cup \set{<})$-formula 
 $\psi(x)$ which, when evaluated in a hose $\HS$ with $x$ interpreted as the element $a:=(0,0)$
 or $b:=(1,0)$, 
 simulates $\varphi_{h\textit{-torus}}$ evaluated on a torus of height $h$ with twist $0$ or twist $1$, respectively.
 To this end, we let $\psi(x)$ state that each of the following is satisfied:
 \begin{itemize}
  \item 
    There is a unique element $y_0$ satisfying $R(y_0)$,
  \item
    there are elements $y_1,\ldots,y_{h-1}$ such that $E_1(y_i,y_{i+1
      \mymod h})$ is true
    for all  $i\in [h]$, 
  \item
    there are elements $x_0,\ldots,x_{h-1}$ such that $x_0{=}x$ and
    $E_1(x_i,x_{i+1 \mymod h})$ is
    true for all $i\in [h]$, 
  \item 
    the formula $\varphi'$ is satisfied, where $\varphi'$ is obtained from 
    $\varphi_{h\textit{-torus}}$ by replacing every atom of 
    the form $E_2(u,v)$ by the formula
    $
      \big( \, E_2(u,v) \ \oder \ \Oder_{0\leq i<h} \big( u{=}y_i \ \und \ v{=}x_i \big) \, \big).
    $
 \end{itemize}

 \noindent
 Clearly, $\HS\models \psi[a]$ (since
 $\T\models\varphi_{h\text{-torus}}$ if $\twist(\T)=0$), 
 and $\HS\not\models\psi[b]$ (since $\T\not\models\varphi_{h\text{-torus}}$ if $\twist(\T) = 1$). 
 Thus, $a\in q_\psi(\HS)$ and $b\not\in q_\psi(\HS)$.
 Note that the $(w{-}2)$-neighbourhoods of $a$ and $b$ in the
 hose $\HS$ are isomorphic, i.e.,  
 $(\NS^{\HS}_{w-2}(a),a)\isom (\NS^{\HS}_{w-2}(b),b)$. 
 The cardinality of $\HS$ is $n:=h w$, and hence $w{-}2=\frac{n}{h}{-}2$.
 Thus, the query defined by $\psi(x)$ is not
 Gaifman $(\frac{n}{h}{-}2)$-local.

 By Example~\ref{example:niemistoe-torus}, $\varphi_{h\textit{-torus}}$ is order-invariant on the 
 class of all finite $\set{E_1,E_2}$-structures. Therefore,
 the formula $\psi(x)$ is order-invariant on the class of all finite 
 $\sigma$-structures.
 Furthermore, $\varphi_{h\textit{-torus}}$ uses only modulo counting quantifiers with modulus $h$
 and the construction of $\psi(x)$ adds no new modulo counting quantifiers.
 As we observed in Section~\ref{section:preliminaries}, if $p$ is a multiple of $h$, each modulo counting quantifier with modulus $h$
 can also expressed using quantifiers with modulus $p$.
\end{proof}

\subsection{Weak Gaifman locality}\label{subsection:WeakGaifmanLocality}

\emph{Weak Gaifman locality} (cf., \cite{Libkin2004}) is a relaxed notion of Gaifman locality where ``$\ov{a}\in q(\A)\iff
\ov{b}\in q(\A)$'' needs to be true only for those tuples $\ov{a}$ and $\ov{b}$ whose $f(n)$-neighbourhoods
 are disjoint.

\begin{definition}[Weak Gaifman locality]\label{def:weak-gaifman-locality}
Let 
$\Class$ be a class of finite $\sigma$-structures,
$k\in\NNpos$ and $f:\NN\to\NN$.
A $k$-ary query $q$ is \emph{weakly Gaifman $f(n)$-local on $\Class$} if
there is an $n_0\in\NN$ such that for every $n\in\NN$ with $n\geq n_0$
and every $\sigma$-structure $\A\in\Class$ with $|A|=n$, the following is true
for all $k$-tuples $\ov{a},\ov{b}\in A^k$ with 
\ $(\NS_{f(n)}^\A(\ov{a}),\ov{a}) \isom (\NS_{f(n)}^\A(\ov{b}),\ov{b})$
\ and \
$N_{f(n)}^\A(\ov{a})\cap 
N_{f(n)}^\A(\ov{b})=\emptyset$: \quad
$\ov{a}\in q(\A) \iff \ov{b}\in q(\A)$.
\ The query $q$ is \emph{weakly Gaifman $f(n)$-local} if it is
weakly Gaifman $f(n)$-local on the class of all finite $\sigma$-structures.
\end{definition}

Note that the example presented in the
proof of Proposition~\ref{prop:notGaifmanLocal} does not provide a
counter-example to \emph{weak} Gaifman locality, since the elements
$a$ and $b$ considered in the proof of
Proposition~\ref{prop:notGaifmanLocal} are of distance 1, and thus
their $f(n)$-neighbourhoods are not disjoint. 
However, using Example~\ref{example:niemistoe-evencycles}, one
obtains a counter-example to \emph{weak} Gaifman locality for
$\ordinvFOMODnum{p}$ for \emph{even} numbers $p$; see Proposition 6.23 in \cite{Niemistoe-PhD}. 
Here, we present a refinement of Niemist\"o's proof which
provides a counter-example to weak Gaifman locality already
for the restricted case of string structures.

\begin{proposition}\label{prop:notWeaklyGaifmanLocal}
 Let $\Sigma:=\set{0,1}$, and let
 $\sigma_\Sigma=\set{E,P_0,P_1}$ be the signature used for representing strings
 over $\Sigma$.
 There exists a unary query $q$ that is
 not weakly Gaifman $(\frac{n}{4}{-}1)$-local on $\ClassStrings$, but
 definable in $\ordinvFOMODnum{p}(\sigma_\Sigma)$, for every even number
 $p\geq 2$.
\end{proposition}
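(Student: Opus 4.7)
The plan is to refine Niemist\"o's even-cycles example (Example~\ref{example:niemistoe-evencycles}) to binary string structures, following the same strategy as in the proof of Proposition~\ref{prop:notGaifmanLocal} but built around Niemist\"o's second example instead of the torus one.

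For the witness, I would take for each odd $k\geq 3$ the string $w_k:=1\,0^{k-1}\,1\,0^{k}$ of length $n:=2k+1$, together with the two positions $a:=r+1$ and $b:=k+r+1$, where $r:=\lfloor n/4\rfloor-1$. The string splits naturally into two ``blocks'' delimited by its two $P_1$-positions: a first block of length $k$ (odd) and a second block of length $k+1$ (even). A direct inspection shows that the $r$-neighbourhoods $\NS^{\S_{w_k}}_r(a)$ and $\NS^{\S_{w_k}}_r(b)$, with their respective centres, are both isomorphic to the pointed labelled directed path $(1\,0^{2r},\,r+1)$, since for $a$ the $r$-ball is $\{1,\ldots,2r+1\}$ (with the $P_1$ at position $1$) and for $b$ the $r$-ball is $\{k+1,\ldots,k+2r+1\}$ (with the $P_1$ at position $k+1$); moreover, the two $r$-balls are disjoint because $|b-a|=k>2r$ (which follows from $r\leq(2k-3)/4<k/2$).

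The technical core is the construction of an $\ordinvFOMODnum{2}(\sigma_\Sigma)$-formula $\psi(x)$ that on $\S_{w_k}$ is true exactly when $x$ lies in the even-length (second) block. The idea, analogous to that of the proof of Proposition~\ref{prop:notGaifmanLocal}, is to imagine virtual wrap-around edges inside each block, turning the two blocks into two disjoint cycles. On $\S_{w_k}$ these wrap-arounds are identifiable from local information alone: a block's first position is a $P_1$-element, and its last position is either a $P_0$-element whose $E$-successor is $P_1$-labelled or a $P_0$-element without $E$-successor. Introducing a virtual cyclic successor relation $E'$ that coincides with $E$ inside each block and adds the wrap-arounds, one can apply Niemist\"o's inversion-counting formula from Example~\ref{example:niemistoe-evencycles} to the derived $E'$-structure, suitably parametrised by $x$ so that the detected parity is that of the length of the $E'$-orbit containing $x$. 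As in Example~\ref{example:niemistoe-evencycles}, order-invariance of $\psi$ on \emph{all} $\sigma_\Sigma$-structures is inherited from the fact that the sign of a permutation is independent of the linear order used to count its inversions. Given $\psi$, one has $q_\psi(a)=$ false and $q_\psi(b)=$ true, so as $k$ ranges over odd integers $\geq 3$ the family $(\S_{w_k},a,b)$ witnesses that $q_\psi$ is not weakly Gaifman $(n/4{-}1)$-local on $\ClassStrings$. Finally, the passage from $p=2$ to every even $p\geq 2$ is immediate, since $\FOMOD{2}$-counting is expressible in $\FOMOD{p}$ whenever $2\mid p$, as noted in Section~\ref{section:preliminaries}.

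The main obstacle I expect is the construction of $\psi(x)$: producing it with the promised behaviour on the $\S_{w_k}$ while keeping it order-invariant on \emph{all} finite $\sigma_\Sigma$-structures. The difficulty is that ``the block of $x$'' is most naturally defined by a transitive closure of $E$ restricted to $P_0$-positions, which is not FO-expressible; one must therefore thread Niemist\"o's permutation-theoretic inversion trick through purely local definitions of block endpoints so that the resulting parity remains a well-defined invariant of the virtual cycle structure, rather than relying on a reachability-based identification of the $E'$-orbit of $x$.
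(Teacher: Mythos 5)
There is a genuine gap, and you have in fact put your finger on it yourself in the final paragraph without resolving it. Your plan is for $\psi(x)$ to express ``$x$ lies in the even-length block'', by building a fixed virtual permutation $E'$ (each block wrapped into a cycle, independently of $x$) and then asking whether the $E'$-orbit of $x$ has even length. But the latter is a connectivity/orbit-membership query; it cannot be expressed in $\FOMOD{2}$, and Niemist\"o's $\varphi_{\textit{even cycles}}$ does not provide it --- that formula is a \emph{sentence} which decides a global parity of the permutation, not a per-element orbit property. Saying one must ``thread the inversion trick through local endpoint definitions'' names the problem but does not produce a formula.

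The paper's proof avoids the obstacle by a different mechanism. It does not hold the virtual permutation fixed and classify $x$; rather, it uses $x$ as a parameter that \emph{changes} which virtual edges get added, and then evaluates $\varphi_{\textit{even cycles}}$ as a sentence on the resulting virtual graph. Concretely, on the string $1^{\ell}0^{\ell}1^{\ell}0^{\ell}$, with quantifiers relativised to $1$-positions, one adds two virtual edges $(x,y)$ and $(x',y')$, where $y$ and $y'$ are the two locally definable ``starts of $1$-blocks'' and $x'$ is the (unique) ``end of a $1$-block'' other than $x$. When $x$ is the end of the first block, the virtual graph splits into two cycles of length $\ell$ (an even number of even cycles); when $x$ is the end of the second block, the two added edges chain both blocks into a single cycle of length $2\ell$ (an odd number of even cycles). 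Thus the truth value of $\varphi_{\textit{even cycles}}$ on the virtual graph depends on $x$, but $\varphi_{\textit{even cycles}}$ itself never needs to identify $x$'s orbit. This ``parametrised rewiring'' idea is what you are missing, and without it the construction of $\psi(x)$ cannot go through. (Your choice of witness strings and positions would also need adjusting --- the paper's symmetric $1^{\ell}0^{\ell}1^{\ell}0^{\ell}$ makes the two virtual graphs differ purely by how the same pieces are wired, whereas your two blocks have different lengths, which complicates the isomorphism and the cycle-count analysis --- but the essential gap is the missing rewiring trick.)
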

\begin{proof}
For every $\ell\in\NNpos$, let $\A_{\ell}$ and $\B_{\ell}$ be $\set{E}$-structures
whose universe consists of $2\ell$ vertices,
the edge relation of $\A_\ell$ consists of two directed cycles 
of length $\ell$, and the
edge relation of $\B_\ell$ consists of a single directed cycle
of length $2\ell$.
Furthermore, we choose $w_\ell$ to be the string
\ $ 1^{\ell}\,0^{\ell}\,1^{\ell}\,0^{\ell}$, \
and we let $a_\ell:=\ell$ be the rightmost position of the first block of
$1$s, and $b_\ell:=3\ell$ the rightmost position of the second
block of $1$s.

From Example~\ref{example:niemistoe-evencycles} we obtain an
$\ordinvFOMODnum{2}(\set{E})$-sentence
$\varphi_{\textit{even cycles}}$ that is satisfied by a finite
$\set{E}$-structure $\A$ iff $\A$ is a disjoint union of directed
cycles where the number of cycles of even length is even. Thus, for
every $\ell\in\NNpos$ we have: \ $\A_\ell\models\varphi_{\textit{even
    cycles}}$ and $\B_\ell\not\models\varphi_{\textit{even cycles}}$.
We modify the formula $\varphi_{\textit{even cycles}}$ in such a way
that we obtain an $\ordinvFOMODnum{2}(\sigma_\Sigma)$-formula $\psi(x)$
which, when evaluated in 
the $\sigma_\Sigma$-structure $\S_{w_\ell}$ representing the string $w_\ell$
with $x$ interpreted as
the position $a_\ell$ or $b_\ell$ simulates $\varphi_{\textit{even cycles}}$ evaluated on $\A_\ell$ or $\B_\ell$, respectively.
To this end, we let $\psi(x)$ be a formula stating that each of the
following is satisfied:
\begin{figure}[t]
  \centering
  \includegraphics{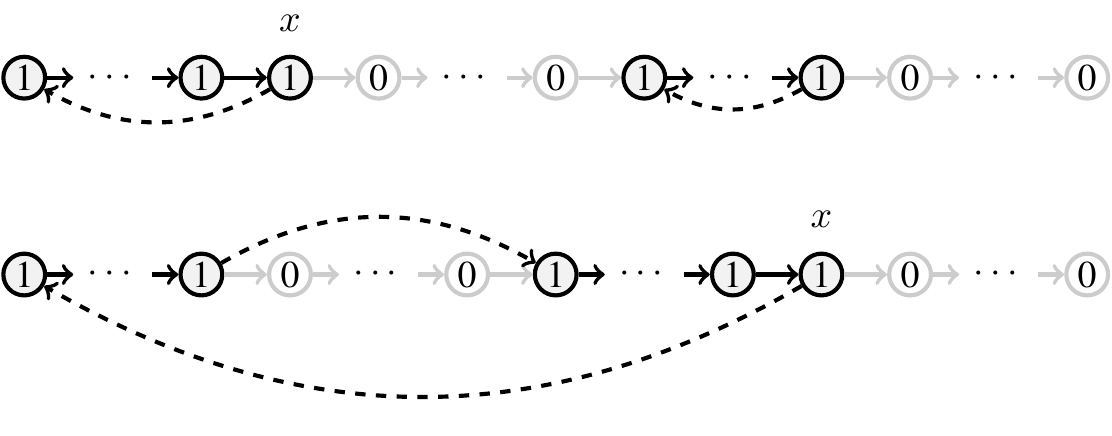}
  \caption{A string structure $\S_{w_{\ell}}$ with $x=a_{\ell}$ (top) and $x=b_{\ell}$ (bottom). The formula $\psi(x)$ simulates $\varphi_{\textit{even cycles}}$ on the structure where the dashed edges are added and the light nodes and edges are removed. }
\end{figure}

\begin{itemize}
\item 
  There is a unique position $x'\neq x$ that carries the letter 1 such that the
  position directly to the right of $x'$ carries the letter 0.
 \item 
  There is a unique position $y$ of in-degree $0$, and this position
  carries the letter 1.
  Furthermore, there is a unique position $y'$ that carries the letter 1, such
  that the position directly to the left of $y'$ carries the letter 0.
 \item The formula $\varphi'$ is satisfied, where $\varphi'$ is
   obtained from $\varphi_{\textit{even cycles}}$ by
   relativisation of all quantifiers to positions
   that carry the letter $1$, and by
   replacing every
   atom of the form $E(u,v)$ by the formula 
   \ $\big( E(u,v) \oder 
      \big( u{=}x  \und v{=}y  \big) \oder 
      \big( u{=}x' \und v{=}y' \big)\big)$.
\end{itemize}

\noindent
Clearly, for every $\ell\in\NNpos$ we have: \ $\S_{w_\ell}\models\psi[a_\ell]$ 
(since $\A_\ell\models\varphi_{\textit{even cycles}}$), and
$\S_{w_\ell}\not\models\psi[b_\ell]$ (since
$\B_\ell\not\models\varphi_{\textit{even cycles}}$).
Thus,
$a_\ell\in q_{\psi}(\S_{w_{\ell}})$ and $b_\ell\not\in q_{\psi}(\S_{w_{\ell}})$.
Note that the $(\ell{-}1)$-neighbourhoods of $a_\ell$ and $b_\ell$ in $\S_{w_{\ell}}$ are
disjoint and isomorphic. The cardinality of $\S_{w_{\ell}}$ is $n:=4\ell$, and
hence $\ell{-}1=\frac{n}{4}{-}1$. Thus, the query
defined by $\psi(x)$ is not weakly Gaifman $(\frac{n}{4}{-}1)$-local.
Since $\varphi_{\textit{even cycles}}$ is order-invariant on all
finite $\set{E}$-structures, 
the formula $\psi(x)$ is order-invariant on the class of all finite
$\sigma_\Sigma$-structures.
Note that $\psi(x)$ is definable in $\ordinvFOMODnum{2}(\sigma_\Sigma)$, and
hence also in $\ordinvFOMODnum{p}(\sigma_\Sigma)$, for every multiple $p$ of
$2$.
\end{proof}

In light of Proposition~\ref{prop:notWeaklyGaifmanLocal} it is 
somewhat surprising that for \emph{odd} numbers $p$, unary queries
definable in $\ordinvFOMODp$
are weakly Gaifman local with constant locality radius --- this is a result 
obtained by Niemist\"o
(see Corollary~6.37 in \cite{Niemistoe-PhD}).
For \emph{odd prime powers $p$} we can generalise this to $k$-ary
queries definable in 
$\arbinvFOMODp$, when allowing polylogarithmic locality radius.
Note that we cannot hope for a smaller locality radius, since
\cite{AMSS-SICOMP} provides, for every $d\in\NN$, a unary query
definable in $\arbinvFO(\set{E})$ that is not weakly Gaifman $(\log n)^d$-local.
Precisely, we will show the following.
\begin{theorem}\label{thm:weakGaifmanLocality}
Let $\Class$ be a class of finite $\sigma$-structures.
Let $k\in\NNpos$, let $q$ be a $k$-ary query, and let $p$ be an odd
prime power. If $q$ is definable
in $\arbinvFOMODpC(\sigma)$ on $\Class$, then there is a $c\in\NN$ such that $q$ is
weakly Gaifman $(\log n)^c$-local on $\Class$.
\end{theorem}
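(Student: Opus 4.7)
The plan is to reduce Theorem~\ref{thm:weakGaifmanLocality} to the shift-locality result (Theorem~\ref{thm:CyclicShiftLocality}), which is already stated for all prime powers $p$ and which is itself established via Smolensky's circuit lower bound. Since shift-locality has polylogarithmic radius, weak Gaifman locality will inherit the same radius.

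Let $\varphi(\ov{x})\in\arbinvFOMODpC(\sigma)$ define $q$ on $\Class$, and let $c_0$ be the exponent given by Theorem~\ref{thm:CyclicShiftLocality}. Pick $c$ sufficiently large (depending on $c_0$ and $k$), set $r:=(\log n)^c$, and consider $\A\in\Class$ with $|A|=n$ together with $\ov{a},\ov{b}\in A^k$ satisfying $(\NS_r^\A(\ov{a}),\ov{a})\isom(\NS_r^\A(\ov{b}),\ov{b})$ via some isomorphism $\pi$ and $N_r^\A(\ov{a})\intersect N_r^\A(\ov{b})=\emptyset$. I would construct an embedding $\iota\colon A\to [n]$ that realises $\pi$ as a cyclic shift: for a suitably chosen $s$, set $\iota(\pi(x))=(\iota(x)+s)\mymod n$ for every $x\in N_r^\A(\ov{a})$, placing $\iota(N_r^\A(\ov{a}))$ and $\iota(N_r^\A(\ov{b}))$ into aligned disjoint intervals of $[n]$ and extending $\iota$ arbitrarily to the rest of $A$. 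Applying Theorem~\ref{thm:CyclicShiftLocality} to $\iota$ then transfers the truth of $\varphi[\ov{a}]$ in $\A^\iota$ to the truth of $\varphi[\ov{b}]$ in $\A^\iota$, and arb-invariance of $\varphi$ yields $\A\models\varphi[\ov{a}]\iff\A\models\varphi[\ov{b}]$.

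The main obstacle is the point at which the hypothesis that $p$ is an \emph{odd} prime power becomes essential, since Theorem~\ref{thm:CyclicShiftLocality} itself holds for all prime powers. The expected culprit lies in the construction of $\iota$: aligning the two disjoint neighbourhoods via a cyclic shift of $[n]$ forces a specific permutation pattern on $[n]$ whose sign depends on the size of the neighbourhoods. For odd prime-power $p$, the $\FOMODp$-quantifiers cannot detect this parity, whereas for even $p$ they can, as witnessed by Niemist\"o's even-cycles formula (Example~\ref{example:niemistoe-evencycles}) and by the counter-example of Proposition~\ref{prop:notWeaklyGaifmanLocal}. A secondary technical issue is that shift-locality likely permits only shifts of polylogarithmic magnitude, so one may need to iterate many small shifts to move $\ov{a}$ into $\ov{b}$'s position, carefully composing the partial shifts and tracking cumulative parities; this is feasible because the two $r$-balls together with their intended shifted copies fit inside $[n]$ whenever $n$ is sufficiently large relative to $r$, which is guaranteed by disjointness of the neighbourhoods.
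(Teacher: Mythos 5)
Your high-level plan---reduce weak Gaifman locality to shift locality (Theorem~\ref{thm:CyclicShiftLocality})---is the same as the paper's, but your execution rests on a misreading of what ``shift locality'' means. In Definition~\ref{def:CyclicShiftLocality}, shift locality concerns cyclically permuting the $t$ blocks of a $kt$-tuple $(\ov{a}_0,\ldots,\ov{a}_{t-1})$ fed to the query; it has nothing to do with cyclic shifts of the embedding $\iota:A\to[n]$. Your central step---constructing an $\iota$ that ``realises $\pi$ as a cyclic shift'' and then ``applying Theorem~\ref{thm:CyclicShiftLocality} to $\iota$''---is therefore a category error: that theorem is a statement about queries on tuples, it takes no embedding as input, and by arb-invariance the choice of $\iota$ cannot distinguish $\ov{a}$ from $\ov{b}$ anyway. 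Your claim that Theorem~\ref{thm:CyclicShiftLocality} ``holds for all prime powers'' also overlooks its coprimality hypothesis: it requires $p$ and $t$ to be coprime, and it is exactly this requirement, with $t=2$ forced by the reduction, that makes ``odd prime power'' necessary. The parity-of-sign speculation about the embedding and the worry about ``shifts of polylogarithmic magnitude'' do not correspond to anything in the actual definition or the actual argument.

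The correct reduction is a one-line padding trick, and you missed it. Given $\varphi(\ov{x})$ defining the $k$-ary query $q_\varphi$, take fresh variables $\ov{y}=(y_1,\ldots,y_k)$ and set $\psi(\ov{x},\ov{y}):=\varphi(\ov{x})\wedge\bigwedge_{1\le i\le k} y_i{=}y_i$, a $2k$-ary $\arbinvFOMODpC(\sigma)$-formula. Apply Theorem~\ref{thm:CyclicShiftLocality} with $t=2$ (using that $p$ is odd, hence coprime to $2$) to obtain a $c$ such that $q_\psi$ is shift $(\log n)^c$-local w.r.t.\ $2$ on $\Class$. Unwinding the definition: whenever $\ov{a},\ov{b}\in A^k$ have disjoint isomorphic $(\log n)^c$-neighbourhoods, $(\ov{a},\ov{b})\in q_\psi(\A)\iff(\ov{b},\ov{a})\in q_\psi(\A)$, which, since $\psi$ ignores its second block, is precisely $\ov{a}\in q_\varphi(\A)\iff\ov{b}\in q_\varphi(\A)$. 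That is weak Gaifman $(\log n)^c$-locality of $q_\varphi$. No embedding construction is needed, and none would help.
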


The proof of this theorem will be given in the next
subsection, as an easy consequence of
Theorem~\ref{thm:CyclicShiftLocality} below.
A generalisation of Theorem~\ref{thm:weakGaifmanLocality} from odd
prime powers to arbitrary odd numbers $p$ would lead to new
separations concerning circuit complexity classes and can therefore be
expected to be rather difficult (see Remark~\ref{remark:circuitcomplexity}).

\subsection{Shift locality}\label{section:ShiftLocality}
\mbox{}\\
The following notion of \emph{shift locality} 
generalises the notion of \emph{alternating Gaifman locality}
introduced by Niemist\"o in \cite{Niemistoe-PhD}.
In some sense discussed below, it unifies this notion
and the notion of weak Gaifman locality.
\begin{definition}[Shift locality]\label{def:CyclicShiftLocality}
Let $\Class$ be a class of finite $\sigma$-structures.
Let $k,t\in\NNpos$ with $t\geq 2$, and let $f:\NN\to\NN$. 
A $kt$-ary query $q$ is \emph{shift $f(n)$-local w.r.t.\ $t$ on $\Class$} if
there is an $n_0\in\NN$ such that for every $n\in\NN$ with $n\geq
n_0$ and every $\sigma$-structure $\A\in\Class$ with $|A|=n$, the following is
true for all $k$-tuples $\ov{a}_0,\ldots,\ov{a}_{t-1}\in A^k$ with
$(\NS^\A_{f(n)}(\ov{a}_i),\ov{a}_i)\isom
(\NS^\A_{f(n)}(\ov{a}_j),\ov{a}_j)$ and
$N^\A_{f(n)}(\ov{a}_i)\cap N^\A_{f(n)}(\ov{a}_j)=\emptyset$ for all
$i,j\in [t]$ with $i\neq j$: \quad 
$(\ov{a}_0,\ov{a}_1\ldots,\ov{a}_{t-1})\in q(\A) \iff
(\ov{a}_1,\ldots,\ov{a}_{t-1},\ov{a}_0)\in q(\A)$.
\\
Query $q$ is \emph{shift $f(n)$-local w.r.t.\ $t$} if it is
shift $f(n)$-local w.r.t.\ $t$ on the class of all finite $\sigma$-structures. 
\end{definition}
The case of $k=1$ and $t=3$ for a constant function $f$ yields Niemist\"o's notion
of alternating Gaifman locality \cite{Niemistoe-PhD}.
From query $q$ of Proposition~\ref{prop:notWeaklyGaifmanLocal}, we obtain an example of an alternatingly Gaifman local query.
For all even numbers $p\geq 2$, this query $q$ is definable by an $\ordinvFOMODnum{p}$-formula $\psi(x)$.
The query defined by the formula $\phi(x,y,z) := \psi(x) \land y=y \land z=z$ is an alternatingly Gaifman local query. 
This can be seen directly, but it is also
a consequence of the result of \cite{Niemistoe-PhD} that 
queries which are definable by formulas of $\ordinvFOMODnum{p}$ for even numbers $p$
are alternatingly Gaifman local. Several examples of non-shift locality
will be given in Section~\ref{subsection:Application}.
To understand the relation between shift locality and weak Gaifman locality,
consider a $k$-ary query $q$ and the $2k$-ary query $\tilde q$ with $\tilde q(\A) := \setc{\ov a_{1}\ov a_{2}}{\ov a_{1} \in q(\A), \ov a_{2} \in A^{k}}$. Then $q$ is weakly Gaifman $f(n)$-local iff $\tilde q$ is shift $f(n)$-local w.r.t. $2$.
The notion of shift locality helps to discuss both kinds of locality in a uniform way.

In a technical lemma (Lemma~6.36 in \cite{Niemistoe-PhD}), Niemist\"o
showed that for $k=1$ and $p,t\in\NN$ with $p,t\geq 2$ and $p$ and $t$ coprime, for
every $t$-ary query $q$ definable in $\ordinvFOMODp(\sigma)$, there is
a $c\in\NN$ such that $q$ is shift $c$-local w.r.t.\ $t$.
Our next result deals with the general case of shift locality
and the more expressive logic $\arbinvFOMODp(\sigma)$, when allowing
polylogarithmic locality radius. 

\begin{theorem}\label{thm:CyclicShiftLocality}
Let $\Class$ be a class of finite $\sigma$-structures.
Let $k,t\in\NNpos$ with $t\geq 2$, let $q$ be a $kt$-ary query, and
let $p$ be a prime power such that $p$ and $t$ are coprime. 
If $q$ is definable in $\arbinvFOMODpC(\sigma)$ on $\Class$, then there is a
$c\in\NN$ such that $q$ is shift $(\log n)^c$-local w.r.t.\ $t$ on $\Class$.
\end{theorem}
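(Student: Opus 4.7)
My plan is to adapt the circuit-complexity reduction of \cite{AMSS-SICOMP} (used there for $\arbinvFO$ versus Gaifman locality, via the Håstad lower bound for $\ACO$) to the $\arbinvFOMODp$ setting, replacing the $\ACO$ bound by the Razborov--Smolensky lower bound for $\ACCO[p]$. Namely, by \cite{BIS90}, any $\arbinvFOMODpC(\sigma)$-formula $\varphi(\ov x_0,\ldots,\ov x_{t-1})$ defining $q$ on $\Class$ is equivalent to a family $(C_n)_{n\geq 1}$ of $\ACCO[p]$-circuits of constant depth $d_0$ and size $n^{k_0}$ which, on input of a bitwise encoding of $(\A^\iota,\ov a)$, outputs the truth value of $\A\models\varphi[\ov a]$; arb-invariance ensures this value does not depend on the embedding $\iota$. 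Fix a prime $q'$ dividing $t$: since $p$ is a prime power coprime to $t$, the prime base of $q'$ differs from that of $p$, so by Smolensky's theorem every $\ACCO[p]$-circuit of depth $d$ computing $\MOD{q'}$ on $m$ inputs has size at least $\exp(m^{\Omega(1/d)})$.

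Assume for contradiction that $q$ is not shift $(\log n)^c$-local w.r.t.\ $t$ on $\Class$ for any $c\in\NN$. Then, choosing $c$ large (to be fixed at the end), we obtain, for infinitely many $n$, a structure $\A_n\in\Class$ of size $n$ with $k$-tuples $\ov a_0,\ldots,\ov a_{t-1}$ having pairwise disjoint and pairwise isomorphic $(\log n)^c$-neighbourhoods, but such that $\varphi$ takes different values on $(\ov a_0,\ldots,\ov a_{t-1})$ and on the cyclic shift $(\ov a_1,\ldots,\ov a_{t-1},\ov a_0)$. By iterating the cyclic shift, the $0/1$-sequence $s_i:=[\,\A_n\models\varphi[\ov a_{i},\ldots,\ov a_{i-1\bmod t}]\,]$ is periodic with period $t$ but not constant; after restricting attention to a sub-orbit of length $q'$ (obtained via the coprimality of $q'$ and $p$ and the fact that $q'\mid t$), the resulting sub-sequence still fails to be constant.

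The core of the proof is then a gadget construction that turns this failure into an $\ACCO[p]$-circuit computing $\MOD{q'}$. Given an input $w\in\set{0,1}^m$, one forms a structure $\B_{m,w}$ consisting of $m$ disjoint copies of $\A_n$; inside the $i$-th copy, one uses the $t$ available tuples $\ov a_0^{(i)},\ldots,\ov a_{t-1}^{(i)}$ to encode a ``shift by $w_i$'' so that the selected combined tuple implements addition of $w_i$ in $\Z_{q'}$. By the Gaifman-radius condition, the $(\log n)^c$-neighbourhoods of the selected tuples in different copies remain isomorphic and disjoint. Then the value of $\varphi$ on $\B_{m,w}$ combined with a single $\MOD{p}$ aggregation over the copies yields $\MOD{q'}(w)$. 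The resulting circuit has depth $O(d_0)$ and size $\text{poly}(|B_{m,w}|)=\text{poly}(m\cdot n)$; choosing $m=(\log n)^c$ with $c$ larger than a constant depending on $k_0,d_0$ and the exponent in Smolensky's bound contradicts the $\MOD{q'}$-lower bound.

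The main obstacle I expect is the gadget of the previous paragraph: the hypothesis of pairwise isomorphic disjoint neighbourhoods only controls $\varphi$ locally at each copy, so one must combine $m$ such local effects into one global evaluation without breaking the arity bound of $\varphi$ or the pairwise isomorphism/disjointness conditions. This is where the coprimality of $p$ and $t$ is essential --- it ensures that the ``shift-by-1'' action on the orbit of $t$ tuples, composed with $\MOD{p}$-aggregation across copies, actually produces a $\MOD{q'}$-counting function rather than collapsing to something trivial. Formalising this combinator, while preserving polynomial size and constant depth in the corresponding $\ACCO[p]$-translation, is the technical heart of the argument, and it directly generalises Niemistö's constant-radius argument for $\ordinvFOMODp$ (Lemma~6.36 in \cite{Niemistoe-PhD}) to the polylogarithmic-radius, $\arbinv$ setting.
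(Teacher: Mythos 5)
Your overall strategy — translate the defining formula to a $\MODp$-circuit family via \cite{BIS90}, assume shift $(\log n)^c$-locality fails for every $c$, and drive this into a contradiction with Smolensky's lower bound for a prime divisor $q'$ of $t$ — is indeed the paper's strategy, and your analysis of the relevant prime factors is correct. The gadget you propose to implement the reduction, however, does not work and is structurally quite different from the paper's. You suggest forming $\B_{m,w}$ as $m$ disjoint copies of $\A_n$ and then "aggregating" over the copies. This has two fatal problems. First, $\Class$ is an \emph{arbitrary} isomorphism-closed class, and a disjoint union of $m$ copies of $\A_n\in\Class$ need not lie in $\Class$, so $\varphi$ is not guaranteed to be arb-invariant on $\B_{m,w}$ (or to define the query there at all). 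Second, $\varphi$ has only $kt$ free variables, so a single evaluation of $\varphi$ only sees $kt$ elements; the only way to touch all $m$ copies is to make $m$ separate evaluations, one per copy, and each such evaluation yields the bit $s_{w_i}$. Since $s_0\neq s_1$, this bit equals $w_i$ or $1-w_i$, so a $\MODp$-gate over these $m$ bits computes $\abs{w}_1\bmod p$ (or $\abs{w}_0\bmod p$) — a function already in $\ACCO[p]$ — and nothing like $\MOD{q'}(w)$.

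The paper avoids both problems with a different gadget (Lemma~\ref{lemma:CircuitLemma1}). It keeps a \emph{single} copy of $\A$ and encodes the $m$ input bits by independently ``twisting'' the relations between consecutive shells $S_{\nu-1}$ and $S_\nu$ of the pairwise disjoint isomorphic neighbourhoods, using the isomorphisms $\pi_j:(\NS_m^\A(\ov a_j),\ov a_j)\isom(\NS_m^\A(\ov a_{j+1\bmod t}),\ov a_{j+1\bmod t})$. The resulting structure $\A_w$ has the same universe as $\A$ and is isomorphic to $\A$, hence lies in $\Class$, and the shifts accumulate so that $(\A_w,\vec a^{(0)})\isom(\A,\vec a^{(i)})$ whenever $\abs{w}_1\equiv i\bmod t$. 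Hardwiring this encoding into $C_n$ gives a circuit $\tilde C$ of the same depth and size whose acceptance depends only on $\abs{w}_1\bmod t$, accepting residue $0$ and rejecting residue $1$. But even this is not yet a $\MOD{r}$-circuit: the acceptance pattern on the remaining residues is arbitrary, which your proposal does not account for. The paper closes this last gap with Lemma~\ref{lemma:CircuitLemma2}: via a Lyndon--Sch\"utzenberger commutation argument it identifies a factor $r\geq 2$ of $t$ such that $t$ parallel calls to $\tilde C$ (on perturbations of $w$ obtained by flipping a few zeros to ones) let a small circuit decide $\abs{w}_1\equiv 0\bmod r$ exactly. Both this pattern-testing step and the shell-twisting encoding are missing from your sketch, and they are precisely the technical heart you correctly anticipated would be the hard part.
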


Our proof of Theorem~\ref{thm:CyclicShiftLocality} relies on lower
bounds achieved in circuit complexity. 
A generalisation of Theorem~\ref{thm:CyclicShiftLocality} from prime
powers to arbitrary
numbers $p\geq 2$ would lead to new separations of circuit complexity classes
and can therefore be expected to be rather difficult (see
Remark~\ref{remark:circuitcomplexity}). 
Before giving the proof of Theorem~\ref{thm:CyclicShiftLocality}, let us first point out that
it immediately implies Theorem~\ref{thm:weakGaifmanLocality}.

\begin{proofc}{Theorem~\ref{thm:weakGaifmanLocality} (using
    Theorem~\ref{thm:CyclicShiftLocality})} \ 
Let $\varphi(\ov{x})$ be an $\arbinvFOMODpC(\sigma)$-formula
with $k$ free variables $\ov{x}=(x_{1},\ldots,x_{k})$, defining a
$k$-ary query $q_\varphi$ on $\Class$. 
Let $\ov{y}=(y_1,\ldots,y_k)$ be $k$ variables different from the
variables in $\ov{x}$. Then,
\ $
  \psi(\ov{x},\ov{y}) := 
  \big(\,\varphi(\ov{x}) \, \und \Und_{1\leq i\leq k} y_i{=}y_i\,)
$ \
is an $\arbinvFOMODpC(\sigma)$-formula that defines a $2k$-ary query
$q_\psi$. By Theorem~\ref{thm:CyclicShiftLocality}, there exists a $c\in\NN$
such that $q_\psi$ is shift $(\log n)^c$-local w.r.t.\ $t:=2$ on $\Class$.
It is straightforward to see that the shift $(\log n)^c$-locality of
$q_\psi$ w.r.t.\ $t=2$
implies that the query $q_\varphi$ is weakly Gaifman $(\log n)^c$-local.
\end{proofc}

The remainder of this subsection is devoted to the proof of
Theorem~\ref{thm:CyclicShiftLocality}. We follow the overall
method of \cite{AMSS-SICOMP} for the case of disjoint neighbourhoods
(see \cite{Schweikardt2013} for an overview)
and make use of the connection between $\arbinvFOMODp$ and
$\MODp$-circuits \cite{BIS90}, along with a circuit lower
bound by Smolensky \cite{Smolensky}.

We assume that the reader is familiar with basic notions and results
in circuit complexity (cf., e.g., the textbook \cite{AroraBarak}).
A $\MODp$-gate returns the value $1$ iff the number of ones at its
inputs is congruent 0 modulo $p$.
We consider Boolean \emph{$\MODp$-circuits} consisting of AND-, OR-, and
$\MODp$-gates of unbounded fan-in, input gates, negated input gates, and
constant gates $\textbf{0}$ and $\textbf{1}$.
More precisely, a
\emph{$\MODp$-circuit with $m$ input bits} is a directed
acyclic graph whose vertices without ingoing edges are called
\emph{input gates} and are labelled with either $\textbf{0}$,
$\textbf{1}$, $w_\nu$, or $\nicht w_\nu$ for $\nu\in\set{1,\ldots,m}$,
whose internal nodes are called \emph{gates} and are labelled either
AND or OR or $\MODp$, and which has a distinguished vertex without outgoing edges called the \emph{output gate}.
A $\MODp$-circuit $C$ with $m$ input bits naturally defines a function
from $\set{0,1}^m$ to $\set{0,1}$. 
For an input string $w\in\set{0,1}^m$ we say that 
$C$ \emph{accepts} $w$ if $C(w)=1$.
Accordingly, $C$ \emph{rejects} $w$ if $C(w)=0$.
The \emph{size} of a circuit is the number of gates it contains, and
the \emph{depth} is the length of the longest path from any of the input gates 
to the output gate. 

Our proof of Theorem~\ref{thm:CyclicShiftLocality} relies on
Smolensky's following circuit lower bound.
\begin{theorem}[Smolensky \cite{Smolensky}
  (see also \cite{Straubing1994})]\label{thm:smolensky}
  Let $p$ be a prime power and let $r$ be a number which has a prime factor which is different
  from the prime factor of $p$. \\
 There exist numbers $\varepsilon,\ell >0$ such that for every
 $d\in\NNpos$ there is an $m_d\in\NNpos$ such that for
 every $m\in\NN$ with $m\geq m_d$ the following is true:
 \
 No $\MODp$-circuit of depth $d$ and size at most
 $2^{\varepsilon\sqrt[\ell d]{m}}$ accepts exactly those bitstrings
 $w\in\set{0,1}^m$ that contain a number of ones congruent $0$ modulo $r$.
\end{theorem}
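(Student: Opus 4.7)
The plan is to prove Smolensky's theorem via the \emph{polynomial method} of Razborov and Smolensky. The strategy has two complementary parts: an \emph{approximation lemma} showing that every $\MODp$-circuit of depth $d$ and size $s$ can be approximated, on a large fraction of inputs, by a polynomial of low degree over a finite field of characteristic $\hat{p}$ (where $p=\hat{p}^k$); and an \emph{incompressibility lemma} showing that $\MOD{r}$ cannot be so approximated when $r$ has a prime factor distinct from $\hat{p}$. Combining the two quantitative bounds on the polynomial degree yields the circuit size lower bound.

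For the approximation lemma, I would work over a finite extension $\mathbb{F}$ of $\mathbb{F}_{\hat p}$ large enough to contain the roots of unity needed below. A $\MODp$-gate on inputs $x_1,\dots,x_n$ is represented \emph{exactly} by a polynomial of degree $O(p)$ via a Fermat-style identity in $\mathbb{F}$. Each AND- (or, dually, OR-) gate is handled by the probabilistic construction of Razborov and Smolensky: choose $t$ independent random subsets $S_1,\dots,S_t\subseteq[n]$ and form $1-\prod_{j=1}^{t}\bigl(1-(\sum_{i\in S_j}x_i)^{p-1}\bigr)$. A random subset-sum of a nonzero input vector is nonzero in $\mathbb{F}$ with probability bounded away from $0$, so with $t=O(\log(s/\delta))$ the per-input failure probability per gate is $\delta/s$. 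A union bound over all $s$ gates, together with the probabilistic method, produces a \emph{fixed} polynomial $Q$ of degree $D=O\bigl((c_p\log(s/\delta))^d\bigr)$ that agrees with the circuit on at least a $(1-\delta)$-fraction of $\set{0,1}^m$.

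For the incompressibility lemma, I would first reduce to the case that $r=q$ is a prime with $q\neq\hat{p}$: if the original circuit computes the indicator of ``sum $\equiv 0\bmod r$'', taking the OR over shifts by $0,q,2q,\dots,r-q$ (each realised by padding a constant number of fixed-$1$ inputs) gives a $\MODp$-circuit of essentially the same complexity for ``sum $\equiv 0\bmod q$''. Applying the approximation lemma with $\delta=1/6$ yields a polynomial $Q$ of degree $D$ agreeing with $\MOD{q}$ on a set $G\subseteq\set{0,1}^m$ of size $\geq\tfrac{5}{6}2^m$. Pick a primitive $q$-th root of unity $\omega\in\mathbb{F}$ (which exists since $\gcd(q,\hat{p})=1$) and set $z_i:=1+(\omega-1)y_i$, so $z_i\in\set{1,\omega}$ satisfies $(z_i-1)(z_i-\omega)=0$; every polynomial thus reduces to a multilinear one in the $z_i$. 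Now $\prod_{i\in[m]}z_i=\omega^{\sum_iy_i}$ depends only on $\sum_iy_i\bmod q$ and is consequently an $\mathbb{F}$-linear combination of the indicators $[\sum_iy_i\equiv j\bmod q]$, so on $G$ it can be replaced by a polynomial of degree $\leq D$ built from $Q$. Hence every multilinear monomial $\prod_{i\in S}z_i$ with $|S|>m/2$ rewrites on $G$ as $\prod_{i\notin S}z_i\cdot\prod_iz_i$, of degree $\leq(m-|S|)+D<m/2+D$. The space of all $\mathbb{F}$-valued functions on $G$, whose $\mathbb{F}$-dimension equals $|G|\geq\tfrac{5}{6}2^m$, is therefore spanned by multilinear monomials of degree $\leq m/2+D$, numbering $\sum_{k\leq m/2+D}\binom{m}{k}$. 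Standard estimates of the partial binomial sum then force $D\geq c\sqrt{m}$ for an absolute constant $c>0$, provided $m$ is at least some threshold $m_d$.

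Combining both parts, if a $\MODp$-circuit of depth $d$ and size $s$ computed $\MOD{r}$ on $m\geq m_d$ bits, then $c\sqrt{m}\leq D=O\bigl((c_p\log s)^d\bigr)$, which rearranges to $s\geq 2^{\varepsilon\,m^{1/(2d)}}$ for a suitable constant $\varepsilon>0$, giving the theorem with $\ell=2$. The main obstacle I anticipate is the careful execution of the incompressibility step when $\hat{p}$ and $r$ are genuinely composite: one must verify both that the change of variables $z_i=1+(\omega-1)y_i$ truly linearises the problem modulo the relations $(z_i-1)(z_i-\omega)=0$ inside an appropriate extension of $\mathbb{F}_{\hat p}$, and that the replacement of $\prod_iz_i$ on $G$ uses nothing beyond the given polynomial $Q$ (i.e., no hidden use of further algebraic information about $\MOD{r}$). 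Once the dimension count and the degree reduction are both in place, the quantitative conclusion follows routinely.
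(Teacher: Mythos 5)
The paper does not prove this theorem at all: it is imported as a black box from Smolensky \cite{Smolensky} (see also \cite{Straubing1994}), the only added content being the one-line remark that the prime-power case reduces to the prime case because $\MOD{\hat{p}^k}$-gates are simulable by constant-depth, polynomial-size $\MOD{\hat p}$-circuits. So your sketch is not competing with an in-paper argument but reconstructing the cited one, and it does follow the standard Razborov--Smolensky polynomial method with the correct quantitative bookkeeping: degree $O((\log s)^d)$ from the approximation lemma against a $\Omega(\sqrt m)$ degree lower bound, giving $\ell=2$.

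Two steps are, however, written in a form that is only valid for $q=2$ and need the standard general-$q$ repairs. First, with $z_i\in\set{1,\omega}$ you have $z_i^{q}=1$ but not $z_i^{2}=1$, so the rewriting $\prod_{i\in S}z_i=\prod_{i\notin S}z_i\cdot\prod_i z_i$ fails for $q>2$; you must instead use $\prod_{i\in S}z_i=\bigl(\prod_{i\notin S}z_i^{q-1}\bigr)\cdot\prod_i z_i$ and reduce each $z_i^{q-1}$ to an affine function of $z_i$ modulo $(z_i-1)(z_i-\omega)$, which still yields degree $\leq(m-\abs{S})+D$. Second --- and this is precisely the worry you flag at the end --- $\omega^{\sum_i y_i}=\sum_{j\in[q]}\omega^{j}\,[\sum_i y_i\equiv j\bmod q]$ genuinely cannot be replaced on $G$ by a polynomial built from $Q$ alone, since $Q$ approximates only the $j=0$ indicator. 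The fix is to apply the approximation lemma simultaneously to the $q$ circuits obtained by padding $q-j$ constant ones (each computing the $j$-th residue indicator) and to take $G$ to be the intersection of their good sets; with error $\delta=1/(6q)$ per circuit this still leaves $\abs{G}\geq\tfrac{5}{6}2^m$. Finally, the exact degree-$O(p)$ representation of a $\MOD{\hat{p}^k}$-gate over $\mathbb{F}_{\hat p}$ for $k>1$ is not a Fermat identity; justify it via Lucas' theorem (the elementary symmetric polynomials $e_{\hat p^{i}}$, $i<k$, determine the count modulo $\hat p^{k}$, giving total degree $\hat p^{k}-1<p$) or simply invoke the constant-depth simulation the paper mentions. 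With these repairs your outline is the proof of the cited sources.
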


In the literature, Smolensky's theorem is usually stated only for
primes $p$. Note, however, that (for each fixed $k\in\NNpos$) $\MOD{p^k}$-gates can easily be
simulated by $\MODp$-circuits of constant depth and polynomial size
(cf., \cite{Straubing1994}), and hence Smolensky's theorem also holds for
prime powers $p$, as stated in Theorem~\ref{thm:smolensky}.
It is still open whether an analogous result also holds for
numbers $p$ composed of more than one prime factor (see Chapter~VIII
of \cite{Straubing1994} and Chapter~14.4 of \cite{AroraBarak} for discussions
on this).

To establish the connection between $\MODp$-circuits and
$\arbinvFOMODp(\sigma)$, we need to represent $\sigma$-structures $\A$
and $K$-tuples $\ov{a}\in A^K$ (for $K\in\NN$) by bitstrings. 
This is done in a straightforward way: 
Let $\sigma=\set{R_1,\ldots,R_{|\sigma|}}$ and let $r_i:=\ar(R_i)$ for each $i\leq |\sigma|$.
Consider a finite $\sigma$-structure $\A$ with $|A|=n$. Let $\iota$ be
an embedding of $\A$ into $[n]$.
For each $R_i\in\sigma$ we let $\Rep^\iota(R_i^\A)$ be the bitstring
of length $n^{r_i}$ whose $j$-th bit is $1$ iff the $j$-th smallest
element in $A^{r_i}$ w.r.t.\ the lexicographic order associated with
$<^\iota$ belongs to the relation $R_i^\A$.
Similarly, for each component $a_i$ of a $K$-tuple
$\ov{a}=(a_1,\ldots,a_K)\in A^K$ we let $\Rep^\iota(a_i)$ be the
bitstring of length $n$ whose $j$-th bit is $1$ iff $a_i$ is the
$j$-th smallest element of $A$ w.r.t.\ $<^\iota$. Finally, we let
\[
  \Rep^\iota(\A,\ov{a}) \ := \ \ 
  \Rep^\iota(R_1^\A) \ \cdots \ \Rep^\iota(R_{|\sigma|}^\A) \ \Rep^\iota(a_1)\
  \cdots \ \Rep^\iota(a_K)
\]
be the \emph{binary representation of $(\A,\ov{a})$ w.r.t.\ $\iota$}. 
Note that, independently of $\iota$, the length of the bitstring
$\Rep^\iota(\A,\ov{a})$ is $\lambda^{\sigma}_K(n):=\sum_{i=1}^{|\sigma|} n^{r_i} + Kn$.

The connection between $\FOMODp(\sigma_\ARB)$ and $\MODp$-circuits is
obtained by the following result.

\begin{theorem}[implicit in \cite{BIS90} (see also \cite{Straubing1994})]\label{thm:formula2circuit}
 Let $\sigma$ be a finite relational signature, let $K\in\NN$, and let
 $p\in\NN$ with $p\geq 2$.
 For every $\FOMODp(\sigma_\ARB)$-formula $\varphi(\ov{x})$ with $K$
 free variables there exist numbers $d,s\in\NN$  such
 that for every $n\in\NNpos$ there is a $\MODp$-circuit $C_n$
 with $\lambda^\sigma_K(n)$ input bits, depth $d$, and size $n^s$ such
 that the following is true for all $\sigma$-structures $\A$
 with $|A|=n$, all $\ov{a}\in A^K$, and all embeddings $\iota$ of $\A$
 into $[n]$: \ \
 $C_n$ accepts $\Rep^\iota(\A,\ov{a})$ $\iff$
 $\A^\iota\models \varphi[\ov{a}]$.
\end{theorem}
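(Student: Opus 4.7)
The plan is to proceed by structural induction on $\varphi$, after first rewriting $\varphi$ into negation normal form so that $\nicht$ appears only in front of atomic subformulas. For the NNF conversion, the usual De Morgan laws handle $\und,\oder,\exists,\forall$, while $\nicht$ in front of a modulo counting quantifier is eliminated via the equivalence $\nicht\,\existscc{i}{p}y\,\psi(y,\ov z) \equiv \Oder_{j\in [p]\setminus\set{i}}\existscc{j}{p}y\,\psi(y,\ov z)$. Negated atoms then correspond directly to negated input literals of the circuit, which are admitted by the definition of $\MODp$-circuits. The quantifier rank and the structure of $\varphi$ change only by a constant factor during this preprocessing.

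I will prove by induction the slightly stronger claim that for every $\FOMODp(\sigma_\ARB)$-formula $\psi(\ov y)$ with $m\in\NN$ free variables there exist $d,s\in\NN$ such that for every $n\geq 1$ there is a $\MODp$-circuit of depth $d$ and size at most $n^s$ with $\lambda^\sigma_m(n)$ input bits which on input $\Rep^\iota(\A,\ov b)$ outputs $1$ iff $\A^\iota\models\psi[\ov b]$, for every finite $\sigma$-structure $\A$ with $|A|=n$, every $\ov b\in A^m$, and every embedding $\iota$ of $\A$. For the base cases, $y_i{=}y_j$ becomes $\Oder_{k\in[n]}$ of the conjunction of the $k$-th bit of $\Rep^\iota(b_i)$ with the $k$-th bit of $\Rep^\iota(b_j)$; an atom $R(y_{i_1},\ldots,y_{i_r})$ for $R\in\sigma$ becomes the $\Oder$ over $(k_1,\ldots,k_r)\in[n]^r$ of the conjunction of the $(k_1,\ldots,k_r)$-th bit of $\Rep^\iota(R^\A)$ with the bits of $\Rep^\iota(b_{i_1}),\ldots,\Rep^\iota(b_{i_r})$ at positions $k_1,\ldots,k_r$, using that each $\Rep^\iota(b_{i_\ell})$ has exactly one $\textbf{1}$. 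For an atom $P(y_{i_1},\ldots,y_{i_r})$ with $P\in\eta_\ARB$ we use exactly the same construction but let the outer $\Oder$ range only over those $(k_1,\ldots,k_r)\in[n]^r$ that lie in $P^\NN$. Negated atoms are handled by replacing the relevant bits by their negated literals. Each base-case circuit has depth $2$ and size $O(n^r)$.

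For the inductive step, $\psi_1\und\psi_2$ and $\psi_1\oder\psi_2$ are handled by placing a single $\Und$- or $\Oder$-gate above the sub-circuits. For $\exists z\,\psi'(z,\ov y)$, I take the $\Oder$ over $j\in[n]$ of a copy of the circuit for $\psi'$ in which the $n$ input bits representing $z$ are replaced by the constant bitstring that has $\textbf{1}$ at position $j$ and $\textbf{0}$ elsewhere; this is correct because such a bitstring equals $\Rep^\iota(c)$ for the unique $c\in A$ with $\iota(c)=j$. The quantifier $\forall$ is dual (or can be replaced by $\nicht\exists\nicht$ using the NNF rules). For a modulo counting quantifier $\existscc{i}{p}z\,\psi'(z,\ov y)$ I form the same $n$ copies of the $\psi'$-circuit, one per value $j\in[n]$ of $z$, and feed all $n$ outputs together with $((-i)\mymod p)$ additional constant-$\textbf{1}$ wires into a single $\MODp$-gate. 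Then the gate fires iff the number of $j\in[n]$ for which the $\psi'$-circuit accepts is congruent to $i$ modulo $p$, which by the induction hypothesis is exactly the intended semantics.

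Each inductive step adds at most a constant to the depth and multiplies the size by at most $n$, so depth is bounded by a constant $d$ depending only on $\varphi$, and size is bounded by $n^s$ for a constant $s$ depending only on $\varphi$. Instantiating the induction with $\psi=\varphi$ and $m=K$ yields the theorem. The main conceptual point is the base case for arbitrary numerical predicates, which is resolved by non-uniformity: because one constructs one circuit $C_n$ per $n$, the set $P^\NN\cap[n]^r$ may be hard-wired without any uniformity cost, and this is precisely why allowing \emph{arbitrary} numerical predicates creates no additional difficulty over the order-invariant case. Everything else is routine book-keeping of depth and size.
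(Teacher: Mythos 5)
Your proposal is correct. Note, however, that the paper does not prove Theorem~\ref{thm:formula2circuit} at all: it is stated as a known result, ``implicit in \cite{BIS90}'', and used as a black box. So there is no ``paper proof'' to compare against; what you have written is the standard folklore translation, and it goes through. The two points worth confirming are exactly the ones you identify. First, the arbitrary numerical predicates cost nothing because the circuit family is non-uniform: for each $n$ one simply hard-wires the finite set $P^\NN\cap[n]^r$ into the outer disjunction of the atom gadget. Second, the modulo quantifier $\existscc{i}{p}$ is realised by a single $\MODp$-gate (which, per the paper's definition, tests for residue $0$) after padding with $(-i)\bmod p$ constant-$\mathbf{1}$ wires. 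One step you state tersely but which is nonetheless sound: for a negated relational or numerical atom, negating only the relation bit inside the depth-$2$ gadget (rather than dualising the whole gadget) works precisely because the one-hot encodings $\Rep^\iota(b_{i_\ell})$ guarantee a unique tuple $(k_1,\ldots,k_r)$ at which all variable bits are $1$, so the big $\Oder$ has at most one live disjunct; it would be worth making this remark explicit. With that, the depth is a constant depending only on $\varphi$ and the size is polynomial in $n$ with exponent depending only on $\varphi$, as required.
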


The proof of Theorem~\ref{thm:CyclicShiftLocality} can be outlined informally as follows.
Assume that, for some prime power $p$ and a number $t$ which is coprime with $p$, there is an $\arbinvFOMODp(\sigma)$-definable $kt$-ary query $q$ which is not shift $(\log n)^{c}$-local w.r.t $t$ for each $c$.
We wish to obtain a contradiction to Smolensky's theorem (Theorem~\ref{thm:smolensky}).
From the defining arb-invariant sentence for $q$, we obtain a $\MODp$-circuit-family $(C_{n})_{n\in \NN}$ which accepts exactly the bitstrings which are representations of structures $(\A,\ov{a})$ such that $\ov{a}\in q(\A)$ (cf. Theorem~\ref{thm:formula2circuit}).
Since $q$ is not shift $(\log n)^{c}$-local w.r.t $t$, for each $c$,
we obtain an infinite family of counter examples to the shift $(\log n)^{c}$-locality of $q$.
That is, for each $n_{0}$ there is a structure $\A$ on $n \geq n_{0}$ elements
which contains tuples $\ov a_{0}, \ldots, \ov a_{t-1}$ of $k$ elements each such that the tuples have pairwise isomorphic and disjoint $(\log n)^{c}$-neighbourhoods, but $(\ov a_{0}, \ldots, \ov a_{t-1}) \in q(\A)$ and $(\ov a_{1}, \ldots, \ov a_{t-1}, \ov a_{0}) \not\in q(\A)$, i.e. $C_{n}$ accepts all representations of $(\A, \ov a_{0}, \ldots, \ov a_{t-1})$ and rejects all representations of $(\A, \ov a_{1}, \ldots, \ov a_{t-1}, \ov a_{0})$.

We transform each $C_{n}$ into a circuit $\tilde C_{m}$  (cf. Lemma~\ref{lemma:CircuitLemma1} below) of exactly the same size
which accepts all bitstrings where the number of ones is $0$ modulo $t$ and rejects all bitstrings where this number is $1$ modulo $t$; furthermore, it does not distinguish between strings with the same number of ones modulo $t$.
To this end, on input of a word $w$, the circuit $\tilde C_{m}$ simulates the circuit $C_{n}$
on the representation of a structure which is obtained from $\A$ as follows.
For each position $i$ of $w$ which is labelled by $1$, the relations
between elements at distance $i$ and $i+1$ from the tuple $(\ov a_{0}, \ldots, \ov a_{t-1})$ 
are changed in such a way that the resulting structure looks like 
$\A$ where the tuple $(\ov a_{0}, \ldots, \ov a_{t-1})$ has been shifted.
See Figure~\ref{fig:shift} for an example where the structure $\A$ is a graph. 
This will be done in a way which ensures that $\tilde C_{m}$ accepts $w$ if
the number of ones is $0$ modulo $t$ --- in this case, the simulated structure looks like $(\A, \ov a_{0}, \ldots, \ov a_{t-1})$ and hence its representation is accepted by $C_{n}$ ---
and that it rejects $w$ if the number of ones is $1$ modulo $t$ --- in this case, the simulated structure looks like $(\A, \ov a_{1}, \ldots, \ov a_{t-1}, \ov a_{0})$ and hence it is rejected by $C_{n}$.

This construction is not yet sufficient for a contradiction to Smolenky's theorem.
For this, we need to show (cf. Lemma~\ref{lemma:CircuitLemma2} below) that the circuit $\tilde C_{m}$ can be transformed into a circuit $\hat C_{m}$ of roughly the same size and depth which accepts \emph{exactly} the bitstrings with $0$ ones modulo $r$, for some factor $r\geq 2$ of $t$. To this end, we show that there exists a factor $r\geq 2$ of $t$ such that, basically, it is possible to determine
if the number of ones in a bitstring $w$ is $0$ modulo $r$ by computing, for each $j\in [t]$, whether $\tilde C_{m}$ accepts $w$
after replacing the first $j$ zeros of $w$ by ones. 
Having achieved all this, if we fix $c$ and $n_{0}$ appropriately in terms of Smolenky's theorem, we obtain the desired contradiction.

\medskip

We proceed with the formal proof of Theorem~\ref{thm:CyclicShiftLocality}. 
To simplify notation, we define
$\vec{a}^{(0)}:= (\ov{a}_0,\ov{a}_1,\ldots,\ov{a}_{t-1})$ and 
$\vec{a}^{(i)}:=
(\ov{a}_i,\ov{a}_{i+1},\ldots,\ov{a}_{t-1},\ov{a}_0,\ov{a}_1,\ldots,\ov{a}_{i-1})$,
for all
$i\in [t]$ with $i\leq 1$.

\begin{lemma}\label{lemma:CircuitLemma1}
Let $m,k,t\in \NNpos$ with $t\geq 2$. Let $\A$ be a finite
$\sigma$-structure with $n:=|A|$. For each $i\in[t]$ let $\ov{a}_i\in
A^k$ such that for all $i,j\in[t]$ with $i\neq j$ we
have
$(\NS^\A_m(\ov{a}_i),\ov{a}_i)\isom(\NS^\A_m(\ov{a}_j),\ov{a}_j)$ \ and \
$N^\A_m(\ov{a}_i)\cap N^\A_m(\ov{a}_j) = \emptyset$.
Let $p\in\NN$ with $p\geq 2$.
Let $C$ be a $\MODp$-circuit with $\lambda^\sigma_{kt}(n)$ input bits such that:
\begin{enumerate}
\item
$C$ accepts $\Rep^{\iota_1}(\A,\vec{a}^{(i)})$ iff it accepts 
$\Rep^{\iota_2}(\A,\vec{a}^{(i)})$, for all embeddings $\iota_1$ and $\iota_2$
of $\A$ and for every $i\in [t]$, and
\item\label{prop:accept0-reject1} 
$C$ accepts $\Rep^{\iota}(\A,\vec{a}^{(0)})$ and
rejects $\Rep^\iota(\A,\vec{a}^{(1)})$, for every embedding $\iota$ of $\A$.
\end{enumerate}
\noindent
There exists a $\MODp$-circuit $\tilde{C}$ with $m$ input bits, such that:
\begin{enumerate}[label=(\emph{\alph*})]
\item\label{prop:depth-size} 
 $\tilde{C}$ has the same depth and size as $C$,
\item\label{prop:acceptance-depends-on-I} 
 for all $w,w'\in\set{0,1}^m$ with $\abs{w}_1\equiv |w'|_1 \mymod t$, $\tilde{C}$ accepts
$w$ iff it accepts $w'$, and
\item\label{prop:accept-0-reject-1} 
 $\tilde{C}$ accepts all $w\in\set{0,1}^m$ with
$\abs{w}_1\equiv 0 \mymod t$ and rejects all $w\in\set{0,1}^m$ with 
$\abs{w}_1\equiv 1 \mymod t$.
\end{enumerate}
\end{lemma}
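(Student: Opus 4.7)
My plan is to build $\tilde{C}$ from $C$ by leaving all internal gates of $C$ untouched and replacing every input wire of $C$ either by a constant $\textbf{0}$ or $\textbf{1}$, or by one of the literals $w_i$ or $\nicht w_i$. Since no gate is added or removed, this immediately gives property~(\ref{prop:depth-size}). The substitution realises a function $F:\set{0,1}^m\to\set{0,1}^{\lambda^\sigma_{kt}(n)}$, and the plan is to design it so that, for every $w\in\set{0,1}^m$, the string $F(w)$ equals $\Rep^{\iota_w}(\A,\vec{a}^{(j)})$ with $j\equiv |w|_1\mymod t$, for some embedding $\iota_w$ of $\A$ depending on~$w$ only through $|w|_1\mymod t$. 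Once this is in place, property~(\ref{prop:acceptance-depends-on-I}) follows from the embedding-invariance hypothesis~1 on $C$, and property~(\ref{prop:accept-0-reject-1}) follows from assumption~\ref{prop:accept0-reject1} applied to $F(w)$ for $|w|_1\in\set{0,1}$ modulo~$t$.

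To define $F$, I would use the pairwise disjointness and isomorphism of the $m$-neighbourhoods to fix, for every $j\in[t]$, an isomorphism $\pi_j:\NS^\A_m(\ov{a}_0)\to\NS^\A_m(\ov{a}_j)$ with $\pi_j(\ov{a}_0)=\ov{a}_j$, taking $\pi_0$ to be the identity. Then for each layer $i\in\set{1,\ldots,m}$ I would describe a ``twist at layer~$i$'' that modifies only those bits of the representation corresponding to relation tuples having one coordinate in the distance-$(i{-}1)$ shell and another in the distance-$i$ shell of one of the neighbourhoods, re-routing these cross-shell edges via the maps $\pi_{j+1}\circ\pi_j^{-1}$. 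Twists at different layers touch disjoint sets of relation positions and therefore commute; applying the twists at exactly the layers in $\set{i:w_i=1}$ yields a structure which, up to a permutation of the universe fixing the background $A\setminus\bigunion_j N^\A_m(\ov{a}_j)$, coincides with $(\A,\vec{a}^{(|w|_1\,{\mymod}\,t)})$. That permutation is absorbed into the definition of $\iota_w$.

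Each bit of the representation then falls into one of three classes: bits involving only background elements are constant in~$w$; bits involving only elements inside a single shell of a single neighbourhood are also constant, since twists preserve inner-shell structure; and bits lying on an interface between shells $i{-}1$ and~$i$, or encoding a tuple slot, flip precisely when the twist at layer~$i$ is applied, and so depend on a single literal~$w_i$. Reading off these three cases defines the substitution. The main obstacle is the bookkeeping needed to verify that twists at distinct layers really do act on disjoint relation positions and commute, and that applying any $s$ of them produces the same shifted representation regardless of which $s$ layers are chosen; this reduces to the combinatorial fact that each cross-shell relation-position belongs to exactly one layer, together with the observation that the maps $\pi_{j+1}\circ\pi_j^{-1}$ generate the cyclic group of order~$t$ acting on the $t$ copies of $\NS^\A_m(\ov{a}_0)$.
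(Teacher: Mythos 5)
Your construction is the same as the paper's: fix one embedding $\iota$, define a twisted structure $\A_w$ by re-routing cross-shell edges at every layer $\nu$ with $w_\nu=1$, observe that $(\A_w,\vec a^{(0)})\isom(\A,\vec a^{(|w|_1\bmod t)})$, and realise the resulting representation map $w\mapsto\Rep^\iota(\A_w,\vec a^{(0)})$ by replacing each input gate of $C$ with a constant or a single literal $w_\nu$, $\nicht w_\nu$. One wrinkle worth flagging: your intermediate claim that $F(w)=\Rep^{\iota_w}(\A,\vec a^{(j)})$ for an embedding $\iota_w$ \emph{depending on $w$ only through $|w|_1\bmod t$} is both false and incompatible with the substitution plan --- if every bit of $F(w)$ were determined by $|w|_1\bmod t$, a single literal $w_i$ could never realise it, since one can always change $w_i$ while preserving $|w|_1\bmod t$. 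What actually holds, and what the paper uses, is the weaker fact that $F(w)$ is \emph{some} representation of $(\A,\vec a^{(|w|_1\bmod t)})$ with an embedding depending on the whole of $w$; hypothesis~(1) of the lemma then lets you conclude acceptance depends only on $|w|_1\bmod t$. The other place where you gesture rather than prove --- that applying the twists at any set of $s$ layers yields a structure isomorphic to the $s$-fold cyclic shift of $\vec a^{(0)}$ --- is precisely where the paper does real work, constructing the isomorphism shell by shell by induction on $\nu$; the group-theoretic remark about $\pi_{j+1}\circ\pi_j^{-1}$ generating a cyclic group is true but by itself does not produce the isomorphism, since the map on inner shells is an accumulated composition that must be checked to respect the twisted cross-shell edges at every layer.
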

\begin{proof}
  Let $I \subset [t]$ be the set containing $i\in [t]$ iff $C$ accepts
  $\Rep^{\iota_1}(\A,\vec{a}^{(i)})$ for some (i.e., due to property
  (1) of $C$, \emph{every}) embedding $\iota_1$ of $\A$.  By property
  (2) of $C$, we know that $0\in I$ and $1 \notin I$.

  For the remainder of this proof, fix an embedding $\iota$ of $\A$
  into $[n]$.  Note that $\iota$ is also an embedding of any other
  $\sigma$-structure that has the same universe as $\A$.  For every
  $w\in\set{0,1}^m$, we will define a $\sigma$-structure $\A_w$ with
  the same universe as $\A$, which has the following property for
  every $i\in [t]$:
  \begin{gather}
    \text{If} \ \ \abs{w}_1\equiv i \mymod t, \ \ \text{then} \ \
    (\A_w,\vec{a}^{(0)}) \ \isom \ (\A,
    \vec{a}^{(i)}).\label{prop:i-cong-t}
  \end{gather}
  Note that if $(\A_w,\vec{a}^{(0)}) \isom (\A,\vec{a}^{(i)})$, then
  there is an embedding $\iota_1$ such that
  $\Rep^{\iota_1}(\A,\vec{a}^{(i)})=\Rep^{\iota}(\A_w,\vec{a}^{(0)})$. Hence,
  due to property (1), $C$ accepts $\Rep^{\iota}(\A_w,\vec{a}^{(0)})$
  iff it accepts $\Rep^\iota(\A,\vec{a}^{(i)})$.

  The circuit $\tilde{C}$ will be constructed so that on input
  $w\in\set{0,1}^m$ it does the same as circuit $C$ does on input
  $\Rep^\iota(\A_w,\vec{a}^{(0)})$.  Thus, the following will be true for
  every $w\in\set{0,1}^m$ and the particular number $i\in [t]$ such
  that $\abs{w}_1\equiv i \mymod t$:
  \begin{equation*}\label{eq:tildeCacceptsI}
    \tilde{C} \text{ accepts } w
    \iff
    C \text{ accepts } \Rep^\iota(\A_w,\vec{a}^{(0)})
    \iff
    C \text{ accepts } \Rep^\iota(\A,\vec{a}^{(i)})
    \iff
    i\in I. \ \ 
  \end{equation*}

\noindent
This immediately implies that $\tilde{C}$ satisfies property (b); and
since $0\in I$ and $1 \notin I$, the circuit $\tilde{C}$ also
satisfies property (c).

\medskip

\begin{figure}
  \begin{center}
    \resizebox{400pt}{!}{\includegraphics{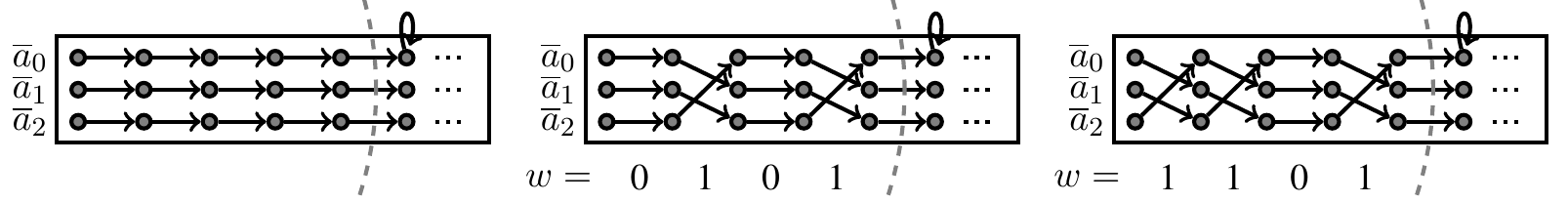}}
    \caption{Illustration of a structure $\A$ (left) and structures
      $\A_w$ for two different bitstrings
      $w$.}\label{fig:shift}
  \end{center}

\end{figure}

\emph{Definition of $\A_w$:} \ For each $j\in [t]$, we partition
$N_m^{\A}(\ov{a}_{j})$ into \emph{shells}
$S_\nu(\ov{a}_{j}):=\setc{x\in A}{\dist^{\A}(x,\ov{a}_{j})=\nu}$, for
all $\nu\in\set{0,\ldots,m}$.  We write $S_\nu$ for the set
$S_\nu(\ov{a}_{0})\union \dotsb \union S_\nu(\ov{a}_{t-1})$.  For each
$j\in[t]$ let $\pi_{j}$ be an isomorphism from
$(\NS^{\A}_{m}(\ov{a}_{j}),\ov{a}_{j})$ to \allowbreak
$(\NS^{\A}_{m}(\ov{a}_{(j+1 \mymod t)}),\ov{a}_{(j+1 \mymod t)})$.
Note that $\pi_{j}(S_\nu(\ov{a}_{j}))=S_\nu(\ov{a}_{(j+1\mymod t)})$
for each $j\in [t]$ and each $\nu\leq m$.

For a bitstring $w=w_1\cdots w_{m}\in\set{0,1}^m$ the structure $\A_w$
has the same universe as $\A$. For each $R\in \sigma$ of arity $r$,
the relation $R^{\A_{w}}$ is obtained from $R^{\A}$ as follows: We
start with $R^{\A_w}:=\emptyset$, and then for each tuple
$\ov{c}\in R^\A$ we insert the tuple $\ov{c}_w$ into $R^{\A_w}$, where
$\ov{c}_w$ is defined as follows:
\begin{enumerate}[label=(\emph{\roman*})]
\item If $\ov{c}\notin (S_{\nu-1} \union S_{\nu})^{r}$ for any
  $\nu \leq m$, or $\ov{c} \in S_{\nu}^{r}$ for some $\nu \leq m$,
  then $\ov{c}_w:=\ov{c}$.
\item Otherwise, if $\ov{c}\in (S_{\nu-1} \union S_{\nu})^{r}$ for
  some $\nu \leq m$, then note that (since $\ov{c}\in R^\A$), there is
  a unique $j\in[t]$ such that
  $\ov{c} \in (S_{\nu-1}(\ov{a}_{j}) \union S_{\nu}(\ov{a}_{j}))^{r}$
  (since $N^\A_m(\ov{a}_{j})\cap N^\A_m(\ov{a}_{j'}) = \emptyset$, for
  all $j,j'\in [t]$ with $j \neq j'$).  To keep the notation simple,
  assume that $\ov{c}=(\ov{c}_{\nu-1}, \ov{c}_{\nu})$, where all
  elements of $\ov{c}_{\nu-1}$ belong to $S_{\nu-1}(\ov{a}_j)$ and all
  elements of $\ov{c}_{\nu}$ belong to $S_{\nu}(\ov{a}_j)$.  We define
  $\ov{c}_w$ depending on the $\nu$-th bit $w_\nu$ of $w$: \ If
  $w_\nu=0$, then $\ov{c}_w:=\ov{c}$. \ If $w_\nu=1$, then
  $\ov{c}_w:= (\ov{c}_{\nu-1}, \pi_{j}(\ov{c}_{\nu}))$.
\end{enumerate}

\noindent
Note that for every $\nu\in\set{1,\ldots,m}$ with $w_\nu=1$, this
construction enforces that the role that was formerly played by shell
$S_\nu(\ov{a}_j)$ is afterwards played by shell
$S_\nu(\ov{a}_{(j+1\mymod t)})$; see Figure~\ref{fig:shift} for an
illustration. This shows that $\A_{w}$ satisfies the property \eqref{prop:i-cong-t}.
A more precise argument will be given at the end of the proof.

\emph{Construction of $\tilde{C}$:}  On input of $w\in\set{0,1}^m$ the circuit $\tilde{C}$ will
simulate circuit $C$ on input $\Rep^\iota(\A_w,\vec{a}^{(0)})$.  We
construct $\tilde{C}$ in a way which mirrors the construction of
$\A_{w}$.  To this end, all inputs of $C$ corresponding, in $\Rep^{\iota}(\A,\vec a^{(0)})$, to tuples in relations of $\A$ that
are unchanged by the construction of $\A_{w}$ (i.e. tuples to which
case (i) of the definition of $\A_{w}$ applies) are fixed to constant
values. Inputs of $C$ corresponding to tuples to which case (ii) of
the definition of $\A_{w}$ applies (i.e. tuples
$\ov{c}\in (S_{\nu-1} \union S_{\nu})^{r}$ for some $\nu \leq m$) are
changed according to the $\nu$-th bit in the input $w$ of $\tilde C$
in a way that mirrors case (ii).

More precisely, for each relation symbol
$R\in \sigma$ of arity $r$ and each tuple $\ov{c}\in R^\A$, we proceed
as follows.  Let $g$ and $\lnot g$ be the non-negated and the negated
input gate of $C$ that correspond to the bit $b$ in
$\Rep^\iota(\A,\vec{a}^{(0)})$ which represents the information
whether $\ov{c}$ belongs to $R^\A$.
\begin{enumerate}[label=(\emph{\roman*})]
\item If $\ov{c}\notin(S_{\nu-1} \union S_{\nu})^{r}$ for all
  $\nu\leq m$, or $\ov{c} \in S_\nu^r$ for some $\nu\leq m$, the gate
  $g$ is replaced by the constant gate \textbf{1}, and the negated
  input gate $\nicht g$ is replaced by the constant gate \textbf{0}.

\item Otherwise, if
  $\ov{c} \in(S_{\nu-1}(\ov{a}_{j})\union S_\nu(\ov{a}_{j}))^{r}$ for
  some $\nu\leq m$ and $j\in [t]$, assume, for simplicity, that
  $\ov{c}:=(\ov{c}_{\nu-1}, \ov{c}_{\nu})$, where all elements of
  $\ov{c}_{\nu-1}$ belong to $S_{\nu-1}(\ov{a}_j)$ and all elements of
  $\ov{c}_\nu$ belong to $S_\nu(\ov{a}_j)$.  Let $g'$ and $\lnot g'$
  be the non-negated and the negated input gate of $C$ that correspond
  to the bit $b'$ in $\Rep^\iota(\A,\ov{a}^{(0)})$ representing the
  information whether $(\ov{c}_{\nu-1},\pi_j(\ov{c}_\nu))$ belongs to
  $R^\A$.

  \begin{itemize}
  \item We replace gate $g$ by the new input gate $\nicht w_\nu$, and
    gate $g'$ by the input gate $w_\nu$.
  \item Accordingly, the negated input gates $\nicht g$ and
    $\nicht g'$ are replaced by the input gates $w_\nu$ and
    $\nicht w_\nu$.
  \end{itemize}
\end{enumerate}

\noindent
Afterwards, we consider all non-negated input gates $g$ of $C$ that
have not yet been replaced, and we let $b$ be the bit of
$\Rep^\iota(\A,\vec{a}^{(0)})$ that is inserted at input gate $g$.  We
replace $g$ by the constant gate $\textbf{1}$ if $b=1$, and by the
constant gate \textbf{0} if $b=0$. Accordingly, the negated input gate
$\nicht g$ is replaced by \textbf{0} if $b=1$, and by \textbf{1} if
$b=0$.

It is easy to see that the resulting circuit $\tilde{C}$ does the same
on input $w\in\set{0,1}^m$ as circuit $C$ does on input
$\Rep^\iota(\A_w,\vec{a}^{(0)})$. Furthermore, $\tilde{C}$ obviously
has the same depth as $C$, and the size of $\tilde{C}$ is smaller than
or equal to the size of $C$.

\bigskip

To finish the proof of Lemma~\ref{lemma:CircuitLemma1}, it remains to show that
$\A_{w}$ satisfies the property \eqref{prop:i-cong-t}.
For each $\nu \leq m$, we let $i_{\nu} := \abs{w_{1}\ldots w_{\nu}}_{1} \mymod t$. In particular, $i_{0} = 0$.
For each map $f$ and each subset $X$ of its domain, we let $\rela{f}{X}$ denote the restriction of $f$ to $X$.
By $\id{X}$ we denote the identity map on a set $X$. For each $\nu \leq m$, we let $\id{\nu} := \id{S^{\A_{w}}_{\nu}(\vec{a}^{(0)})}$.
Here we have extended our notation for shells which was defined above with respect to $\A$ to $\A_{w}$.
We write $g \circ f$ for the composition of maps $f : X \to Y$ and $g : Y \to Z$. 
Recall the maps $\pi_{1}, \ldots, \pi_{t}$ from the definition of $\A_{w}$
and observe that, since these maps have disjoint domains and images, their union $\pi$
is a well-defined map, and that this map is an isomorphism from
$\sphere[m]{\A}{\vec{a}^{(i)}}$ to $\sphere[m]{\A}{\vec{a}^{(i + 1 \mymod t)}}$,
for each $i\in [t]$. For each $\nu \leq m$, we let $\gamma_{\nu} := \rela{\pi}{N_{\nu}^{\A}(\vec a^{(0)})}$
(which is the same as $\rela{\pi}{N_{\nu}^{\A}(\vec a^{(i)})}$, for each $i\in [t]$).

By induction on $\nu \leq m$, we prove that $N_{\nu}^{\A_{w}}(\vec a^{(0)}) = N_{\nu}^{\A}(\vec a^{(i_{\nu})})$
and that there exists a map $\rho_{\nu}$ such that
  \begin{gather}
    \rho_{\nu} : \sphere[\nu]{\A_w}{\vec{a}^{(0)}} \ \isom \ \sphere[\nu]{\A}{\vec{a}^{(i_{\nu})}}.
    \label{prop:i-cong-t-induction}
  \end{gather}
  For $\nu = 0$, we let $\rho_{0} := \id{0}$. If $1 \leq \nu \leq m$, we let
  \[ \rho_{\nu} :=
  \begin{cases}
    \rho_{\nu-1} \union \id{\nu} & \text{ if $w_{\nu} = 0$,}\\
    (\gamma_{\nu-1} \circ \rho_{\nu-1}) \union \id{\nu} & \text{ if $w_{\nu} = 1$.}
  \end{cases}
  \]

  Before we show that $\rho_{\nu}$ satisfies \eqref{prop:i-cong-t-induction}, for each $\nu \leq m$,
  we show how to obtain the isomorphism $\rho$ from $(\A_w,\vec{a}^{(0)})$ to $(\A,\vec{a}^{(i_{m})})$
  required by \eqref{prop:i-cong-t} if $\rho_{\nu}$ satisfies \eqref{prop:i-cong-t-induction}.
  We let $\rho := \rho_{m} \union \id{A\setminus N^{\A_{w}}_{m}(\vec a^{(0)})}$, which is
  a bijection since $\rho_{m}$ is a bijection from $N^{\A_{w}}_{m}(\vec a^{(0)})$ to $N^{\A}_{m}(\vec a^{(i_{m})}) = N^{\A_{w}}_{m}(\vec a^{(0)})$.
  According to \eqref{prop:i-cong-t-induction}, the restriction of $\rho$ to $N^{\A_{w}}_{m}(\vec a^{(0)})$
  is an isomorphism. The restriction of $\rho$ to $A \setminus N^{\A_{w}}_{m}(\vec a^{(0)})$
  is an isomorphism since the definition of $\A_{w}$ (according to its case (i))
  includes a tuple of elements which all have distance greater than $m$ to $\vec a^{(0)}$
  in a relation of $\A_{w}$ iff it is included in the corresponding relation of $\A$.
  To verify that $\rho$ is an isomorphism from $(\A_w,\vec{a}^{(0)})$ to $(\A,\vec{a}^{(i_{m})})$, it suffices to show that it preserves the relations between elements from $S^{\A_{w}}_{m}$ and $S^{\A_{w}}_{m+1}$.
 This follows since, restricted to $S^{\A_{w}}_{m} \union S^{\A_{w}}_{m+1}$, the map $\rho$ is the identity.

 Now we prove the construction of $\rho_{\nu}$ correct. For $\nu = 0$, we have $i_{0} = 0$ and from the construction of $\A_{w}$, we see that
 $\sphere[0]{\A_w}{\vec{a}^{(0)}} = \sphere[0]{\A}{\vec{a}^{(0)}}$.
 Hence, $\id{0} : \sphere[\nu]{\A_w}{\vec{a}^{(0)}} \ \isom \ \sphere[\nu]{\A}{\vec{a}^{(i_{\nu})}}$.
Suppose now that $\nu\geq 1$. First, we consider the case that $w_{\nu} = 0$ and hence $i_{\nu} = i_{\nu -1}$.
By induction, $N_{\nu-1}^{\A_{w}}(\vec a^{(0)}) = N_{\nu-1}^{\A}(\vec a^{(i_{\nu})})$ and $\rho_{\nu-1} : \sphere[\nu-1]{\A_w}{\vec{a}^{(0)}} \ \isom \ \sphere[\nu-1]{\A}{\vec{a}^{(i_{\nu})}}$.
If $w_{\nu} = 0$,  case (ii) of the definition of $\A_{w}$
includes a tuple $\vec c$ with elements from $S^{\A_{w}}_{\nu-1} \union S^{\A_{w}}_{\nu}$
in a relation of $\A_{w}$ iff it is included in the corresponding relation in $\A$.
Hence, we obtain
$N_{\nu}^{\A_{w}}(\vec a^{(0)}) = N_{\nu}^{\A}(\vec a^{(i_{\nu})})$.
By definition of $\rho_{\nu}$ for $w_{\nu} = 0$, we have $\rela{\rho_{\nu}}{N^{\A_{w}}_{\nu-1}(\vec a^{(0)})} = \rho_{\nu-1}$
and hence $\rela{\rho_{\nu}}{S^{\A_{w}}_{\nu-1}(\vec{a}^{(0)})} = \id{\nu-1}$.
Since also $\rela{\rho_{\nu}}{S^{\A_{w}}_{\nu}(\vec{a}^{(0)})} = \id{\nu}$, 
we obtain $\rho_{\nu} : \sphere[\nu]{\A_w}{\vec{a}^{(0)}} \ \isom \ \sphere[\nu]{\A}{\vec{a}^{(i_{\nu})}}$.

Now consider the case that $w_{\nu} = 1$ and hence $i_{\nu} \equiv i_{\nu-1} +1 \mymod t$.
Since, by induction, \eqref{prop:i-cong-t-induction} holds for $\nu-1$,
we can actually apply $\gamma_{\nu-1}$ after $\rho_{\nu-1}$, and
hence $\rho_{\nu}$ is a well-defined bijection of $N_{\nu}^{\A_w}(\vec{a}^{(0)})$ and $N_{\nu}^{\A}(\vec{a}^{(i_{\nu})})$.
Furthermore, we have $\rho_{\nu-1}(\bar a_{j}) = \bar a_{(j+i_{\nu-1} \mymod t)}$, for each $j\in [t]$.
Since $\gamma_{\nu-1}$ is an isomorphism from $\sphere[\nu-1]{\A}{\vec{a}^{(i_{\nu-1})}}$ to $\sphere[\nu-1]{\A}{\vec{a}^{(i_{\nu})}}$, we have $\gamma_{\nu-1}(\bar a_{(j+i_{\nu-1} \mymod t)}) = \bar a_{(j+i_{\nu} \mymod t)}$.
Altogether, we obtain that $\rho_{\nu}$ maps each element of the tuple $\vec{a}^{(0)}$ to the element in the same position
of the tuple $\vec{a}^{(i_{\nu})}$, as required.
In order to show that \eqref{prop:i-cong-t-induction} holds for $\nu$, it remains to show for each relation $R\in \sigma$ of arity $r$ and each
tuple $\ov c \in A^{r}$ that $\ov c\in R^{\A_{w}}$ iff $\rho_{\nu}(\ov c)\in R^{\A}$.
If $\ov c$ contains only elements from $N^{\A_{w}}_{\nu-1}(\vec a^{(0)})$,
this follows since $\gamma_{\nu-1} \circ \rho_{\nu-1}$ is an isomorphism.
It remains to consider tuples $\ov{c}$ containing elements from $S^{\A_{w}}_{\nu}(\vec a^{(0)})$.
That is, $\ov{c}\in (S^{\A_{w}}_{\nu-1} \union S^{\A_{w}}_{\nu})^{r}$. As in the definition of $\A_{w}$, we assume for notational simplicity that $\ov{c} = (\ov c_{\nu-1}, \ov c_{\nu})$ where the elements of $\ov c_{\nu-1}$ and $\ov c_{\nu}$ belong to $S^{\A_{w}}_{\nu-1}$ and $S^{\A_{w}}_{\nu}$, respectively. Here, $\ov c_{\nu-1}$ could be the empty tuple.
By case (ii) of the definition of $\A_{w}$, we see that $\ov c\in R^{\A_{w}}$ iff  $(\ov c_{\nu-1},\pi^{-1}(\ov c_{\nu}))\in R^{A}$.
Since $\pi$ is, in particular, a partial isomorphism from $\A$ to $\A$ which is defined on all elements of $(\ov c_{\nu-1},\pi^{-1}(\ov c_{\nu}))$,
 the previous statement holds iff $\pi((\ov c_{\nu-1},\pi^{-1}(\ov c_{\nu}))) = (\pi(\ov c_{\nu-1}), \ov c_{\nu})\in R^{A}$.
 Since, as an isomorphism, $\pi$ preserves distances, the elements of $\pi(\ov c_{\nu-1})$ belong to $S^{\A}_{\nu-1}(\vec a^{(i)})$,
 for each $i\in [t]$. Since $\rela{\rho_{\nu-1}}{S^{\A_{w}}_{\nu-1}(\vec a^{(0)})} = \id{\nu-1}$ and $\rela{\rho_{\nu}}{S^{\A_{w}}_{\nu}(\vec a^{(0)})} = \id{\nu}$, 
 we have $\pi(\ov c_{\nu-1}) = \gamma_{\nu-1}(\rho_{\nu-1}(\ov c_{\nu-1})) = \rho_{\nu}(\ov c_{\nu-1})$
 and $\ov c_{\nu} = \rho_{\nu}(\ov c_{\nu})$.
 Hence, $(\pi(\ov c_{\nu-1}), \ov c_{\nu}) = (\rho_{\nu}(\ov c_{\nu-1}), \rho_{\nu}(\ov c_{\nu})) = \rho_{\nu}(\ov c)$
 and we are done.
\end{proof}

For the proof of Theorem~\ref{thm:CyclicShiftLocality}, we want to convert the circuit
$\tilde C$ of the previous lemma, without increasing its size too much, to a circuit which accepts exactly the bitstrings
that contain a number of ones which is divisible by $r$, where $r \geq 2$ is some factor of the number $t$ from the previous lemma.
\begin{lemma}\label{lemma:CircuitLemma2}
Let $m,d,M,t,p\in\NNpos$ with $m>9$ and $p,t\geq 2$ such that $p$ and $t$ are
coprime. 
Let $\tilde{C}$ be a $\MODp$-circuit of depth $d$ and size $M$ which 
has the property that for all words $w,w'\in\set{0,1}^m$ with
$\abs{w}_1\equiv |w'|_1\mymod t$, it accepts $w$ iff it accepts $w'$.
Furthermore, let $\tilde{C}$
accept all $w\in\set{0,1}^m$ with $\abs{w}_1\equiv 0 \mymod t$, and
reject all $w\in\set{0,1}^m$ with $\abs{w}_1\equiv 1 \mymod t$.

There is a $\MODp$-circuit $\hat{C}$ of depth $(d{+}6)$ and size $(tM {+} 2m^t)$ which,
for some factor $r\geq 2$ of $t$,
accepts exactly those bitstrings $w\in\set{0,1}^m$ where $\abs{w}_1\equiv
0 \mymod r$.
\end{lemma}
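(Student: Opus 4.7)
The plan is to extract, from the ``acceptance profile'' of $\tilde C$, a divisor $r\geq 2$ of $t$, and then to build $\hat C$ by running $t$ shifted copies of $\tilde C$ together with a constant-depth aggregator that recognises when the resulting pattern encodes $|w|_1\equiv 0\pmod r$.

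Define $f:[t]\to\{0,1\}$ by setting $f(i)$ to be the common value $\tilde C$ takes on inputs with $|w|_1\equiv i\pmod t$. By the hypotheses this is well-defined, $f(0)=1$, and $f(1)=0$. Let $H\leq\Z/t\Z$ be the stabiliser of $f$ under the cyclic shift action, $H=\{h\in [t] : f((i+h)\bmod t)=f(i) \text{ for all } i\}$. Since $f$ is non-constant, $H$ is a proper subgroup of the cyclic group $\Z/t\Z$, hence $H=r\Z/t\Z$ for some $r\geq 2$ with $r\mid t$. From the definition of $H$, for every $i$ one has $(f(i),f(i+1),\dots,f(i+t-1))=(f(0),f(1),\dots,f(t-1))$ iff $i\in H$ iff $i\equiv 0\pmod r$, so ``being a good pattern'' is detectable by a single AND of $t$ literals.

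For each $j\in[t]$ I would build a subcircuit $C_j$ which runs a fresh copy of $\tilde C$ on the string $w^{(j)}$ obtained from $w$ by flipping its first $j$ zeros to ones. Position by position, $w^{(j)}_k := w_k\vee\psi_{j,k}(w)$, where $\psi_{j,k}(w)$ expresses ``fewer than $j$ zeros occur in $w_1\cdots w_{k-1}$'', equivalently ``some $k-j$ positions in $[k-1]$ all carry a $1$''---a depth-$2$ OR-of-ANDs of size $\binom{k-1}{j-1}$; a dual construction yields $\neg w^{(j)}_k$ for the negated input gates of $\tilde C$. Whenever $|w|_0\geq j$ we indeed have $|w^{(j)}|_1=|w|_1+j$, so $a_j(w):=C_j(w)=f((|w|_1+j)\bmod t)$; in the ``generic'' regime $|w|_0\geq t-1$ this holds simultaneously for every $j\in[t]$, and the aggregator $\bigwedge_{j:f(j)=1}a_j\wedge\bigwedge_{j:f(j)=0}\neg a_j$ outputs exactly $1$ iff $|w|_1\equiv 0\pmod r$. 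The ``edge'' regime $|w|_0<t-1$, where the flip operation saturates (so $w^{(j)}=1^m$ once $j>|w|_0$) and breaks the shift correspondence, is handled by an auxiliary sub-circuit that tests ``$|w|_0=k$'' for each admissible $k<t-1$ via an OR over the $\binom{m}{k}$ sign patterns of $w$; since $|w|_1=m-k$ is then a constant, the correct answer is hardwired to be $1$ iff $m-k\equiv 0\pmod r$. A final AND/OR layer combines ``(generic aggregator) $\wedge$ ($|w|_0\geq t-1$)'' with the edge branch.

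Counting gates, $\hat C$ has size $tM$ (for the $t$ copies of $\tilde C$), plus $\sum_{j=1}^{t-1}\bigl(\binom{m}{j}+O(m)\bigr)=O(m^{t-1})$ for the modifications (by hockey-stick), plus $O(t)$ for the aggregator, plus $\sum_{k=0}^{t-2}\binom{m}{k}=O(m^{t-1})$ for the edge-detection sub-circuit. Since $m>9$ and $t\geq 2$ yield $2^t\leq m^{t-1}$, the total is well within $tM+2m^{t}$. The depth is $2$ (modifications) $+\;d$ ($\tilde C$) $+\;1$ (aggregator) $+\;2$ (combination with the edge branch) $\leq d+6$, as required. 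The principal obstacle in the proof is precisely the handling of the edge regime: without it, the identity $a_j(w)=f((|w|_1+j)\bmod t)$ would fail on inputs of very high Hamming weight and the aggregator would produce the wrong answer there. Note that the coprimality hypothesis on $p$ and $t$ is not needed for this construction itself---$r\geq 2$ follows solely from $f(0)\neq f(1)$---it is only required later, when Smolensky's lower bound (Theorem~\ref{thm:smolensky}) is applied to the resulting $\MODp$-circuit computing $\MOD{r}$.
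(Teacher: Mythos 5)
Your proof is correct and follows essentially the same construction as the paper's: you run $t$ shifted copies of $\tilde C$ on $w$ with the first $j$ zeros flipped, match the resulting acceptance pattern against the baseline pattern, and hard-wire the boundary cases $|w|_0<t{-}1$. The one stylistic variation is that you obtain the divisor $r$ as the smallest positive element of the stabiliser subgroup of the acceptance profile $f$ under cyclic shifts, which is an equivalent (and arguably cleaner) route to the paper's word-combinatorics argument (via \cite{Lothaire}) that the pattern string $b$ is a power of a primitive word of length $r$.
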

\begin{proof}
We let $b=
b_0b_1\cdots b_{t-1}$ be the bitstring of length $t$ where, for every
$j\in[t]$ we have
$b_j=1$ iff $\tilde{C}$ accepts bitstrings $w\in\set{0,1}^m$ with
$\abs{w}_1\equiv j \mymod t$.

For a bitstring $w\in\set{0,1}^m$ with $\abs{w}_0\geq t{-}1$, we let
$\pat(w) = a_0 a_1\cdots a_{t-1} \in \set{0,1}^t$ with 
$a_j=1$ iff $\tilde{C}$ accepts the bitstring obtained from $w$ by
replacing the first $j$ zeros with ones.
Note that if $\abs{w}_1\equiv i\mymod t$, then 
$\pat(w) = b_i b_{i+1} \cdots b_{t-1} b_0 \cdots b_{i-1}$.

\begin{claim} 
 There is a factor $r\geq 2$ of $t$ such that for all
 $w\in\set{0,1}^m$ with $\abs{w}_0\geq t{-}1$ we have: \ 
 $\pat(w)=b \iff \abs{w}_1\equiv 0\mymod r$.
\end{claim}
\begin{proof}
  Obviously, $\pat(w)=b$ is true for all $w\in\set{0,1}^{m}$ with $\abs{w}_{0} \geq t-1$ and $\abs{w}_{1} \equiv 0\mymod t$.
  In case that for all $w\in \set{0,1}^{m}$ with $\abs{w}_{0} \geq t-1$ we have
  \[ \pat(w)=b \ \iff \ \abs{w}_1\equiv 0\mymod t,\] we are done by choosing
$r:=t$.

In case that there is a $w\in \set{0,1}^{m}$ with $\abs{w}_{0} \geq t-1$,  $\pat(w)=b$, and $\abs{w}_1\equiv i \mymod
t$ for an $i\in\set{1,\ldots,t{-}1}$, we know that 
\ $
   b_0 b_1 \cdots b_{t-1} 
    = 
   b_i b_{i+1} \cdots b_{t-1} b_0 \cdots b_{i-1}.
$ \ 
Thus, for $x:= b_0 \cdots b_{i-1}$ and $y:= b_i\cdots b_{t-1}$ we have
$b=xy=yx$, and $x,y\in\set{0,1}^+$.

A basic result in word combinatorics (see Proposition~1.3.2 in
\cite{Lothaire}) states that
two words $x,y\in \set{0,1}^+$ commute (i.e., $xy=yx$) iff they are
powers of the same word (i.e., there is a $z\in\set{0,1}^+$ and
$\nu,\mu\in\NNpos$ such that $x=z^\nu$ and $y=z^\mu$).  
We choose $z\in\set{0,1}^+$ of minimal length such that $b=z^s$ for
some $s\in \NN$. Clearly, $|z|\geq 2$, since by assumption we have
$b_0b_1=10$. 

Since $z$ is of minimal length, it is straightforward to see that
for every  $w\in\set{0,1}^m$ with $\abs{w}_0\geq t{-}1$ we have: \
$\pat(w)=z^s \iff \abs{w}_1\equiv 0\mymod |z|$.
\end{proof}

\noindent
We choose $r$ according to the claim. Obviously, the following is true
for every $w\in\set{0,1}^m$: 
\[
 \abs{w}_1\equiv 0\mymod r \ \iff \left\{
\begin{array}{ll}
  (1) & \text{$\abs{w}_0\geq t{-}1$ \ and \ $\pat(w)=b$, \ or}\\
  (2) & \text{there is a $j\in[t{-}1]$ with $m{-}j\equiv 0\mymod r$ such that $\abs{w}_0=j$.}
\end{array} \right.
\]
To complete the proof of Lemma~\ref{lemma:CircuitLemma2},
it therefore suffices to construct circuits $C_{(1)}$ and $C_{(2)}$
testing for (1) and (2), respectively, 
and to let $\hat{C}$ be the disjunction of $C_{(1)}$ and $C_{(2)}$. For constructing these circuits, note the following:
\begin{itemize}
\item
  For each $j\leq m$, a circuit $C_j$ of depth 2 and size $\leq
  m^j$ which accepts an input $w\in\set{0,1}^m$ iff $\abs{w}_0=j$ can be
  realised via $\Oder_{I\subseteq \set{1,\ldots,m} \atop |I|=j} \big(
   \Und_{i\in I} \nicht w_i \ \und \
   \Und_{i\in\set{1,\ldots,m}\setminus I} w_i
  \big)$.

\item
  The circuit $C_{(2)}$ can be realised via $\Oder_{j\in [t-1]
    \atop j\equiv 0\mymod r} C_j$. In particular, $C_{(2)}$ has depth
  2 and size $< m^t$.

\item
  A circuit $C_{\geq t{-}1}$ of depth 2 and size $\leq m^{t-1}$ which accepts an input $w\in\set{0,1}^m$
  iff $\abs{w}_0\geq t{-}1$ can be realised via $\Oder_{I\subseteq
    \set{1,\ldots,m} \atop |I|=t-1} \Und_{i\in I} \nicht w_i$.

\item
  It is not difficult to construct, for each $j\in[t]$ a
  circuit $C^{(j)}$ with
   $2m$ output gates $w^{(j)}_1,\nicht
   w^{(j)}_1,\ldots,\allowbreak w^{(j)}_m,\nicht w^{(j)}_m$, such that on input of
  a $w\in\set{0,1}^m$ with $\abs{w}_0\geq t{-}1$, this circuit produces at
  its output gates $w^{(j)}_1,\ldots,w^{(j)}_m$ the string
  $w^{(j)}=w^{(j)}_1\cdots w^{(j)}_m$ obtained from $w$ 
  by replacing the first $j$ zeros with ones. 
  For $j\geq 1$ note that $w^{(j)}_i$ can be expressed as
  \[
     w_i
     \ \oder \ 
     \Big(
       \nicht w_i \ \ \und \Oder_{I\subseteq \set{1,\ldots,i-1} \atop |I|<j}
        \big(
          \Und_{i'\in I}\nicht w_{i'} 
          \ \ \und 
          \Und_{i'\in \set{1,\ldots,i-1}\setminus I} \!\!\! w_{i'}
        \big)
     \Big).
  \]
  Thus, the circuit $C^{(j)}$ has depth 4 and size $2m(3+m^{j-1})\leq 8m^{j}$.

\item
  For each $j\in [s]$ let $\tilde{C}^{(j)}$ be the concatenation
  of $C^{(j)}$ and $\tilde{C}$. Note that on input of a bitstring
  $w\in\set{0,1}^m$ with $\abs{w}_0\geq t{-}1$, the circuit
  $\tilde{C}^{(j)}$ computes, at its output gate, the   
  letter $a_j$ of $\pat(w)=a_0\cdots a_{t-1}$.
  Furthermore, $\tilde{C}^{(j)}$ has depth $(4{+}d)$ and size $\leq 8m^j
  + M$.

\item
  The circuit $C_{(1)}$ can now be realised via
  \[
     C_{\geq t-1} \ \ \und \ \ 
     \Und_{j\in [t]\atop b_j=1} \tilde{C}^{(j)} 
     \ \ \und \ \ 
     \Und_{j\in [t] \atop b_j=0} \nicht\tilde{C}^{(j)}.
  \] 
  This circuit has depth $(d{+}5)$ and size $\leq
  1+m^{t-1}+\sum_{j=0}^{t-1}(8m^j+M) \leq tM+m^t$ (for $m>9$).

\item
  Finally, the circuit $\hat{C}$ is realised via $C_{(1)}\oder
  C_{(2)}$.
  Thus, $\hat{C}$ has depth $(d{+}6)$ and size $\leq tM+2m^t$.
\end{itemize}

\end{proof}

\noindent
We are now ready for the proof of Theorem~\ref{thm:CyclicShiftLocality}.

\begin{proofc}{Theorem~\ref{thm:CyclicShiftLocality}} \ 
Let $q$ be a $kt$-ary query defined on $\Class$ by an $\arbinvFOMODpC(\sigma)$-formula
$\varphi(\ov{x}_0,\ldots,\ov{x}_{t-1})$, where $\ov{x}_i$ is a
$k$-tuple of variables, for each $i\in [t]$.
By Theorem~\ref{thm:formula2circuit}, there exist
numbers $d,s\in\NN$ such that for every $n\in\NNpos$ there is a
$\MODp$-circuit $C_{n}$ with $\lambda^\sigma_{kt}(n)$ input bits, depth $d$, and size
$n^s$ such that the following
is true for all $\sigma$-structures $\A\in\Class$ with $|A|=n$, all $k$-tuples
$\ov{a}_0,\ldots,\ov{a}_{t-1}\in A^k$, 
and all embeddings $\iota$ of $\A$ into $[n]$: 
\begin{equation*}
 C_{n} \text{ accepts } \Rep^\iota(\A,\ov{a}_0,\ldots,\ov{a}_{t-1}) 
 \ \iff \ 
 \A^\iota \models\varphi[\ov{a}_0,\ldots,\ov{a}_{t-1}]
 \ \iff \ 
 \A\models\varphi[\ov{a}_0,\ldots,\ov{a}_{t-1}].
\end{equation*}

\noindent
For contradiction, assume that for every $c\in\NN$ the query $q$ 
is \emph{not} shift $(\log n)^c$-local w.r.t.\ $t$ on $\Class$. 
Thus, in particular for $c:=2\ell (d{+}6)$ (with $\ell$ chosen as in Theorem~\ref{thm:smolensky}), 
we obtain that for all $n_0\in\NN$ there is an $n\geq n_0$, and
a $\sigma$-structure $\A\in\Class$ with $|A|=n$, and $k$-tuples
$\ov{a}_0,\ldots,\ov{a}_{t-1}\in A^k$ such that  
for $m:=(\log n)^c = (\log n)^{2\ell (d+6)}$ we have:
\begin{itemize}
 \item
  $(\NS_m^\A(\ov{a}_i),\ov{a}_i)\isom (\NS_m^\A(\ov{a}_j),\ov{a}_j)$
  \ and \ 
  $N_m^\A(\ov{a}_i) \cap N_m^\A(\ov{a}_j)=\emptyset$ \ 
  for all $i,j\in [t]$ with $i\neq j$, \ and
 \item
  $\A\models\varphi[\ov{a}_0,\ov{a}_1,\ldots,\ov{a}_{t-1}]$ \ and \ 
  $\A\not\models\varphi[\ov{a}_1,\ldots,\ov{a}_{t-1},\ov{a}_0]$.
 \end{itemize}

\noindent
We fix $n\in\NN$ sufficiently large such that, for $\hat{d}:=(d{+}6)$ and
$\varepsilon$ and $m_{\hat{d}}$ chosen as in
Theorem~\ref{thm:smolensky}, we have for $m=(\log n)^c$ that $m>9$, $m\geq m_{\hat{d}}$, and 
$n^{\varepsilon \log n}> tn^s + 2(\log n)^{ct}$.

Clearly, $C_n$ is a $\MODp$-circuit with $\lambda^\sigma_{kt}(n)$
input bits which, for every $i\in [t]$ and all embeddings $\iota_1$
and $\iota_2$ of $\A$ accepts $\Rep^{\iota_1}(\A,\vec{a}^{(i)})$ iff it
accepts $\Rep^{\iota_2}(\A,\vec{a}^{(i)})$. 
Furthermore, $C_n$ accepts $\Rep^{\iota}(\A,\vec{a}^{(0)})$ and rejects
$\Rep^{\iota}(\A,\vec{a}^{(1)})$, for every embedding $\iota$ of $\A$.
Thus, from Lemma~\ref{lemma:CircuitLemma1} we obtain a $\MODp$-circuit
$\tilde{C}$ on $m$ input bits, with depth $d$ and
size $n^s$, such that $\tilde{C}$ has the property that for all words
$w,w'\in\set{0,1}^m$ with $\abs{w}_1\equiv |w'|_1\mymod t$, it accepts $w$
iff it accepts $w'$. Furthermore, $\tilde{C}$ accepts all
$w\in\set{0,1}^m$ with $\abs{w}\equiv 0\mymod t$ and rejects all
$w\in\set{0,1}^m$ with $\abs{w}\equiv 1\mymod t$.
From Lemma~\ref{lemma:CircuitLemma2}, we therefore obtain a $\MODp$-circuit
$\hat{C}$ of depth $\hat{d}:=(d{+}6)$ and size $(tn^s+2m^t)=
(tn^s+2(\log n)^{ct})$ which,
for some factor $r\geq 2$ of $t$, accepts exactly those bitstrings
$w\in\set{0,1}^m$ where $\abs{w}_1\equiv 0\mymod r$.

Since $p$ and $t$ are coprime by assumption, and $r\geq 2$ is a factor
of $t$, we know that $r$ has a prime factor different from $p$'s
prime factor.
Thus,
from Theorem~\ref{thm:smolensky} (for $\varepsilon,\ell,m_{\hat{d}}$ chosen
as in Theorem~\ref{thm:smolensky}, and for $m\geq m_{\hat{d}}$) we
know that the size $(tn^s+2(\log n)^{ct})$ of
$\hat{C}$ must be bigger than $2^{\varepsilon
  \sqrt[\ell \hat{d}]{m}}$.
However, we chose $m= (\log n)^c=(\log n)^{2\ell \hat{d}}$, and hence
$2^{\varepsilon \sqrt[\ell \hat{d}]{m}} =2^{\varepsilon\cdot (\log n)^2} = n^{\varepsilon\cdot
  \log n} > tn^s+ 2(\log n)^{ct}$ for all sufficiently large $n$ --- a contradiction!
Thus, the proof of
Theorem~\ref{thm:CyclicShiftLocality} is complete.
\end{proofc}

\subsection{Applications}\label{subsection:Application}

In the same way as Gaifman locality (cf., e.g., \cite{Libkin2004}), also
shift locality can be used for showing that certain queries are not
expressible in particular logics.  The first example query we consider
here is the reachability query $\textit{reach}$ which associates, with
every finite directed graph $\A=(A,E^\A)$, the relation \ $
\textit{reach}(\A) := \setc{(a,b)}{\text{$\A$ contains a directed path
    from node $a$ to node $b$}}.  $

\begin{proposition}\label{prop:reach}
  Let $\sigma=\set{E}$ consist of a binary relation symbol $E$.  Let
  $p,t\in \NN$ with $p,t\geq 2$ be such that every $t$-ary query $q$
  definable in $\arbinvFOMODp(\sigma)$ is shift $f_q(n)$-local w.r.t.\
  $t$, for a function $f_q:\NN\to\NN$ where $f_q(n)\leq
  (\frac{n}{2t}{-}\frac{1}{2})$ \ for all sufficiently large $n$.
  Then, the reachability query is not definable in
  $\arbinvFOMODp(\sigma)$.
\end{proposition}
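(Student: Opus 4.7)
The plan is to assume for contradiction that reach is defined by an
$\arbinvFOMODp(\sigma)$-formula $\varphi_{\textit{reach}}(x,y)$, and to use
it to construct a $t$-ary $\arbinvFOMODp(\sigma)$-query which, on long
directed paths, violates shift $f(n)$-locality w.r.t.\ $t$ for every
$f$ with $f(n) \leq \frac{n}{2t}-\frac{1}{2}$ for all sufficiently large $n$.

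First, I would form the chain formula
\[
  \psi(x_0,\ldots,x_{t-1}) \ := \ \Und_{0\leq i<t-1} \varphi_{\textit{reach}}(x_i,x_{i+1}),
\]
which is in $\arbinvFOMODp(\sigma)$ and defines a $t$-ary query
$q_\psi$. By the hypothesis of the proposition, $q_\psi$ would then be shift
$f(n)$-local w.r.t.\ $t$ for some $f$ with $f(n)\leq \frac{n}{2t}-\frac{1}{2}$
for all sufficiently large $n$.

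The counter-example would live on the directed path $P_n$ with universe
$\set{0,\ldots,n{-}1}$ and edge relation $\setc{(i,i{+}1)}{i<n{-}1}$.
For large $n$ I would pick the singleton tuples $\ov a_i := (a_i)$
for $i\in [t]$, where $a_i := f(n)+i\,(2f(n){+}1)$. A short calculation
shows that the bound $f(n)\leq \frac{n}{2t}-\frac{1}{2}$ is precisely what is
needed to guarantee $a_{t-1}+f(n)\leq n{-}1$, so that all $f(n)$-balls fit
inside $P_n$; since consecutive $a_i$ are at distance $2f(n){+}1$ in the
Gaifman graph, the balls are pairwise disjoint, and each induces a directed
path on $2f(n){+}1$ vertices with the distinguished element at its centre,
so the $f(n)$-neighbourhoods are pairwise isomorphic.

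Finally, I would observe that on $P_n$ the reachability relation is the
order $\leq$ on $\set{0,\ldots,n{-}1}$. Since $a_0<a_1<\cdots<a_{t-1}$,
the tuple $(\ov a_0,\ldots,\ov a_{t-1})$ lies in $q_\psi(P_n)$, whereas
the cyclically shifted tuple $(\ov a_1,\ldots,\ov a_{t-1},\ov a_0)$ does not,
because its last conjunct demands a directed path from $a_{t-1}$ to $a_0$
in $P_n$, and none exists. This contradicts shift $f(n)$-locality of
$q_\psi$ w.r.t.\ $t$ and completes the argument.
There is no serious obstacle: the construction is essentially forced, and
the numerical bound in the hypothesis was calibrated precisely to the
arithmetic of fitting $t$ pairwise disjoint $f(n)$-balls inside a directed
path on $n$ vertices.
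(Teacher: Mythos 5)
Your proposal is correct and follows essentially the same approach as the paper's own proof: form the chain formula $\psi$ from the assumed $\arbinvFOMODp$-definition of reachability, place $t$ equally spaced tuples on a long directed path so that their $f(n)$-neighbourhoods are pairwise disjoint and isomorphic, and observe that shifting the tuple cyclically flips the truth value of $\psi$ (the paper parameterises by $\ell$ and takes $n = t(2\ell+1)$ with $a_i = v_{i(2\ell+1)+\ell+1}$, while you parameterise by $n$ directly — the arithmetic comes out the same). One minor wording issue: the hypothesis of the proposition already grants shift $f_q(n)$-locality for \emph{some} bounded $f_q$, so you should phrase it as ``by hypothesis $q_\psi$ is shift $f_q(n)$-local'' rather than it ``would then be''; this does not affect the substance of the argument.
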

\begin{proof}
  Assume, for contradiction, that $\textit{reach}$ is definable by an
  $\arbinvFOMODp(\sigma)$-formula $\varrho(x,y)$.  Then, \ $
  \psi(x_0,\ldots,x_{t-1}) := \ \varrho(x_0,x_1) \,\und\,
  \varrho(x_1,x_2)\, \und \cdots \und\, \varrho(x_{t-2},x_{t-1}) $ \
  is an $\arbinv$\allowbreak$\FOMODp(\sigma)$-formula expressing in a
  finite directed graph $\A$, that there is a directed path from node $x_i$ to
  node $x_{i+1}$, for every $i\in [t{-}1]$.  Let $q$ be the $t$-ary
  query defined by $\psi(x_0,\ldots,x_{t-1})$.  By assumption, this
  query is shift $f_q(n)$-local w.r.t.\ $t$, for a function $f_q$ with
  $f_{q}(n)\leq \frac{n}{2t}{-}\frac{1}{2}$ for all sufficiently large
  $n$.

  Now, consider for each $\ell\in\NNpos$ the graph $\A_\ell$
  consisting of a single directed path $v_1\to v_2 \to\cdots \to
  v_{t(2\ell+1)}$ on $t{\cdot}(2\ell{+}1)$ nodes.  For each $i\in [t]$
  let $a_i:=v_{i(2\ell+1)+(\ell+1)}$. Then, the $\ell$-neighbourhoods
  of the $a_i$, for $i\in [t]$, are pairwise disjoint and isomorphic.
  The cardinality of $\A_\ell$ is $n:=t{\cdot}(2\ell{+}1)$, and thus
  $\ell=\frac{n}{2t}{-}\frac{1}{2}\geq f_{q}(n)$.  Since $q$ is shift
  $f_q(n)$-local w.r.t.\ $t$, we obtain that
  $\A_\ell\models\psi[a_0,a_1,\ldots,a_{t-1}] \iff
  \A_\ell\models\psi[a_1,\ldots,a_{t-1},a_0]$.  But in $\A_\ell$ there
  is a directed path from $a_i$ to $a_{i+1}$ for every $i\in[t{-}1]$,
  but there is no directed path from $a_{t-1}$ to $a_0$. According to
  the choice of $\psi$, we have that
  $\A_\ell\models\psi[a_0,a_1,\ldots,a_{t-1}]$ but
  $\A_\ell\not\models\psi[a_1,\ldots,a_{t-1},a_0]$ --- a
  contradiction!
\end{proof}

As an immediate consequence of Proposition~\ref{prop:reach} and
Theorem~\ref{thm:CyclicShiftLocality} we obtain (the known fact) that
the reachability query is not definable in $\arbinvFOMODp(\set{E})$,
for any prime power $p$. Proposition~\ref{prop:reach} also shows why it can be expected to be difficult to generalise Theorem~\ref{thm:weakGaifmanLocality} and
Theorem~\ref{thm:CyclicShiftLocality} from prime powers $p$ to
arbitrary numbers $p\geq 2$:
\begin{remark}\label{remark:circuitcomplexity}
  Assume, we could generalise Theorem~\ref{thm:CyclicShiftLocality}
  from prime powers $p$ to arbitrary numbers $p\geq 2$.  By
  Proposition~\ref{prop:reach}, we would then obtain that the
  reachability query is not definable in $\arbinvFOMODp(\set{E})$, for
  any $p\in\NN$ with $p\geq 2$.  The ``opposite direction'' of
  Theorem~\ref{thm:formula2circuit}, obtained in \cite{BIS90}, would
  then tell us that the reachability query is not computable in
  $\ACCO$. Here, $\ACCO=\bigcup_{p\geq 2}\ACO[p]$, where $\ACO[p]$ is
  the class of all problems computable by a family of constant depth,
  polynomial size $\MODp$-circuits.  Since the reachability query can
  be computed in nondeterministic logarithmic space, we would thus
  obtain that $\NLOGSPACE\not\subseteq \ACCO$.  This would
  constitute a major breakthrough in computational complexity: The
  current state-of-the-art (see \cite{Williams-survey} for a recent
  survey) states that $\NEXP \not \subseteq \ACCO$, but does not know
  a problem in $\PTIME$ that provably does not belong to
  $\ACCO$.

  Similarly, a generalisation of Theorem~\ref{thm:weakGaifmanLocality}
  to all odd numbers $p$ would imply that the reachability query is
  not definable in $\ACO[p]$, for any odd number $p$. Also this is
  currently not known.
\end{remark}
The fact that $\FO$-definable queries are Gaifman local
simplifies and unifies many non-expressbility results
for $\FO$. In fact, to someone acquainted with Gaifman locality, the proof
of a non-expressibility result by a locality argument can often be communicated in a simple
visual way, while the complicated combinatorial arguments in game based non-expressibility results are hidden
within the proof of Gaifman locality for $\FO$.
We give some further examples of well-known queries which have already been used in a similar context in the literature on locality, and we show that shift locality
can also be used to obtain non-expressibility of these queries in $\arbinvFOMODp(\set{E})$,
for prime powers $p$.
For none of these examples, the fact that they are not expressible in
$\arbinvFOMODp(\set{E})$, for any prime power $p$,
is new. 
That is, all examples could be proved directly using Smolensky's theorem.
But we believe that our examples show that the notion of shift locality serves
a similar purpose as Gaifman locality.
That is, the locality argument simplifies and unifies the proofs, i.e. all details of the necessary reductions are done in Theorem~\ref{thm:CyclicShiftLocality}.

Using similar constructions as in the proof of Proposition~\ref{prop:reach}, we show that none of the following
queries is definable in $\arbinvFOMODp(\set{E})$, for any prime power
$p$:\begin{itemize}\item $\textit{cycle}(\A) := \setc{a\in A}{\text{$a$ is a node that
      lies on a cycle of the graph $\A=(A,E^\A)$}}$,
\item $\textit{triangle-reach}(\A) := \setc{a\in A}{\text{$a$ is
      reachable from a triangle in the graph $\A=(A,E^\A)$}}$,
\item $\textit{same-distance}(\A) := \setc{(a,b,c)\in
    A^3}{\dist^\A(a,c)=\dist^\A(b,c)}$.
\end{itemize}In the literature, variants of these queries have previously served as examples of queries which are not Gaifman local.
In the proofs below, we will use the following notion.  Given a
directed graph $\A = (A, E^{\A})$ and an edge $e=(u,v)\in E^{\A}$, the
\emph{$\ell$-fold subdivision} of $(u,v)$ replaces the edge $(u,v)$ by
a path $u \to u_{e,1} \to \cdots \to u_{e,\ell} \to v$, where
$u_{e,1}, \ldots, u_{e,\ell} \notin A$ are new nodes introduced for
subdividing $e$.  This notion extends naturally to edge sets instead
of single edges: The $\ell$-fold subdivision of a set $E$ of directed
edges is obtained by replacing each edge $(u,v)\in E$ by a path $u\to
u_{e,1} \to \cdots \to u_{e,\ell} \to v$, where
$u_{e,1},\ldots,u_{e,\ell}$ are new nodes introduced for subdividing
the edge $e$.

\medskip

\begin{proposition}\label{prop:cycle} \ \\
  For any prime power $p$, the $\textit{cycle}$ query is not definable
  in $\arbinvFOMODp(\set{E})$.
\end{proposition}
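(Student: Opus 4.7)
The plan is to mirror the proof of Proposition~\ref{prop:reach}, using Theorem~\ref{thm:CyclicShiftLocality} to derive a contradiction. Since $p$ is a prime power, we can pick some $t\geq 2$ coprime with $p$ (for instance, $t:=2$ if $p$'s prime factor is odd, and $t:=3$ otherwise). Assume, for contradiction, that the query $\textit{cycle}$ is defined by an $\arbinvFOMODp(\set{E})$-formula $\varphi_{\textit{cycle}}(x)$. Then
\[
\psi(x_0,\ldots,x_{t-1}) \ := \ \varphi_{\textit{cycle}}(x_0) \ \und \ \Und_{1\leq i<t}\nicht\varphi_{\textit{cycle}}(x_i)
\]
is an $\arbinvFOMODp(\set{E})$-formula defining a $t$-ary query $q$. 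By Theorem~\ref{thm:CyclicShiftLocality} there is a $c\in\NN$ such that $q$ is shift $(\log n)^{c}$-local w.r.t.\ $t$.

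Next, for each $\ell\in\NNpos$ I would take $\A_\ell$ to be the disjoint union of a single directed cycle of length $2\ell{+}2$ and $t{-}1$ further directed paths, each of length $2\ell{+}1$; pick $a_0$ to be an arbitrary node on the cycle, and pick $a_i$ to be the middle node of the $i$-th path for each $i\in\set{1,\ldots,t-1}$. A routine check shows that the $\ell$-neighbourhood of each $a_i$ is isomorphic, as a pointed structure, to a directed path of $2\ell{+}1$ nodes with the distinguished element in the middle; for $i=0$ this relies on the cycle having length strictly greater than $2\ell{+}1$, so that the $\ell$-ball around $a_0$ does not wrap all the way around and hence induces a single directed path rather than a cycle. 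Since the cycle and paths form distinct connected components, the $\ell$-neighbourhoods of $a_0,\ldots,a_{t-1}$ are pairwise disjoint.

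Since $a_0$ lies on a cycle of $\A_\ell$ whereas none of $a_1,\ldots,a_{t-1}$ does, we obtain $\A_\ell\models\psi[a_0,a_1,\ldots,a_{t-1}]$ while $\A_\ell\not\models\psi[a_1,\ldots,a_{t-1},a_0]$, as $\varphi_{\textit{cycle}}(a_1)$ fails. With $n:=|A_\ell|=t(2\ell{+}1){+}1$, we have $\ell=\Theta(n/t)$, so $\ell>(\log n)^{c}$ for all sufficiently large $n$, contradicting the shift $(\log n)^{c}$-locality of $q$. I do not anticipate a genuine obstacle; the only detail that takes a moment to check is the choice of cycle length $2\ell{+}2$ rather than $2\ell{+}1$, which ensures the $\ell$-ball around $a_0$ is a directed path and thus matches the pointed neighbourhoods of $a_1,\ldots,a_{t-1}$.
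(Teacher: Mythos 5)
Your proof is correct and follows essentially the same approach as the paper's: derive a $t$-ary query from $\varphi_{\textit{cycle}}$, apply Theorem~\ref{thm:CyclicShiftLocality}, and contradict shift locality with a graph built from one cycle (containing $a_0$) and paths (containing $a_1,\ldots,a_{t-1}$) whose relevant balls are pairwise disjoint and isomorphic. Your minor variations --- padding $\psi$ with $\nicht\varphi_{\textit{cycle}}(x_i)$ rather than $x_i{=}x_i$, and using $t{-}1$ separate paths rather than the paper's single subdivided path --- are inessential.
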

\begin{proof}
  Let $t\in \NN$ with $t\geq 2$ such that $t$ and $p$ are coprime
  (e.g., $t=2$ if $p$ is odd, and $t=3$ if $p$ is even).  Assume, for
  contradiction, that the \textit{cycle} query is definable by an
  $\arbinvFOMODp(\set{E})$-formula $\varrho(x)$.  Let
  $x_0,\ldots,x_{t-1}$ be first-order variables that do not occur in
  $\varrho(x)$ and let
  \[
  \psi(x_0,\ldots,x_{t-1}) \ := \ \ \varrho(x_0) \ \und \Und_{1\leq
    i<t} x_i{=}x_i.
  \]
  Let $q$ be the $t$-ary query defined by $\psi(x_0,\ldots,x_{t-1})$.
  By Theorem~\ref{thm:CyclicShiftLocality}, the query $q$ is shift
  $(\log n)^{c}$-local, for some $c\in \NN$.
      \parpic[r]{\includegraphics{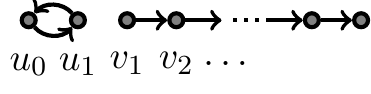}}
        Let $\A_{0}$ be a directed graph which is a disjoint union of a
  directed cycle $u_{0} \to u_{1} \to u_{0}$ and a directed path
  $v_{0} \to v_{1} \to \cdots \to v_{t}$.
       \parpic[l]{\includegraphics{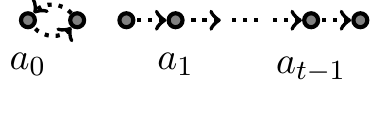}}
      
  \noindent For each $\ell \in \NNpos$, let $\A_{\ell}$ be obtained by the
  $\ell$-fold subdivision of the edge set of $\A_{0}$. In particular,
  $\A_\ell$ has cardinality $n_\ell=t{+}3 + (t{+}2)\ell$.  Let
  $a_{0}:=u_{0}$, and for each $i\in \set{1, \ldots, t{-}1}$, let
  $a_{i}:=v_{i}$. For sufficiently large values of $\ell$, the $(\log
  n_\ell)^c$-neighbourhoods of the nodes $a_0, \ldots, a_{t-1}$ in
  $\A_{\ell}$ are pairwise disjoint and isomorphic, each of them being
  isomorphic to a directed path of length $2(\log n_\ell)^c{+}1$ with
  node $a_i$ in the middle.
  
  Since $q$ is shift $(\log n)^c$-local w.r.t.\ $t$, we obtain for all
  sufficiently large $\ell$ that
  \[
  \A_\ell\models\psi[a_0,a_1,\ldots,a_{t-1}] \iff
  \A_\ell\models\psi[a_1,\ldots,a_{t-1},a_0].
  \]
  However, $a_{0}\in \textit{cycle}(\A_{\ell})$, because $a_{0}$ lies
  on a cycle, and $a_{1}\notin \textit{cycle}(\A_{\ell})$, because
  $a_{1}$ does not lie on a cycle. That is, $\A_\ell\models\varrho[a_0]$
  and $\A_\ell\not\models\varrho[a_1]$.  Thus, according to the choice
  of the formula $\psi$, we have that
  $\A_\ell\models\psi[a_0,a_1,\ldots,a_{t-1}]$, but
  $\A_\ell\not\models\psi[a_1,\ldots,a_{t-1},a_0]$, which is a
  contradiction.
\end{proof}
\begin{proposition}\label{prop:triangle-reach} \ \\
  For any prime power $p$, the $\textit{triangle-reach}$ query is not
  definable in $\arbinvFOMODp(\set{E})$.
\end{proposition}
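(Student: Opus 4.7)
The plan is to adapt the argument of Proposition~\ref{prop:cycle} verbatim, changing only the seed graph so that one of the marked vertices is reachable from a triangle while the others are not. Fix $t\geq 2$ coprime with $p$ (take $t=2$ if $p$ is odd and $t=3$ if $p$ is a power of $2$), assume for contradiction that $\textit{triangle-reach}$ is defined by some $\arbinvFOMODp(\set{E})$-formula $\varrho(x)$, and consider the $t$-ary query $q$ defined by
\[
  \psi(x_0,\ldots,x_{t-1}) \ := \ \varrho(x_0) \, \und \Und_{1\leq i<t} x_i{=}x_i.
\]
Theorem~\ref{thm:CyclicShiftLocality} then gives a $c\in\NN$ such that $q$ is shift $(\log n)^c$-local w.r.t.\ $t$, and it suffices to exhibit a family of counter-examples to this locality.

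For the seed graph $\A_0$, I will take the disjoint union of (i) a ``marked'' component consisting of a triangle $u_0\to u_1\to u_2\to u_0$ together with the attached path $u_0\to v^{(0)}_0\to v^{(0)}_1\to v^{(0)}_2$ and (ii) for each $i\in\set{1,\ldots,t{-}1}$ a disjoint directed path $v^{(i)}_0\to v^{(i)}_1\to v^{(i)}_2$. Setting $a_i:=v^{(i)}_1$ for each $i\in[t]$, let $\A_\ell$ be the $\ell$-fold subdivision of $\A_0$, so $n_\ell:=|A_\ell|=\Theta(t\ell)$. By construction $a_0$ is reachable from the triangle $\set{u_0,u_1,u_2}$ via the (subdivided) path $u_0\to\cdots\to v^{(0)}_0\to\cdots\to a_0$, whereas $a_1$ sits in a connected component containing no triangle. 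Thus $\A_\ell\models\varrho[a_0]$ and $\A_\ell\not\models\varrho[a_1]$, which forces $\A_\ell\models\psi[a_0,\ldots,a_{t-1}]$ and $\A_\ell\not\models\psi[a_1,\ldots,a_{t-1},a_0]$, contradicting shift $(\log n)^c$-locality of $q$ as soon as the neighbourhoods at the prescribed radius behave correctly.

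The only routine obstacle is verifying, for all sufficiently large $\ell$, that the $r$-neighbourhoods of the $a_i$ are pairwise disjoint and pairwise isomorphic for $r=(\log n_\ell)^c$. Disjointness is immediate because the $a_i$ lie in different connected components of $\A_\ell$. For the isomorphism, every $a_i$ has in-degree and out-degree $1$ in $\A_\ell$, and the only structural features that could spoil the $r$-neighbourhood from being a plain directed path of length $2r+1$ with $a_i$ at its centre are the triangle (reachable only from $a_0$) and the sources/sinks $v^{(i)}_0, v^{(i)}_2$. In $\A_\ell$ each of these features lies at Gaifman distance at least $\ell+1$ from $a_i$, so whenever $r\leq \ell$ the pointed neighbourhoods $(\NS^{\A_\ell}_r(a_i),a_i)$ are all isomorphic to the same directed path. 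Since $(\log n_\ell)^c = O((\log \ell)^c)$ is eventually dominated by $\ell$, the condition $r\leq \ell$ holds for all large enough $\ell$, completing the contradiction.
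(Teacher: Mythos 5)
Your overall plan is sound and mirrors the paper's argument (a seed graph with a triangle attached to a path, $\ell$-fold subdivision, and a shift-locality contradiction); the cosmetic change of putting $\varrho(x_0)$ instead of $\varrho(x_{t-1})$ and using a disjoint union of components rather than one chain is harmless. However, there is a concrete gap: you define $\A_\ell$ as ``the $\ell$-fold subdivision of $\A_0$,'' which by the paper's definition subdivides \emph{every} edge of $\A_0$, including the three edges of the triangle $u_0\to u_1\to u_2\to u_0$. After that operation the only cycle in the marked component has length $3(\ell{+}1)$, so it is no longer a triangle, and in fact no triangle remains anywhere in $\A_\ell$. Consequently $a_0\notin\textit{triangle-reach}(\A_\ell)$ and your crucial claim ``$a_0$ is reachable from the triangle $\set{u_0,u_1,u_2}$'' is false for every $\ell\geq 1$. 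The later sentence treating the triangle as an intact structural feature of $\A_\ell$ is therefore inconsistent with the construction you actually wrote down.

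The repair is exactly what the paper does: you must subdivide all \emph{non-triangle} edges of $\A_0$ and leave the three triangle edges intact. With that correction, $a_0$ really is reachable from a triangle, the $r$-neighbourhoods of the $a_i$ for $r\leq\ell$ are still isomorphic directed paths of length $2r{+}1$ (the triangle sits at Gaifman distance $2(\ell{+}1)$ from $a_0$, so it never enters the relevant neighbourhood), and the rest of your contradiction goes through unchanged. This is precisely the point at which ``adapting Proposition~\ref{prop:cycle} verbatim'' fails: there, subdividing the cycle preserves the property ``lies on a cycle,'' whereas here subdividing the triangle destroys the property ``is a triangle.''
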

\begin{proof}
  Let $t\in \NN$ with $t\geq 2$ such that $t$ and $p$ are coprime.
  Assume, for contradiction, that the $\textit{triangle-reach}$ query
  is definable by an $\arbinvFOMODp(\set{E})$-formula $\varrho(x)$.
  Let $x_0,\ldots,x_{t-1}$ be first-order variables that do not occur
  in $\varrho(x)$ and let
  \[
  \psi(x_0,\ldots,x_{t-1}) \ := \ \ \varrho(x_{t-1}) \ \und \
  \Und_{0\leq i < t-1} x_i{=}x_i.
  \]
  Let $q$ be the $t$-ary query defined by $\psi(x_0,\ldots,x_{t-1})$.
  By Theorem~\ref{thm:CyclicShiftLocality}, the query $q$ is shift
  $(\log n)^{c}$-local, for some $c\in \NN$.
    \parpic[r]{\centering\includegraphics{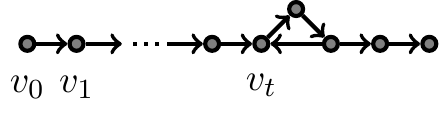}}
  Let $\A_{0}$ be the graph consisting of the directed path $v_{0} \to
  \cdots \to v_{t}$, the directed triangle $v_{t} \to v_{t+1} \to
  v_{t+2} \to v_{t}$, and the directed path $v_{t+2} \to v_{t+3} \to
  v_{t+4}$.
      
    \parpic[l]{\centering\includegraphics{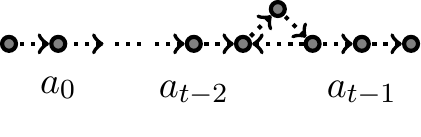}}
  For each $\ell \in \NNpos$, let $\A_{\ell}$ be obtained by the
  $\ell$-fold subdivision of all but the triangle's edges of $\A_{0}$.
  In particular, $\A_\ell$ has cardinality $n_\ell= t{+}5 +
  (t{+}2)\ell$.
  For each $i\in [t{-}1]$, let $a_{i}:=v_{i+1}$, and let
  $a_{t-1}:=v_{t+3}$.  For sufficiently large values of $\ell$, the
  $(\log n_\ell)^c$-neighbourhoods of the nodes $a_0, \ldots, a_{t-1}$
  in $\A_{\ell}$ are pairwise disjoint and isomorphic, each of them
  being isomorphic to a directed path of length $2(\log n_\ell)^c{+}1$
  with node $a_i$ in the middle.

  Since $q$ is shift $(\log n)^c$-local w.r.t.\ $t$, we obtain for all
  sufficiently large $\ell$ that
  \[
  \A_\ell\models\psi[a_0,a_1,\ldots,a_{t-1}] \iff
  \A_\ell\models\psi[a_1,\ldots,a_{t-1},a_0].
  \]
  However, $a_{t-1}\in \textit{triangle-reach}(\A_{\ell})$, because
  $a_{t-1}$ is reachable from the triangle, and $a_{0}\notin
  \textit{triangle-reach}(\A_{\ell})$, because there is no directed
  path from any node of the triangle to $a_{0}$.  That is,
  $\A_\ell\models\varrho[a_{t-1}]$ and
  $\A_\ell\not\models\varrho[a_0]$.  Thus, due to the choice of the
  formula $\psi$, we have that
  $\A_\ell\models\psi[a_0,a_1,\ldots,a_{t-1}]$, but
  $\A_\ell\not\models\psi[a_1,\ldots,a_{t-1},a_0]$, which is a
  contradiction.
\end{proof}
\begin{proposition}\label{prop:same-distance} \ \\
  For any prime power $p$, the $\textit{same-distance}$ query is not
  definable in $\arbinvFOMODp(\set{E})$.
\end{proposition}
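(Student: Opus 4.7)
The plan is to adapt the template used for Propositions~\ref{prop:cycle} and~\ref{prop:triangle-reach}. I would first pick an integer $t\geq 3$ coprime to $p$: since $p$ is a prime power, $t:=3$ works when $p$ is not a power of~$3$, and $t:=4$ works when $p$ is a power of~$3$. I would then assume for contradiction that $\textit{same-distance}$ is defined by some $\varrho(x,y,z)\in\arbinvFOMODp(\set{E})$, and pad it into a $t$-ary formula
\[
  \psi(x_0,\ldots,x_{t-1}) \ := \ \varrho(x_0,x_1,x_2) \,\und\, \Und_{3\leq i<t} x_i{=}x_i.
\]
Applying Theorem~\ref{thm:CyclicShiftLocality} to the $t$-ary query $q$ defined by $\psi$ yields some $c\in\NN$ such that $q$ is shift $(\log n)^c$-local w.r.t.~$t$.

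Next, for every $\ell\in\NNpos$ I would build the directed star $\A_\ell$: a centre $c$ with $t$ directed arms emanating from it, where arm $i$ has length $\ell$ for each $i\in[t]\setminus\set{2}$ and arm $2$ has length $\ell{+}1$; let $a_i$ denote the sink (leaf) of arm $i$. Then $n_\ell=t\ell+2$, and for sufficiently large $\ell$ the quantity $m:=(\log n_\ell)^c$ satisfies $m<\ell$, so $N_m^{\A_\ell}(a_i)$ is contained in arm~$i$ and induces a directed path of length $m$ with $a_i$ at its sink. Consequently the $t$ single-element neighbourhoods are pairwise isomorphic as pointed structures and pairwise disjoint, so shift $(\log n)^c$-locality applies to the tuple $(a_0,\ldots,a_{t-1})$.

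The distance computation is then immediate: every shortest path between distinct leaves runs through $c$, giving $\dist^{\A_\ell}(a_i,a_j)=2\ell$ when $i,j\in[t]\setminus\set{2}$ and $\dist^{\A_\ell}(a_i,a_j)=2\ell+1$ when exactly one of $i,j$ equals~$2$. Hence $\A_\ell\models\psi[a_0,\ldots,a_{t-1}]$ reduces to $\varrho(a_0,a_1,a_2)$, which holds because $\dist(a_0,a_2)=\dist(a_1,a_2)=2\ell+1$; while $\A_\ell\models\psi[a_1,\ldots,a_{t-1},a_0]$ reduces to $\varrho(a_1,a_2,a_3)$ (or to $\varrho(a_1,a_2,a_0)$ when $t=3$), which fails because the two distances in question are $2\ell$ and $2\ell+1$. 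This contradicts shift locality.

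The only step requiring any case analysis is the initial choice of $t$ ensuring coprimality with $p$; everything else -- the construction of $\A_\ell$, the disjointness and isomorphism of the single-element $m$-balls, and the distance calculations -- is a routine extension of the star/path constructions used in the previous two propositions, so I do not anticipate a serious obstacle.
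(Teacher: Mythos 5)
Your proof is correct and follows essentially the same approach as the paper's. You build a star with one arm one edge longer than the others, place the markers at the arm tips, and use the parity of arm lengths to make $\varrho$ true for the unshifted triple and false after a cyclic shift; the paper does the same thing, only with all arms subdivided to equal length $\ell{+}1$ and then one extra tail edge appended (so the ``long'' marker $a_{t-1}=v_t$ is one step further out), and with $\varrho$ placed on the last three rather than the first three variables of $\psi$. These are cosmetic relabelings of the same construction, and your distance bookkeeping, your verification that the $m$-balls are pairwise disjoint isomorphic directed paths, and your explicit choices $t=3$ (when $p$ is not a power of $3$) and $t=4$ (when it is) are all sound.
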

\begin{proof}
  Let $t\in \NN$ with $t \geq 3$ such that $t$ and $p$ are coprime.
  Assume, for contradiction, that the $\textit{same-distance}$ query
  is definable by an $\arbinvFOMODp(\set{E})$-formula
  $\varrho(x,y,z)$.
  
  \noindent Let $x_0,\ldots,x_{t-1}$ be first-order variables that do
  not occur in $\varrho(x,y,z)$ and let
  \[
  \psi(x_0,\ldots,x_{t-1}) \ := \ \ \varrho(x_{t-3},x_{t-2},x_{t-1}) \
  \; \und \Und_{0\leq i<t-3} x_i{=}x_i.
  \]
  Let $q$ be the $t$-ary query defined by $\psi(x_0,\ldots,x_{t-1})$.
  By Theorem~\ref{thm:CyclicShiftLocality}, the query $q$ is shift
  $(\log n)^{c}$-local, for some $c\in \NN$.
  
\parpic[r]{\centering\includegraphics{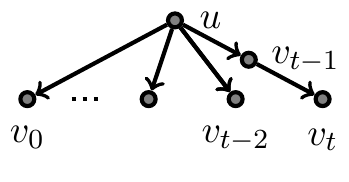}}
  Let $\A_{0}$ be the graph consisting of the edges $u \to v_{i}$ for
  each $i\in \set{0, \ldots, t{-}2}$, and a directed path $u \to
  v_{t-1} \to v_{t}$.

  For each $\ell \in \NNpos$, let $\A_{\ell}$ be the graph obtained
    by the $\ell$-fold subdivision of the edges $u \to v_{i}$ in
    $\A_{0}$, for each $i\in [t]$. 
    \parpic[l]{\centering\includegraphics{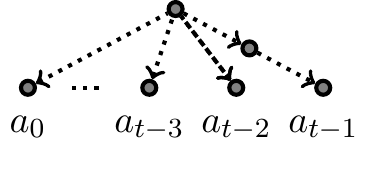}}\noindent
    In particular, $\A_\ell$ has
    cardinality $n_\ell=t{+}2+t\ell$.
    For each $i\in [t{-}1]$ let $a_{i}:=v_{i}$, and let $a_{t-1} :=
    v_{t}$.
    For sufficiently large values of $\ell$, the $(\log
    n_\ell)^c$-neighbourhoods of nodes the $a_0, \ldots, a_{t-1}$ in
    $\A_{\ell}$ are pairwise disjoint and isomorphic, each of them
    being isomorphic to a directed path of length $(\log
    n_\ell)^c{+}1$ with $a_i$ at the end of the path.

    Since $q$ is shift $(\log n)^c$-local w.r.t.\ $t$, we obtain for
    all sufficiently large $\ell$ that
    \[
    \A_\ell\models\psi[a_0,a_1,\ldots,a_{t-1}] \iff
    \A_\ell\models\psi[a_1,\ldots,a_{t-1},a_0].
    \]
    However, $(a_{t-3},a_{t-2},a_{t-1})\in
    \textit{same-distance}(\A_{\ell})$, because $a_{t-3}$ and
    $a_{t-2}$ both have distance $2(\ell{+}1){+}1$ to $a_{t-1}$ in
    $\A_{\ell}$, and $(a_{t-2},a_{t-1},a_{0})\notin
    \textit{same-distance}(\A_{\ell})$, because $a_{t-2}$ has distance
    $2(\ell{+}1)$ to $a_{0}$, but $a_{t-1}$ has distance
    $2(\ell{+}1){+}1$ to $a_{0}$.  That is,
    $\A_\ell\models\varrho[a_{t-3},a_{t-2},a_{t-1}]$ and
    $\A_\ell\not\models\varrho[a_{t-2},a_{t-1},a_0]$.  Thus, due to
    the choice of the formula $\psi$, we have that
    $\A_\ell\models\psi[a_0,a_1,\ldots,a_{t-1}]$, but
    $\A_\ell\not\models\psi[a_1,\ldots,a_{t-1},a_0]$, which is a
    contradiction.
  \end{proof}

\section{Hanf locality and locality on string structures}\label{section:strings}

For giving the precise definition of Hanf locality, we need the following notation:
As in \cite{Libkin2004}, for $\sigma$-structures $\A$ and $\B$, for $k$-tuples
$\ov{a}\in A^k$ and $\ov{b}\in B^k$, and for an $r\in \NN$, we write
$(\A,\ov{a})\Hanfequiv_r (\B,\ov{b})$ (or simply  $\A\Hanfequiv_r\B$ in
case that $k{=}0$) if there is a bijection $\beta:A\to
B$ such that 
$(\NS_r^\A(\ov{a}c),\ov{a}c)\isom (\NS_r^\B(\ov{b}\beta(c)),\ov{b}\beta(c))$
is true for every $c\in A$.

\begin{definition}[Hanf locality]
Let $\Class$ be a class of finite $\sigma$-structures, $k\in\NN$, and
$f:\NN\to\NN$.\\
A $k$-ary query $q$ is \emph{Hanf $f(n)$-local on $\Class$} if there
is an $n_0\in\NN$ such that for every $n\in\NN$ with $n\geq n_0$ and
all $\sigma$-structures $\A,\B\in\Class$ with $|A|=|B|=n$, the
following is true for all $k$-tuples $\ov{a}\in A^k$ and $\ov{b}\in
B^k$ with $(\A,\ov{a})\Hanfequiv_{f(n)} (\B,\ov{b})$: 
\quad $\ov{a}\in q(\A) \iff \ov{b}\in q(\B)$.
\\
The query $q$ is called \emph{Hanf $f(n)$-local} if it is Hanf
$f(n)$-local on the class of all finite $\sigma$-structures.
\end{definition}
Consider a $0$-ary query which maps a structure,
containing a unary relation $P$, to the relation containing the empty tuple if the number of elements contained in $P$ is even, and to the empty set, otherwise.
This is an example of a Hanf $0$-local query. The $0$-ary query
which maps a graph to the relation containing only the empty tuple if it is connected,
and to the empty set otherwise, is an example of a query which is not Hanf $f(n)$-local
for any sublinear function $f$.

Hanf locality is an even stronger locality notion than Gaifman
locality:

\begin{theorem}[Hella, Libkin, Nurmonen \cite{HellaLibkinNurmonen1999}]\label{thm:hanf2gaifman}
Let $\Class$ be a class of finite $\sigma$-structures 
and let $f:\NN\to \NN$. Let $k\in\NNpos$ and let $q$ be a $k$-ary query.
If $q$ is Hanf $f(n)$-local on $\Class$, then $q$ is 
Gaifman $(3f(n){+}1)$-local on $\Class$.
\end{theorem}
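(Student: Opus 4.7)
The plan is to prove the theorem by reducing Gaifman $(3f(n){+}1)$-locality to Hanf $f(n)$-locality through an explicit construction. I would show that, whenever $\A\in\Class$ has cardinality $n\geq n_0$ and $\ov a,\ov b\in A^k$ satisfy $(\NS^\A_{3r+1}(\ov a),\ov a)\isom(\NS^\A_{3r+1}(\ov b),\ov b)$ for $r:=f(n)$, the Hanf equivalence $(\A,\ov a)\Hanfequiv_r(\A,\ov b)$ already holds; Hanf $r$-locality of $q$ then forces $\ov a\in q(\A)\iff\ov b\in q(\A)$, which is exactly the conclusion required for Gaifman $(3r{+}1)$-locality, with the same threshold $n_0$.

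The core construction is a bijection $\beta\colon A\to A$ witnessing the Hanf equivalence. I would fix an isomorphism $\pi\colon(\NS^\A_{3r+1}(\ov a),\ov a)\to(\NS^\A_{3r+1}(\ov b),\ov b)$ and set $U:=N^\A_{2r+1}(\ov a)$ and $V:=N^\A_{2r+1}(\ov b)$, then establish two facts whose proofs exploit $R:=3r{+}1$ as a buffer: (i) $\pi(U)=V$, because any shortest $\A$-path of length at most $2r{+}1$ from some $a_i$ to $c\in U$ has all vertices at distance at most $2r{+}1$ from $\ov a$, hence lies inside $N^\A_R(\ov a)$ and is transported by $\pi$ to a path in $\A$ of the same length from $b_i$ to $\pi(c)$, and symmetrically for $\pi^{-1}$; (ii) for every $c\in U$ we have $N^\A_r(c)\subseteq N^\A_R(\ov a)$, and the same path-transport argument yields $\pi(N^\A_r(c))=N^\A_r(\pi(c))$ together with an isomorphism $(\NS^\A_r(c),c)\isom(\NS^\A_r(\pi(c)),\pi(c))$. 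Thus $\pi|_U\colon U\to V$ is a bijection that preserves the $r$-type of every element.

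Next I would use these facts to match up the remaining elements. Since $\pi|_U$ is $r$-type preserving, the multisets of $r$-types realised in $U$ and in $V$ coincide; subtracting both from the multiset of $r$-types realised in $A$, the multisets of $r$-types in $A\setminus U$ and in $A\setminus V$ coincide as well, so there exists a bijection $\beta'\colon A\setminus U\to A\setminus V$ with $(\NS^\A_r(c),c)\isom(\NS^\A_r(\beta'(c)),\beta'(c))$ for every $c\in A\setminus U$. Gluing gives $\beta\colon A\to A$ defined by $\beta|_U:=\pi|_U$ and $\beta|_{A\setminus U}:=\beta'$. To verify the Hanf condition for a given $c\in A$: if $c\in U$, the $r$-neighbourhood of $(\ov a,c)$ sits inside $N^\A_R(\ov a)$ and $\pi$ restricts to the desired isomorphism with the $r$-neighbourhood of $(\ov b,\beta(c))$; if $c\in A\setminus U$, the distance estimates $\dist^\A(\ov a,c)>2r{+}1$ and $\dist^\A(\ov b,\beta(c))>2r{+}1$ force $N^\A_r(\ov a)\cap N^\A_r(c)=\emptyset$ and $N^\A_r(\ov b)\cap N^\A_r(\beta(c))=\emptyset$, so the two relevant $r$-neighbourhoods split as disjoint unions and one glues the restriction of $\pi$ around $\ov a$ with the $\beta'$-witnessing isomorphism around $c$.

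The main obstacle I expect is the possibility that $U$ and $V$ overlap: a naive attempt to define $\beta$ by $\pi$ on $U$ and $\pi^{-1}$ on $V$ would be inconsistent on $U\cap V$. Working instead with the asymmetric decomposition $A=U\sqcup(A\setminus U)$ and leaning on the type-multiset equality derived from (ii) avoids this issue entirely, since $\pi|_U$ alone already pins down $\beta$ on $U$ and any $r$-type preserving bijection on the complement completes the construction, independently of whether $U\cap V$ is empty.
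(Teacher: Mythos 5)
Your argument is correct and self-contained. The paper itself does not prove this theorem; it cites it as a known result of Hella, Libkin, and Nurmonen, so there is no ``paper proof'' to compare against. Your reconstruction is the standard route: you show that isomorphism of $(3r{+}1)$-neighbourhoods of $\ov a$ and $\ov b$ (with $r:=f(n)$) forces $(\A,\ov a)\Hanfequiv_r(\A,\ov b)$, and then invoke Hanf $f(n)$-locality.

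The key steps all check out. Claims~(i) and~(ii) both follow from the path-transport argument you describe, using that any shortest path witnessing a distance at most $2r{+}1$ (resp.\ between a point of $U$ and a point of its $r$-ball) stays inside $N_{3r+1}^\A(\ov a)$, so the isomorphism $\pi$ carries it to a genuine path in $\A$, and symmetrically via $\pi^{-1}$. The multiset-subtraction step is valid because $\pi|_U$ is an $r$-type-preserving bijection from $U$ onto $V$ and both are subsets of the same $A$; this also gives $|A\setminus U|=|A\setminus V|$ for free. The case analysis in the verification of the Hanf condition is also sound: when $c\notin U$ you correctly note that $\dist^\A(\ov a,c)>2r{+}1$ yields not merely disjointness of $N_r^\A(\ov a)$ and $N_r^\A(c)$ but also that there is no Gaifman edge between the two sets, so $\NS_r^\A(\ov a c)$ is genuinely a disjoint union of the two $r$-neighbourhood structures (and likewise on the $\ov b$-side, since $\beta(c)=\beta'(c)\notin V$), allowing the two isomorphisms to be glued. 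Your closing observation that the asymmetric decomposition $A = U\sqcup(A\setminus U)$ sidesteps the $U\cap V$ overlap issue is exactly the right way to avoid the naive pitfall.
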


It is well-known that queries definable in $\FO$ or $\FOMODp$ (for any
$p\geq 2$) are Hanf local with a constant locality radius \cite{FSV95,HellaLibkinNurmonen1999}. 
For order-invariant or arb-invariant $\FO$ it is still open whether
they are Hanf local with respect to any sublinear locality radius.
As an immediate consequence of Proposition~\ref{prop:notGaifmanLocal}
and Theorem~\ref{thm:hanf2gaifman} one obtains for every $p\in\NN$
with $p\geq 2$ that order-invariant $\FOMODp$ is \emph{not} Hanf local with
respect to any sublinear locality radius.

An $\arbinvFOMODp(\sigma_\Sigma)$-sentence $\varphi$ defines
the string-language $L_\varphi:=\setc{w\in\Sigma^+}{\S_w\models\varphi}$.
A string-language $L\subseteq \Sigma^+$ corresponds to a 0-ary query
$q_L$ on $\ClassStrings$ defined, for each $w\in\Sigma^+$, via \
\[
   q_L(\S_w) \ := \ \left\{
   \begin{array}{cl}
     \set{()} & \text{if } \ w\in L \\
     \emptyset & \text{if } \ w\not\in L,
   \end{array}
   \right.
\]
where $()$ denotes the unique tuple of arity 0.
A string-language $L\subseteq \Sigma^+$ is called \emph{Hanf
  $f(n)$-local} iff the 0-ary query $q_L$ is Hanf $f(n)$-local on
$\ClassStrings$, i.e., there is an $n_0\in\NN$ such that for every
$n\in\NN$ with $n\geq n_0$ and all strings
$u,v\in\Sigma^+$ of length $n$ with
$\S_u\Hanfequiv_{f(n)} \S_{v}$, we have: \ $u\in L \iff v\in L$.

For the restricted case of string structures, Benedikt and Segoufin
\cite{BenediktSegoufin2009a} have shown that on $\ClassStrings$ order-invariant
$\FO$ has the same expressive power as $\FO$ and
thus is Hanf local with constant locality radius (in fact,
\cite{BenediktSegoufin2009a} obtains the same result also for
finite labelled ranked trees). 
In \cite{AMSS-SICOMP} it was shown that every query 
definable in arb-invariant $\FO$ on $\ClassStrings$ is Hanf local with
polylogarithmic locality radius, and that in the worst case the
locality radius can indeed be of polylogarithmic size.
As an immediate consequence of
Proposition~\ref{prop:notWeaklyGaifmanLocal} and
Theorem~\ref{thm:hanf2gaifman}
we obtain that for $\Sigma:=\set{0,1}$ there is a unary query $q$
that is \emph{not} Hanf $(\frac{n-8}{12})$-local on $\ClassStrings$, but
definable in $\ordinvFOMODp(\sigma_\Sigma)$ for every \emph{even}
number $p\geq 2$.
From this observation, we can also obtain an example of a language 
which is not Hanf $(\frac{n-8}{12})$-local on strings over an extended alphabet.
The existence of this language can be obtained using a general principle 
which allows to convert a $k$-ary query $q$ over strings to a $0$-ary query, i.e. a language $A_{q}$ over strings over an extended alphabet.
The language $A_{q}$ inherits the relevant definability and locality properties for our purposes from the query $q$.
This principle is stated in Lemma~\ref{lem:string-query-arity-reduction} below, since we will need it for the proof of Theorem~\ref{thm:HanfLocalityOnStrings} below.
Using this approach here, however, it becomes rather hard to describe the shape of the language 
which is not Hanf $(\frac{n-8}{12})$-local concretely.
To this end, we show directly how this language can be obtained by 
simple modification of the proof of Proposition~\ref{prop:notWeaklyGaifmanLocal}.
 Consider the languages
\begin{eqnarray*}
   L_{\textit{left}} & := &1^{+}\,2\,0^{+}1^{+}0^{+}\\
   L_{\textit{right}} & := &1^{+}0^{+}1^{+}\,2\,0^{+}
\end{eqnarray*}
over the alphabet $\Sigma:=\set{0,1,2}$.
Note that the definitions of $L_{\textit{left}}$ and $L_{\textit{right}}$ are very similar, the only difference
being the position of the unique $2$ occurring in strings from $L_{\textit{left}}$ and $L_{\textit{right}}$.
We define
\begin{eqnarray*}
  L_{\textit{even}} & := & \setc{w\in L_{\textit{right}}}{|w|_1 \mymod 2 \equiv 0 }\\
  L_{\textit{odd}} & := & \setc{w\in L_{\textit{left}}}{|w|_1 \mymod 2 \equiv 1 }
\end{eqnarray*}

\smallskip

\begin{proposition} 
  Let $L := L_{\textit{even}} \union L_{\textit{odd}}$.
 \begin{enumerate}[label=(\emph{\alph*})]
  \item
   The language $L$ is not Hanf $(\frac{n-1}{8})$-local.
  \item
   $L$ is definable by a 
   sentence $\varphi$ in
  $\ordinvFOMODp(\sigma_\Sigma)$, for every even number $p\geq
  2$.
 \end{enumerate}
\end{proposition}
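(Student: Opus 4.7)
The plan is to handle (a) by exhibiting, for each $\ell\ge 1$, the strings
\[
u_\ell\;:=\;1^\ell\,2\,0^\ell\,1^\ell\,0^\ell
\qquad\text{and}\qquad
v_\ell\;:=\;1^\ell\,0^\ell\,1^\ell\,2\,0^\ell
\]
of common length $n=4\ell+1$, which differ on $L$-membership yet satisfy $\S_{u_\ell}\Hanfequiv_r\S_{v_\ell}$ at radius $r:=\lfloor\ell/2\rfloor=\lfloor(n-1)/8\rfloor$. Both have $|u_\ell|_1=|v_\ell|_1=2\ell$, which is even. Since $v_\ell$ matches the pattern of $L_{\textit{right}}$ and has an even number of $1$s, $v_\ell\in L_{\textit{even}}\subseteq L$; whereas $u_\ell$ matches the pattern of $L_{\textit{left}}$ but has an even (not odd) number of $1$s, so $u_\ell\notin L_{\textit{odd}}$, and since $u_\ell$ does not match $L_{\textit{right}}$, we get $u_\ell\notin L$.

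The central step is to construct a bijection $\beta$ from positions of $u_\ell$ to positions of $v_\ell$ witnessing $\S_{u_\ell}\Hanfequiv_r\S_{v_\ell}$. The guiding intuition is that, viewed as cyclic words of length $4\ell+1$, the two strings are related by a rotation of $2\ell$, and both linear strings share the same $1^\ell$-prefix and $0^\ell$-suffix. Concretely I would split $[1,4\ell+1]$ into three ranges. On the left-truncated positions $[1,r]$ and the right-truncated positions $[4\ell+2-r,4\ell+1]$, the $r$-neighbourhood is a truncated substring of $1^\ell$ or $0^\ell$, identical in both structures; here the identity map suffices. For the non-truncated positions $p\in[r+1,4\ell+1-r]$, the $r$-neighbourhood is the pointed length-$(2r+1)$ substring $w[p-r..p+r]$ with $p$ at the centre. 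Enumerating these substrings in $u_\ell$ yields three families indexed by $k$: the ``through-the-$2$'' substrings $1^{\ell+1-k}\,2\,0^{k-1}$ for $k\in\{1,\ldots,\ell+1\}$, the ``$0\to 1$ transition'' substrings $0^{\ell+1-k}1^k$ for $k\in\{1,\ldots,\ell\}$, and the ``$1\to 0$ transition'' substrings $1^{\ell+1-k}0^k$ for $k\in\{1,\ldots,\ell\}$, each occurring exactly once and together accounting for all $3\ell+1$ non-truncated positions. The same enumeration in $v_\ell$ produces the same three families, only anchored at different positions (the $2$ has moved, and the roles of the $0\to 1$ and $1\to 0$ transitions are interchanged). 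Matching substrings $k$-by-$k$ within each family extends $\beta$ to the non-truncated positions, completing the bijection.

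For (b), I would set
\[
\varphi_L\;:=\;\bigl(\varphi_{\textit{right}}\und\existscc{0}{2}x\,P_1(x)\bigr)\;\oder\;\bigl(\varphi_{\textit{left}}\und\existscc{1}{2}x\,P_1(x)\bigr),
\]
where $\varphi_{\textit{right}},\varphi_{\textit{left}}\in\FO(\sigma_\Sigma)$ are straightforward successor-based $\FO$-sentences expressing the respective regular patterns via unique transition counts (e.g., the unique $1\to 2$ position, the unique $2\to 0$ position, the first letter, the last letter). Since $\varphi_L$ uses no numerical predicate, it is vacuously order-invariant and lies in $\ordinvFOMODnum{2}(\sigma_\Sigma)$. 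For any even $p\ge 2$, the fact that $2\mid p$ lets every modulo-$2$ counting quantifier be replaced by a disjunction of modulo-$p$ counting quantifiers (as recorded in Section~\ref{section:preliminaries}), so $\varphi_L$ is equivalent to an $\ordinvFOMODp(\sigma_\Sigma)$-sentence. The main obstacle is the enumeration in part (a): carefully verifying that the three substring families exhaust the non-truncated $r$-neighbourhoods of both structures with matching multiplicities; once this is done, the identity map handles the truncated positions trivially and the Hanf bijection assembles immediately.
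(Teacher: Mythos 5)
Your part (a) is correct and is a perfectly valid alternative to the paper's construction. You use strings of length $n = 4\ell+1$ and a case analysis of the $r$-windows (with $r = \lfloor \ell/2\rfloor$), whereas the paper works with length $8\ell+1$ and a block-swap $\stringfont{x}\stringfont{y}_1\stringfont{y}_2\stringfont{y}_3\stringfont{z} \mapsto \stringfont{x}\stringfont{y}_3\stringfont{y}_2\stringfont{y}_1\stringfont{z}$ in which each of the blocks $\stringfont{y}_s$ is carried wholesale (the $\stringfont{y}_s$ are padded so that the $\ell$-neighbourhood of any block position only sees the block itself plus a fixed $1^\ell$- or $0^\ell$-collar). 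Both approaches yield the bound $(n-1)/8$; the paper's block decomposition makes the bijection almost automatic, while yours requires the careful three-family census of interior windows, which you have set up correctly.

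Part (b), however, has a fatal gap, and in fact it contradicts your own part (a). You propose $\varphi_L := (\varphi_{\textit{right}}\wedge\existscc{0}{2}x\,P_1(x)) \oder (\varphi_{\textit{left}}\wedge\existscc{1}{2}x\,P_1(x))$ with $\varphi_{\textit{left}},\varphi_{\textit{right}}\in\FO(\sigma_\Sigma)$ defining $L_{\textit{left}}$ and $L_{\textit{right}}$. But neither $L_{\textit{left}}$ nor $L_{\textit{right}}$ is $\FO$-definable over successor-based string structures. Your own witnesses prove this: you exhibit $u_\ell\in L_{\textit{left}}\setminus L_{\textit{right}}$ and $v_\ell\in L_{\textit{right}}\setminus L_{\textit{left}}$ with $\S_{u_\ell}\Hanfequiv_r\S_{v_\ell}$ for $r$ growing unboundedly, whereas $\FO(\sigma_\Sigma)$ is Hanf local with a fixed constant radius (Theorem 4.3). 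The intuitive obstruction is that with successor only one cannot express the long-range ordering between the unique $2$ and the unique $0\to 1$ transition; one can count the local transition types, and this is enough to define the union $M := L_{\textit{left}}\cup L_{\textit{right}}$, but not to separate the two summands. Indeed, if your $\varphi_L$ existed, then $L$ would be plain $\FOMOD{2}(\sigma_\Sigma)$-definable, hence Hanf local with constant radius, directly contradicting your part (a). The paper avoids this by writing $\varphi = \varphi_M\wedge\varphi'$ with $\varphi_M$ an $\FO$-sentence defining only the union $M$ and $\varphi'$ a rewriting of the order-invariant $\ordinvFOMODnum{2}$-sentence $\varphi_{\textit{even cycles}}$ from Example 2.6; the invariant use of the order is genuinely necessary, and the whole point of the proposition is that $\ordinvFOMODnum{2}$ is strictly stronger than $\FOMODnum{2}$ on strings.

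Note: there is no macro $\FOMODnum{2}$ in the paper; I mean $\FOMOD{2}$.

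Corrected last sentence: the whole point of the proposition is that $\ordinvFOMODnum{2}$ is strictly more expressive on strings than $\FOMOD{2}$.
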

\begin{proof}
\emph{(a):}  \ For every $\ell\in\NNpos$ let
\[
  \begin{array}{rcrclcl}
   u_\ell & := & 
     1^\ell \ 1^\ell 2 0^\ell \ 0^\ell 1^\ell \ 1^\ell 0^\ell\ 0^\ell
    & = &
    \stringfont{x} \; \stringfont{y}_1 \; \stringfont{y}_2 \;
    \stringfont{y}_3 \; \stringfont{z},
   \\[1ex]
   v_\ell & := & 
     1^\ell \ 1^\ell 0^\ell \ 0^\ell 1^\ell \ 1^\ell 2 0^\ell \ 0^\ell
    & = &
    \stringfont{x} \; \stringfont{y}_3 \; \stringfont{y}_2 \; \stringfont{y}_1\;\stringfont{z},
  \end{array}
\]
for \quad
$
  \stringfont{x}   := 1^\ell, \quad
  \stringfont{y}_1 := 1^\ell 2 0^{\ell},\quad
  \stringfont{y}_2 := 0^\ell 1^{\ell}, \quad
  \stringfont{y}_3 := 1^\ell 0^\ell, \quad
  \stringfont{z}   := 0^\ell.
$

 \medskip
 \noindent
It is not difficult to see that $\S_{u_\ell}\Hanfequiv_\ell
\S_{v_\ell}$: the bijection $\beta$, for which
\[
   (\NS_\ell^{\S_{u_\ell}}(c),c)
   \ \isom \
   (\NS_\ell^{\S_{v_\ell}}(\beta(c)),\beta(c)) \qquad
   \text{for every \ } c\in \set{1,\ldots,|u_\ell|},
\]
can be chosen as follows. It maps each position of
\begin{itemize}
 \item
   $\stringfont{x}$ in $u_\ell$ onto the according position 
   of $\stringfont{x}$ in $v_\ell$,
 \item 
   $\stringfont{y}_s$ (for $s\in\set{1,2,3}$) in $u_\ell$ onto the according position 
   of $\stringfont{y}_s$ in $v_\ell$,
 \item 
   $\stringfont{z}$ in $u_\ell$ onto the according position 
   of $\stringfont{z}$ in $v_\ell$.
\end{itemize}

\noindent
It is straightforward to verify that this bijection $\beta$ indeed witnesses that
$\S_{u_\ell}\Hanfequiv_\ell \S_{v_\ell}$. Furthermore, $v_\ell\in
L$ and $u_\ell\not\in L$. The length of $u_\ell$ and
$v_\ell$ is $n:=8\ell{+}1$, thus
$\ell=\frac{n-1}{8}$. Hence, the language $L$ is not Hanf
$(\frac{n-1}{8})$-local.
This completes the proof of \emph{(a)}.

\bigskip

\noindent
\emph{(b):} \ 
First, note that the language
\[
M \ := \ \
L_{\textit{left}}\ \cup \ L_{\textit{right}}
\]
can be defined by an $\FO(\sigma_\Sigma)$-sentence $\varphi_M$ which
states the following:

\begin{itemize}
 \item
   The first position of the string carries the letter 1. The last
   position of the string carries the letter 0.
 \item 
   For each position $x$ that carries the letter 1, the position
   directly to the right of $x$ carries one of the letters 0, 1, 2.
   Furthermore, there is exactly one position $x$ that carries the
   letter 1, such that the position directly to the right of $x$
   carries the letter 0. And there is exactly one position $x$ that
   carries the letter 1, such that the position directly to the right
   of $x$ carries the letter 2.
 \item 
   For each position $y$ that carries the letter 0, the position
   directly to the right of $y$ carries one of the letters 0, 1.
   Furthermore, there is exactly one position $y$ that carries the
   letter 0, such that the position directly to the right of $y$
   carries the letter 1.
 \item
   There is exactly one position $z$ that carries the letter 2. The
   position directly to the right of $z$ carries the letter 0.
\end{itemize}

From Example~\ref{example:niemistoe-evencycles} we obtain an
$\ordinvFOMODnum{2}(\set{E})$-sentence
$\varphi_{\textit{even cycles}}$ that is satisfied by a finite
$\set{E}$-structure $\A$ iff $\A$ is a disjoint union of directed
cycles where the number of cycles of even length is even.
We choose
\[
   \varphi \ := \ \ \varphi_M \,\und\,\varphi',
\]
where $\varphi'$ is the formula obtained from $\varphi_{\textit{even
    cycles}}$ by 
relativisation of all quantifiers to the positions that carry the letters
1 or 2, and by
replacing every atom of the form $E(\mu,\nu)$ (for first-order
variables $\mu$ and $\nu$) by a formula stating that
\begin{itemize}  
 \item $E(\mu,\nu)$ is true, or
 \item position $\mu$ carries the letter 2, and $\nu$ is the leftmost
   position of the string, or
 \item the positions $\mu$ and $\nu$ both carry the letter 1, and the
   positions directly to the right of $\mu$ and directly to the left
   of $\nu$ both carry the letter 0. 
\end{itemize}\smallskip

\noindent Clearly, the obtained formula is order-invariant on the class of all
finite $\sigma_\Sigma$-structures, since the formula $\varphi_{\textit{even
    cycles}}$ is order-invariant on the class of all finite $\set{E}$-structures.

It is straightforward to see that, when evaluated in a string $w\in L_{\textit{left}}$ of the
form $1^i20^j1^{i'}0^{j'}$, the formula
$\varphi'$ simulates the evaluation of the formula
$\varphi_{\textit{even cycles}}$ in a graph that consists of the
disjoint union of two cycles of lengths $i{+}1$ and $i'$; and hence
$\varphi'$ is satisfied iff $i{+}1$ and $i'$ are either both even or both
odd --- and this is equivalent to the statement that $|w|_1 = i{+}i'$ is odd.
Similarly, when evaluated in a string $w\in L_{\textit{right}}$ of the form $1^i0^j1^{i'}20^{j'}$,
the formula $\varphi'$ simulates the
evaluation of the formula $\varphi_{\textit{even cycles}}$ in a graph
that consists of a single cycle of length $i{+}i'{+}1$; and hence
$\varphi'$ is satisfied iff $i{+}i'{+}1$ is odd, i.e., $|w|_1 = i{+}i'$ is even.

In summary, we obtain that $\varphi$ is an
$\ordinvFOMODnum{2}(\sigma_\Sigma)$-sentence that defines the language
$L$.
Since modulo 2 counting quantifiers can be simulated by modulo $p$
counting quantifiers, for every \emph{even} number $p\geq 2$, the
proof of (b) is complete.
\end{proof}

\begin{proposition} 
  Let $L := L_{\textit{even}} \union L_{\textit{odd}}$.
 \begin{enumerate}[label=\({\alph*}]
  \item
   The language $L$ is not Hanf $(\frac{n-1}{8})$-local.
  \item
   $L$ is definable by a 
   sentence $\varphi$ in
  $\ordinvFOMODp(\sigma_\Sigma)$, for every even number $p\geq
  2$.
 \end{enumerate}
\end{proposition}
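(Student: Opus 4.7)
The plan for part (a) is to exhibit, for each $\ell \in \NNpos$, two strings $u_\ell, v_\ell \in \Sigma^+$ of the same length $n = 8\ell + 1$ such that $\S_{u_\ell} \Hanfequiv_\ell \S_{v_\ell}$ but exactly one of them lies in $L$. A natural choice is to split both strings into five blocks $\stringfont{x}\,\stringfont{y}_1\,\stringfont{y}_2\,\stringfont{y}_3\,\stringfont{z}$ in which the unique letter $2$ sits inside one of the $\stringfont{y}$-blocks, and to take $u_\ell$ and $v_\ell$ to be two different orderings of the same multiset of blocks. Concretely, I would use $\stringfont{x} := 1^\ell$, $\stringfont{y}_1 := 1^\ell 2\, 0^\ell$, $\stringfont{y}_2 := 0^\ell 1^\ell$, $\stringfont{y}_3 := 1^\ell 0^\ell$, $\stringfont{z} := 0^\ell$, and set $u_\ell := \stringfont{x}\,\stringfont{y}_1\,\stringfont{y}_2\,\stringfont{y}_3\,\stringfont{z}$ and $v_\ell := \stringfont{x}\,\stringfont{y}_3\,\stringfont{y}_2\,\stringfont{y}_1\,\stringfont{z}$. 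Then $u_\ell = 1^{2\ell}\,2\,0^{2\ell}\,1^{2\ell}\,0^{2\ell}$ matches the $L_{\textit{left}}$-pattern but has $|u_\ell|_1 = 4\ell$ even, so $u_\ell \notin L$, whereas $v_\ell = 1^{2\ell}\,0^{2\ell}\,1^{2\ell}\,2\,0^{2\ell}$ matches the $L_{\textit{right}}$-pattern with $|v_\ell|_1 = 4\ell$ even, so $v_\ell \in L_{\textit{even}} \subseteq L$.

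To prove $\S_{u_\ell} \Hanfequiv_\ell \S_{v_\ell}$, I would introduce the block-preserving bijection $\beta$ that sends each position inside a block of $u_\ell$ to the corresponding position of the same block inside $v_\ell$. Each of the five blocks begins and ends with a run of $\ell$ identical letters, and these end-runs match the letters of the runs on the other side of every block boundary in both strings (the $1$-run at the right of $\stringfont{x}$ meets a $1$-run at the left of $\stringfont{y}_1$ in $u_\ell$ and at the left of $\stringfont{y}_3$ in $v_\ell$, and so on for the remaining three boundaries). Therefore the $\ell$-neighbourhood of any position in either string sees only one block together with homogeneous buffers on either side, which gives $(\NS^{\S_{u_\ell}}_\ell(c),c) \isom (\NS^{\S_{v_\ell}}_\ell(\beta(c)),\beta(c))$ by a routine case distinction on the block containing $c$. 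Since $n = 8\ell + 1$ and hence $\ell = \tfrac{n-1}{8}$, this rules out Hanf $(\tfrac{n-1}{8})$-locality.

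For part (b), the plan is to combine a plain first-order sentence $\varphi_M$ defining $M := L_{\textit{left}} \cup L_{\textit{right}}$ on $\ClassStrings$ with an order-invariant $\FOMOD{2}$-formula $\varphi'$ adapted from Niemistö's sentence $\varphi_{\textit{even cycles}}$ of Example~\ref{example:niemistoe-evencycles}. The sentence $\varphi_M \in \FO(\sigma_\Sigma)$ is straightforward bookkeeping on the successor relation: it requires the string to begin with $1$, end with $0$, contain exactly one occurrence of $2$ whose right neighbour is $0$, and exhibit the correct pattern of $1$/$0$-interfaces that distinguishes $L_{\textit{left}}$ from $L_{\textit{right}}$. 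For $\varphi'$ I would relativise all quantifiers of $\varphi_{\textit{even cycles}}$ to positions carrying $1$ or $2$ and replace every atom $E(\mu,\nu)$ with a disjunction that (i) keeps the original successor edges among such positions, (ii) adds a back-edge from the unique $2$-position to the leftmost position of the string, and (iii) adds an edge between two $1$-positions $\mu$ and $\nu$ whenever the successor of $\mu$ and the predecessor of $\nu$ both carry $0$.

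The effect of this modification is that the induced graph on the $1$- and $2$-positions becomes a disjoint union of directed cycles. On $w = 1^i 2\, 0^j 1^{i'} 0^{j'} \in L_{\textit{left}}$ it consists of two cycles, of lengths $i+1$ and $i'$, so the number of even-length cycles is even precisely when $i+i'+1$ is odd, i.e.\ when $|w|_1 = i+i'$ is odd, matching the defining condition of $L_{\textit{odd}}$. On $w = 1^i 0^j 1^{i'} 2\, 0^{j'} \in L_{\textit{right}}$ it consists of a single cycle of length $i+i'+1$, so the same condition holds precisely when $|w|_1 = i+i'$ is even, matching $L_{\textit{even}}$. Hence $\varphi := \varphi_M \und \varphi'$ defines $L$ on string structures, is order-invariant because $\varphi_{\textit{even cycles}}$ is and the modifications add no order-dependent atoms, and since modulo $2$ counting can be simulated by modulo $p$ counting for every even $p \geq 2$, the sentence can be expressed in $\ordinvFOMODp(\sigma_\Sigma)$ for every such $p$. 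I expect the main technical effort to sit in part (a), specifically in the case analysis establishing the $\ell$-neighbourhood isomorphism across block boundaries; everything in part (b) then assembles routinely from the ingredients already available in Example~\ref{example:niemistoe-evencycles}.
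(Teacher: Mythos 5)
Your proposal matches the paper's proof essentially verbatim: the same block decomposition $\stringfont{x}\,\stringfont{y}_1\,\stringfont{y}_2\,\stringfont{y}_3\,\stringfont{z}$ with the same choice of $u_\ell$ and $v_\ell$ and the same block-preserving bijection for part (a), and the same $\varphi_M \und \varphi'$ construction for part (b), including the relativisation to $1$- and $2$-positions and the three-way disjunction replacing $E(\mu,\nu)$. One tiny parity slip: in the $L_{\textit{left}}$ analysis you write that the even-cycle count is even precisely when $i+i'+1$ is \emph{odd}, but two cycles of lengths $i{+}1$ and $i'$ share a parity iff their sum $i+i'+1$ is \emph{even} (equivalently $i+i'$ is odd) --- your final conclusion ``$|w|_1=i+i'$ is odd'' is correct, only the intermediate phrase is off.
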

\begin{proof}\hfill

\noindent\(a:  \ For every $\ell\in\NNpos$ let
\[
  \begin{array}{rcrclcl}
   u_\ell & := & 
     1^\ell \ 1^\ell 2 0^\ell \ 0^\ell 1^\ell \ 1^\ell 0^\ell\ 0^\ell
    & = &
    \stringfont{x} \; \stringfont{y}_1 \; \stringfont{y}_2 \;
    \stringfont{y}_3 \; \stringfont{z},
   \\[1ex]
   v_\ell & := & 
     1^\ell \ 1^\ell 0^\ell \ 0^\ell 1^\ell \ 1^\ell 2 0^\ell \ 0^\ell
    & = &
    \stringfont{x} \; \stringfont{y}_3 \; \stringfont{y}_2 \; \stringfont{y}_1\;\stringfont{z},
  \end{array}
\]
for \quad
$
  \stringfont{x}   := 1^\ell, \quad
  \stringfont{y}_1 := 1^\ell 2 0^{\ell},\quad
  \stringfont{y}_2 := 0^\ell 1^{\ell}, \quad
  \stringfont{y}_3 := 1^\ell 0^\ell, \quad
  \stringfont{z}   := 0^\ell.
$

 \medskip
 \noindent
It is not difficult to see that $\S_{u_\ell}\Hanfequiv_\ell
\S_{v_\ell}$: the bijection $\beta$, for which
\[
   (\NS_\ell^{\S_{u_\ell}}(c),c)
   \ \isom \
   (\NS_\ell^{\S_{v_\ell}}(\beta(c)),\beta(c)) \qquad
   \text{for every \ } c\in \set{1,\ldots,|u_\ell|},
\]
can be chosen as follows. It maps each position of
\begin{itemize}
 \item
   $\stringfont{x}$ in $u_\ell$ onto the according position 
   of $\stringfont{x}$ in $v_\ell$,
 \item 
   $\stringfont{y}_s$ (for $s\in\set{1,2,3}$) in $u_\ell$ onto the according position 
   of $\stringfont{y}_s$ in $v_\ell$,
 \item 
   $\stringfont{z}$ in $u_\ell$ onto the according position 
   of $\stringfont{z}$ in $v_\ell$.
\end{itemize}

\noindent
It is straightforward to verify that this bijection $\beta$ indeed witnesses that
$\S_{u_\ell}\Hanfequiv_\ell \S_{v_\ell}$. Furthermore, $v_\ell\in
L$ and $u_\ell\not\in L$. The length of $u_\ell$ and
$v_\ell$ is $n:=8\ell{+}1$, thus
$\ell=\frac{n-1}{8}$. Hence, the language $L$ is not Hanf
$(\frac{n-1}{8})$-local.
This completes the proof of \(a.

\bigskip

\noindent
\(b: \ 
First, note that the language
\[
M \ := \ \
L_{\textit{left}}\ \cup \ L_{\textit{right}}
\]
can be defined by an $\FO(\sigma_\Sigma)$-sentence $\varphi_M$ which
states the following:

\begin{itemize}
 \item
   The first position of the string carries the letter 1. The last
   position of the string carries the letter 0.
 \item 
   For each position $x$ that carries the letter 1, the position
   directly to the right of $x$ carries one of the letters 0, 1, 2.
   Furthermore, there is exactly one position $x$ that carries the
   letter 1, such that the position directly to the right of $x$
   carries the letter 0. And there is exactly one position $x$ that
   carries the letter 1, such that the position directly to the right
   of $x$ carries the letter 2.
 \item 
   For each position $y$ that carries the letter 0, the position
   directly to the right of $y$ carries one of the letters 0, 1.
   Furthermore, there is exactly one position $y$ that carries the
   letter 0, such that the position directly to the right of $y$
   carries the letter 1.
 \item
   There is exactly one position $z$ that carries the letter 2. The
   position directly to the right of $z$ carries the letter 0.
\end{itemize}

\noindent From Example~\ref{example:niemistoe-evencycles} we obtain an
$\ordinvFOMODnum{2}(\set{E})$-sentence
$\varphi_{\textit{even cycles}}$ that is satisfied by a finite
$\set{E}$-structure $\A$ iff $\A$ is a disjoint union of directed
cycles where the number of cycles of even length is even.
We choose
\[
   \varphi \ := \ \ \varphi_M \,\und\,\varphi',
\]
where $\varphi'$ is the formula obtained from $\varphi_{\textit{even
    cycles}}$ by 
relativisation of all quantifiers to the positions that carry the letters
1 or 2, and by
replacing every atom of the form $E(\mu,\nu)$ (for first-order
variables $\mu$ and $\nu$) by a formula stating that
\begin{itemize}  
 \item $E(\mu,\nu)$ is true, or
 \item position $\mu$ carries the letter 2, and $\nu$ is the leftmost
   position of the string, or
 \item the positions $\mu$ and $\nu$ both carry the letter 1, and the
   positions directly to the right of $\mu$ and directly to the left
   of $\nu$ both carry the letter 0. 
\end{itemize}

\noindent Clearly, the obtained formula is order-invariant on the class of all
finite $\sigma_\Sigma$-structures, since the formula $\varphi_{\textit{even
    cycles}}$ is order-invariant on the class of all finite $\set{E}$-structures.

It is straightforward to see that, when evaluated in a string $w\in L_{\textit{left}}$ of the
form $1^i20^j1^{i'}0^{j'}$, the formula
$\varphi'$ simulates the evaluation of the formula
$\varphi_{\textit{even cycles}}$ in a graph that consists of the
disjoint union of two cycles of lengths $i{+}1$ and $i'$; and hence
$\varphi'$ is satisfied iff $i{+}1$ and $i'$ are either both even or both
odd --- and this is equivalent to the statement that $|w|_1 = i{+}i'$ is odd.
Similarly, when evaluated in a string $w\in L_{\textit{right}}$ of the form $1^i0^j1^{i'}20^{j'}$,
the formula $\varphi'$ simulates the
evaluation of the formula $\varphi_{\textit{even cycles}}$ in a graph
that consists of a single cycle of length $i{+}i'{+}1$; and hence
$\varphi'$ is satisfied iff $i{+}i'{+}1$ is odd, i.e., $|w|_1 = i{+}i'$ is even.

In summary, we obtain that $\varphi$ is an
$\ordinvFOMODnum{2}(\sigma_\Sigma)$-sentence that defines the language
$L$.
Since modulo 2 counting quantifiers can be simulated by modulo $p$
counting quantifiers, for every \emph{even} number $p\geq 2$, the
proof of (b) is complete.
\end{proof}

\begin{remark}
  Benedikt and Segoufin \cite{BenediktSegoufin2009a} conjectured
  that order-invariant $\FOMOD{}$ (i.e. formulas with arbitrary modulo counting quantifiers) has the same expressive power as $\FOMOD{}$ on trees. The previous proposition
  shows that, for even numbers $p \geq 2$, there exist order-invariant $\FOMODp$-definable
  languages which are not $\FOMOD{}$-definable, since $\FOMOD{}$ is Hanf local.
  This refutes the conjecture even for strings instead of trees.
\end{remark}

From Niemist\"o's Corollary~7.25 in
\cite{Niemistoe-PhD} it follows that for \emph{odd} numbers $p$,
order-invariant\linebreak $\FOMODp(\sigma_\Sigma)$ on $\ClassStrings$ has
exactly the same expressive power as
$\FOMOD{\text{PFC}(p)}(\sigma_\Sigma)$, where $\text{PFC}(p)$ is the
set of all numbers whose prime factors are prime factors of $p$, and
$\FOMOD{\text{PFC}(p)}$ is first-order logic with modulo $m$ counting
quantifiers for all $m\in\text{PFC}(p)$. 

The present section's main result shows that for \emph{odd prime
  powers} $p$, the Hanf locality result of \cite{AMSS-SICOMP} can be
generalised from arb-invariant $\FO$ to arb-invariant $\FOMODp$ on
$\ClassStrings$:

\begin{theorem}\label{thm:HanfLocalityOnStrings}
Let $\Sigma$ be a finite alphabet. Let $k\in\NN$, let $q$ be a $k$-ary
query, and let $p$ be an odd prime power. If $q$ is definable in
$\arbinvFOMODp(\sigma_\Sigma)$ on $\ClassStrings$, then there
is a $c\in\NN$ such that $q$ is Hanf $(\log n)^c$-local on $\ClassStrings$.
\end{theorem}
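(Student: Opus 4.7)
The plan proceeds in three stages: arity reduction, a combinatorial analysis of Hanf equivalence on strings, and an application of Theorem~\ref{thm:weakGaifmanLocality}.

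First, I would reduce to the case $k=0$ using the arity-reduction Lemma~\ref{lem:string-query-arity-reduction} mentioned in the preceding discussion. That lemma converts a $k$-ary query $q$ on $\Sigma$-strings into a string language $L_{q}$ over an extended alphabet $\tilde\Sigma$ that marks the tuple positions by barred symbols, preserves $\arbinvFOMODp$-definability, and transfers Hanf $(\log n)^{c}$-locality from $L_{q}$ back to $q$ (up to a harmless change of constant). Hence it suffices to prove that every language $L \subseteq \Sigma^{+}$ defined by an $\arbinvFOMODp(\sigma_{\Sigma})$-sentence $\varphi$ is Hanf $(\log n)^{c}$-local on $\ClassStrings$ for some $c \in \NN$.

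Second, I would exploit the combinatorial characterisation of $\Hanfequiv_{r}$ on strings: $\S_{u} \Hanfequiv_{r} \S_{v}$ holds iff $u$ and $v$ exhibit the same multiset of length-$(2r{+}1)$ position-centred windows (after appropriate handling of boundary positions). Fixing $r := (\log n)^{c}$ for a sufficiently large constant $c$, I would use this characterisation to construct a chain of intermediate strings $u = w_{0}, w_{1}, \ldots, w_{M} = v$ of the same length such that each transition $w_{i} \mapsto w_{i+1}$ is a \emph{local block exchange}: pick two disjoint substrings of $w_{i}$ whose left- and right-boundary length-$(2r{+}1)$ windows are pairwise identical, and swap them. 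Each such exchange preserves the multiset of windows and hence the $\Hanfequiv_{r}$-equivalence class, and the two pairs of cut positions involved in the exchange have, by construction, pairwise isomorphic and pairwise disjoint $r$-neighbourhoods.

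Third, to justify each single step $w_{i} \in L \iff w_{i+1} \in L$, I would derive from $\varphi$ an auxiliary $\arbinvFOMODp$-formula with four free variables that expresses, in a suitable augmented structure built from $w_{i}$, whether the block exchange induced by the given four cut positions yields a string in $L$. The two $4$-tuples of cut positions corresponding to \emph{not} performing and to \emph{actually} performing the exchange have pairwise isomorphic, pairwise disjoint $r$-neighbourhoods in the augmented structure, so Theorem~\ref{thm:weakGaifmanLocality} yields the desired equivalence. Iterating along the chain then gives $u \in L \iff v \in L$.

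The main obstacle I anticipate is the third stage: faithfully encoding a genuine structural modification of a string (namely, exchanging two substrings) as the comparison of two tuples inside a single fixed auxiliary structure, while simultaneously keeping that auxiliary structure definable from the original one, keeping the auxiliary formula inside $\arbinvFOMODp$ for the \emph{same} odd prime power $p$, and ensuring that the relevant neighbourhoods in the auxiliary structure remain of polylogarithmic radius so that Theorem~\ref{thm:weakGaifmanLocality} can actually be invoked. A subsidiary combinatorial concern is bounding the chain length $M$ from the second stage polynomially in $n$, which is needed to iterate the equivalence along the chain and finally conclude $u \in L \iff v \in L$.
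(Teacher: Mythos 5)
Your overall architecture — reduce to $k=0$ via Lemma~\ref{lem:string-query-arity-reduction}, relate Hanf equivalence to chains of local block exchanges, then justify each exchange step by a locality theorem on an augmented structure — is the same as the paper's. Two comments on stages 2 and 3, one of which is a genuine gap.

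For stage 2, you need not reprove the connection between Hanf equivalence and chains of block exchanges: the paper introduces exactly your ``local block exchange'' as the \emph{disjoint $r$-swap operation} (Definition~\ref{def:disjoint-swaps}) and cites \cite{AMSS-SICOMP} (Prop.~5.7, Lemma~5.2, and the proof of Thm.~5.1 there) for the fact that closure under disjoint $(\log n)^d$-swaps implies Hanf $(\log n)^c$-locality for some $c>d$. This disposes of your chain-length worry: once that black box is in place, it suffices to show closure under a \emph{single} disjoint $(\log n)^d$-swap, which is Lemma~\ref{lemma:disjoint-swaps}.

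Stage 3 as you describe it would fail. You want to compare, via Theorem~\ref{thm:weakGaifmanLocality}, two $4$-tuples built from the same four cut positions $i,j,i',j'$ (one tuple encoding ``do not swap,'' the other ``swap''). But those two tuples are merely reorderings of the same underlying set of positions, so their $r$-ball neighbourhoods in the augmented structure coincide rather than being disjoint, and the disjointness hypothesis of weak Gaifman locality ($N_{f(n)}^\A(\ov a)\cap N_{f(n)}^\A(\ov b)=\emptyset$) fails. The individual cut positions having pairwise disjoint neighbourhoods does not help here. The paper circumvents exactly this by making the auxiliary formula \emph{unary}: it extends the signature with a binary relation $F$ linking $i{+}1$ to $j$ and $i'{+}1$ to $j'$, plus unary markers $X,Y_1,Y_2,Z$ for the cut endpoints, and then writes a formula $\psi(x_1):=\exists\tilde x_1\exists x_2\exists\tilde x_2\,\psi'$ that simulates $\varphi$ on the re-glued string. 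The two free-variable assignments $x_1=i{+}1$ and $x_1=i'{+}1$ (single elements, not $4$-tuples) do have disjoint isomorphic $((\log n)^d{-}1)$-neighbourhoods in the cut-open structure $\A$, so Theorem~\ref{thm:weakGaifmanLocality} applies and yields the contradiction. (Equivalently, one could invoke shift locality, Theorem~\ref{thm:CyclicShiftLocality}, with $t=2$, $k=2$ on the $2$-tuples $(i{+}1,j)$ and $(i'{+}1,j')$, using coprimality of $2$ and the odd prime power $p$ — but that is not what applying Theorem~\ref{thm:weakGaifmanLocality} to permuted $4$-tuples gives you.) You correctly flagged stage 3 as the main obstacle; the specific fix you are missing is the unary reformulation via the auxiliary relation $F$.
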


\noindent
Together with Theorem~\ref{thm:hanf2gaifman} this implies (general instead
of weak) Gaifman
locality on $\ClassStrings$:

\begin{corollary}\label{cor:GaifmanLocalityOnStrings}
Let $\Sigma$ be a finite alphabet. Let $k\in\NNpos$, let $q$ be a $k$-ary
query, and let $p$ be an odd prime power. If $q$ is definable in
$\arbinvFOMODp(\sigma_\Sigma)$ on $\ClassStrings$, then there
is a $c\in\NN$ such that $q$ is Gaifman $(\log n)^c$-local on $\ClassStrings$.
\end{corollary}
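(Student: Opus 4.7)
The plan is to derive this corollary as an immediate consequence of the two preceding results, namely Theorem~\ref{thm:HanfLocalityOnStrings} and Theorem~\ref{thm:hanf2gaifman}. Since $q$ is definable in $\arbinvFOMODp(\sigma_\Sigma)$ on $\ClassStrings$ for an odd prime power $p$, Theorem~\ref{thm:HanfLocalityOnStrings} first yields a constant $c_0\in\NN$ such that $q$ is Hanf $(\log n)^{c_0}$-local on $\ClassStrings$.

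Next I would apply Theorem~\ref{thm:hanf2gaifman} with the choices $\Class:=\ClassStrings$ and $f(n):=(\log n)^{c_0}$. This immediately gives that $q$ is Gaifman $\bigl(3(\log n)^{c_0}+1\bigr)$-local on $\ClassStrings$. It remains only to absorb the constant factor and the additive constant into a polylogarithmic radius of the prescribed shape. For $n$ large enough that $\log n\geq 4$, we have $(\log n)^{c_0+1}=(\log n)^{c_0}\cdot \log n \geq 4(\log n)^{c_0}>3(\log n)^{c_0}+1$, so setting $c:=c_0+1$ and enlarging the threshold $n_0$ from Definition~\ref{def:gaifman-locality} if necessary, we obtain that $q$ is Gaifman $(\log n)^c$-local on $\ClassStrings$, as required.

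There is no real obstacle here: the work has already been done in Theorem~\ref{thm:HanfLocalityOnStrings}, and Theorem~\ref{thm:hanf2gaifman} is a black-box transfer result from Hanf locality to Gaifman locality that applies uniformly to any class $\Class$ of finite $\sigma$-structures and any locality radius $f(n)$. The only minor bookkeeping step is the absorption of the factor $3$ and the $+1$ into the exponent, which is what forces the constant $c$ in the conclusion to be one larger than the constant produced by Theorem~\ref{thm:HanfLocalityOnStrings}.
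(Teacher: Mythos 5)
Your proof is correct and matches the paper's approach exactly: the paper derives the corollary by combining Theorem~\ref{thm:HanfLocalityOnStrings} with the Hanf-to-Gaifman transfer of Theorem~\ref{thm:hanf2gaifman}. The only thing the paper leaves implicit is the bookkeeping step absorbing the factor $3$ and the additive $1$ into the exponent, which you spell out correctly (a larger locality radius is weaker, so Gaifman $g(n)$-locality implies Gaifman $f(n)$-locality for $f\geq g$ eventually).
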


Note that this corollary does not contradict the non-locality result of 
Proposition~\ref{prop:notGaifmanLocal}, as the counter-example given in
the proof of that proposition is not a string structure.

The remainder of this section is devoted to the proof of
Theorem~\ref{thm:HanfLocalityOnStrings}.
We follow the overall approach of \cite{AMSS-SICOMP}.
The crucial step is to prove Theorem~\ref{thm:HanfLocalityOnStrings}
for queries $q$ of arity $k=0$.
The case for queries of arity $k\geq 1$ can then easily be reduced to the case for queries of arity $0$ by adding $k$ extra symbols to the alphabet; see below.

Note that a $0$-ary query $q$ defines the string-language
$L_q:=\setc{w\in\Sigma^+}{()\in q(\S_w)}$, where $()$ denotes the unique
tuple of arity 0. 
The language $L_q$ is called \emph{Hanf $f(n)$-local} iff
$q$ is Hanf $f(n)$-local on $\ClassStrings$.
For proving Theorem~\ref{thm:HanfLocalityOnStrings} for the case
$k=0$, we consider the following notion.

\begin{definition}[Disjoint swaps \cite{AMSS-SICOMP}]\label{def:disjoint-swaps}
 Let $r\in\NN$ and let $w\in\Sigma^+$ be a string over a finite
 alphabet $\Sigma$. 
 A string $w'\in\Sigma^+$ is obtained from $w$ by a \emph{disjoint
   $r$-swap operation} if 
 there exist strings
 $\stringfont{x},\stringfont{u},\stringfont{y},\stringfont{v},\stringfont{z}$
 such that  
 $w=\stringfont{xuyvz}$ and $w'=\stringfont{xvyuz}$, and for the positions
 $i,j,i',j'$ of $w$ just before
 $\stringfont{u},\stringfont{y},\stringfont{v},\stringfont{z}$ the
 following is true: \
 The neighbourhoods
 $N^{\S_w}_r(i)$, $N^{\S_w}_r(j)$,  $N^{\S_w}_r(i')$,
 $N^{\S_w}_r(j')$ are pairwise disjoint, and 
 $(\NS^{\S_w}_r(i),i)\isom (\NS^{\S_w}_r(i'),i')$ and
 $(\NS^{\S_w}_r(j),j)\isom (\NS^{\S_w}_r(j'),j')$.

 Let $f:\NN\to\NN$. A string-language $L\subseteq \Sigma^+$ is
 \emph{closed under disjoint $f(n)$-swaps} if there exists an
 $n_0\in\NN$ such that for every $n\in\NNpos$ with $n\geq n_0$, all
 strings $w\in\Sigma^+$ of length $n$, and all strings $w'$ obtained
 from $w$
 by a disjoint $f(n)$-swap operation, we have: \ $w\in L \iff w'\in L$.
\end{definition}
  \begin{figure}[t]
    \centering
    \resizebox{250pt}{!}{\includegraphics{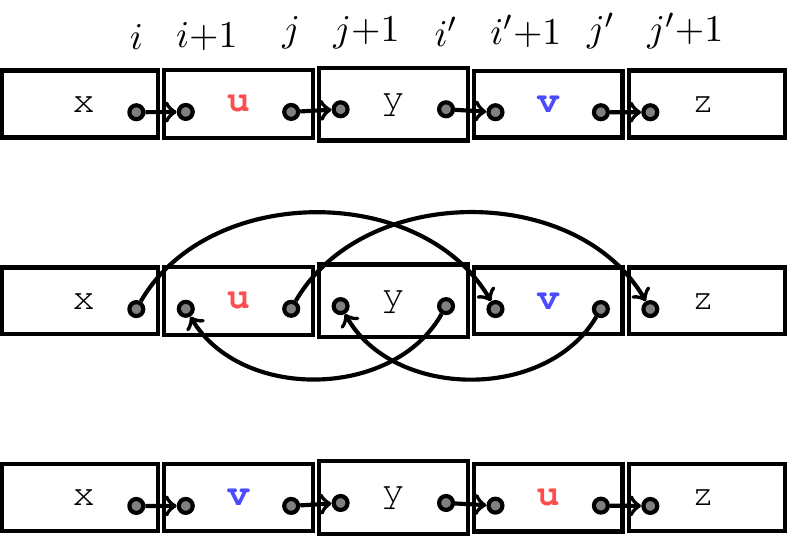}}
\caption{The picture shows how the disjoint $r$-swap operation turns the edge relation
  of the string structure of a string $w = \stringfont{xuyvz}$ (top) into the edge relation of the string structure of $w' = \stringfont{xvyuz}$ (middle and bottom).}

\end{figure}

It was shown in \cite{AMSS-SICOMP} (see Proposition~5.7, Lemma~5.2,
and the proof of Theorem~5.1 in \cite{AMSS-SICOMP}) 
that if a language $L\subseteq \Sigma^+$ is
closed under 
disjoint $(\log n)^d$-swaps, for some $d\in\NN$, then
it is Hanf $(\log n)^c$-local, for some $c>d$.
Hence, the following lemma immediately implies
Theorem~\ref{thm:HanfLocalityOnStrings} for the case $k=0$.

\begin{lemma}\label{lemma:disjoint-swaps}
  Let $\Sigma$ be a finite alphabet, let $L\subseteq \Sigma^{+}$, and
  let $p$ be an odd prime power.
  If $L$ is definable by an $\arbinvFOMODp(\sigma_\Sigma)$-sentence,
  then there exists a constant $d\in\NN$
  such that $L$ is closed under disjoint $(\log n)^d$-swaps.
\end{lemma}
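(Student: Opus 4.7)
The plan is to follow the circuit-complexity-based approach of \cite{AMSS-SICOMP}, adapting it to the $\arbinvFOMODp$ setting by using Smolensky's lower bound (Theorem~\ref{thm:smolensky}) in place of the $\ACO$-circuit lower bound. By Theorem~\ref{thm:formula2circuit}, the $\arbinvFOMODp(\sigma_\Sigma)$-sentence $\varphi$ defining $L$ gives rise to a family $(C_n)_{n \ge 1}$ of $\MODp$-circuits of constant depth $d_0$ and polynomial size $n^s$ such that $C_n$ accepts $\Rep^\iota(\S_w)$ iff $w \in L$, for every $w \in \Sigma^+$ of length $n$ and every embedding $\iota$ of $\S_w$.

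Setting $\hat d := d_0 + 6$, $t := 2$ (which has a prime factor different from that of the odd prime power $p$), and $d := 2\ell\hat d$ with $\ell,\varepsilon$ from Theorem~\ref{thm:smolensky} applied at depth $\hat d$ and $r := t$, I assume for contradiction that $L$ is not closed under disjoint $(\log n)^d$-swaps. For infinitely many $n$, this yields a counterexample: a string $w = \stringfont{xuyvz} \in \Sigma^+$ of length $n$ and positions $i, j, i', j'$ satisfying the disjoint $m$-swap conditions for $m := (\log n)^d$, such that $w \in L$ but $w' := \stringfont{xvyuz} \notin L$. The plan is to build from these data a $\MODp$-circuit $\hat D$ of depth $\hat d$ and polynomial-in-$n$ size computing $\MOD_t$ on $m$ input bits, contradicting Smolensky's bound $2^{\varepsilon(\log n)^2} = n^{\varepsilon\log n}$ for sufficiently large $n$.

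The construction of $\hat D$ proceeds in two steps paralleling Lemmas~\ref{lemma:CircuitLemma1} and~\ref{lemma:CircuitLemma2}. In the first step, I build a $\MODp$-circuit $\tilde D$ of depth $d_0$ and size $n^s$ on $m$ input bits such that $\tilde D(b)$ depends on $b \in \set{0,1}^m$ only through $|b|_1 \bmod t$, accepts when $|b|_1 \equiv 0 \bmod t$, and rejects when $|b|_1 \equiv 1 \bmod t$. The circuit $\tilde D$ is obtained from $C_n$ by substituting its input bits with Boolean functions of $b$: bits where the representations of $\S_w$ and $\S_{w'}$ coincide (all $E$-bits, all $P_a$-bits outside the swap region, and all $P_a$-bits within distance $m$ of the anchors---which agree by the isomorphism conditions) are fixed to their common constants, while the remaining $P_a$-bits (at positions deep in the $\stringfont{u}$-, $\stringfont{y}$-, $\stringfont{v}$-blocks) are replaced by literals in the $b_\nu$, arranged via the block-correspondence of the swap so that each bit $b_\nu$ toggles one shell of labels, and the parity of $|b|_1$ determines whether the resulting input to $C_n$ encodes $\S_w$ or $\S_{w'}$. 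In the second step, Lemma~\ref{lemma:CircuitLemma2} applied with this $t$ produces the desired $\hat D$ of depth $\hat d$ and size $2n^s + 2m^t$.

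The main obstacle lies in the first step: designing the bit-substitution scheme so that (a) each substituted input is a single literal in $b$ (so $\tilde D$ inherits the depth and size of $C_n$) and (b) the parity of $|b|_1$ precisely toggles the encoded string between $\S_w$ and $\S_{w'}$. The verification of (b) will proceed by an inductive shell-by-shell argument, analogous to the proof of the key isomorphism property in Lemma~\ref{lemma:CircuitLemma1}, but adapted to string structures and to toggling labels rather than edges so that sentences (and not only formulas with free variables) are distinguished by the parity of the input.
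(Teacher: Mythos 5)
There is a genuine gap in the first step of your construction, namely in building $\tilde D$ from $C_n$ by a shell-by-shell substitution.

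Your plan is to take the circuit $C_n$ for $\varphi$ over $\sigma_\Sigma$, fix the input bits where $\Rep(\S_w)$ and $\Rep(\S_{w'})$ agree, and substitute the remaining input bits (where they disagree) with literals in $b_1,\ldots,b_m$ ``arranged via the block-correspondence of the swap so that each bit $b_\nu$ toggles one shell of labels.'' But the positions of $w$ at which the label disagrees between $w=\stringfont{xuyvz}$ and $w'=\stringfont{xvyuz}$ are exactly the positions \emph{beyond} distance $m$ from all four anchor points $i,j,i',j'$: by the isomorphism conditions of the disjoint $m$-swap, the labels of $w$ and $w'$ \emph{agree} on all of $N^{\S_w}_m(i)\cup N^{\S_w}_m(j)\cup N^{\S_w}_m(i')\cup N^{\S_w}_m(j')$, so the disagreements lie deep in the interiors of the blocks $\stringfont{u},\stringfont{y},\stringfont{v}$. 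These interior positions are not organised into $m$ shells around the anchors, there can be $\Theta(n)\gg m$ of them, and, most importantly, a partial swap of labels at an arbitrary sub-portion of them does not yield a structure isomorphic to $\S_w$ or to $\S_{w'}$ --- e.g., swapping the first half of $\stringfont{u}$ with the first half of $\stringfont{v}$ and leaving the rest alone gives a string structure isomorphic to neither. So the analogue of property~\eqref{prop:i-cong-t} of Lemma~\ref{lemma:CircuitLemma1} fails, the resulting $\tilde D$ would not depend only on $\abs{b}_1\bmod t$, and Lemma~\ref{lemma:CircuitLemma2} cannot be applied. The mechanism behind Lemma~\ref{lemma:CircuitLemma1} crucially relies on the modification being confined to the shell boundaries around the distinguished tuple, and a disjoint swap of string blocks is not of that form relative to $\S_w$.

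The paper's proof handles this with an extra-signature trick rather than rebuilding the circuit argument from scratch: it extends $\sigma_\Sigma$ to $\tilde\sigma=\sigma_\Sigma\cup\set{F,X,Y_1,Y_2,Z}$, replaces $\S_w$ by a $\tilde\sigma$-structure $\A$ in which the four boundary $E$-edges are deleted and the block endpoints are marked by $F,X,Y_1,Y_2,Z$, and rewrites $\varphi$ as a one-free-variable formula $\psi(x_1)$ that simulates $\varphi$ on $\S_w$ when $x_1=a:=i{+}1$ and on $\S_{w'}$ when $x_1=a':=i'{+}1$. In $\A$ the two positions $a,a'$ \emph{do} have disjoint isomorphic $(m{-}1)$-neighbourhoods (each block becomes a cycle via its $F$-edge), and then one simply invokes weak Gaifman locality, Theorem~\ref{thm:weakGaifmanLocality} --- which is where Lemmas~\ref{lemma:CircuitLemma1} and~\ref{lemma:CircuitLemma2} and Smolensky's bound are used, but applied to $\psi$ and $\A$, not to $\varphi$ and $\S_w$. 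If you want to salvage a direct circuit argument, you would need to carry out essentially this signature/formula transformation first, at which point you would just be re-proving Theorem~\ref{thm:weakGaifmanLocality} in this special case.
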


\begin{proof}
 We proceed in the same way as in the proof of Proposition~5.5
 in \cite{AMSS-SICOMP}, which obtained
 the analogue of Lemma~\ref{lemma:disjoint-swaps} for
 $\arbinvFO(\sigma_\Sigma)$-sentences.
 However, we cannot just copy the proof from there, since that proof
 relies on (general) Gaifman locality (with polylogarithmic locality
 radius) of queries definable in $\arbinvFO(\sigma_\Sigma)$, 
 while in the present case we have available (from
 Theorem~\ref{thm:weakGaifmanLocality}) only \emph{weak} Gaifman 
 locality (with polylogarithmic locality radius) of queries definable
 in $\arbinvFOMODp(\sigma_\Sigma)$.

 Let $\varphi$ be an $\arbinvFOMODp(\sigma_\Sigma)$-sentence
 which defines a string language
 $L_\varphi=\setc{w\in\Sigma^+}{\S_w\models\varphi}$.
 For contradiction, assume that there is no $d\in\NN$ such that $L_\varphi$ is
 closed under disjoint $(\log n)^d$-swaps.
 For each fixed $d,n_0\in\NN$ let $n\geq n_0$ and let $w,w'$ be
 strings of length $n$ which witness the violation of the ``closure
 under disjoint $(\log n)^d$-swaps'' property. That is, $w\in L_\varphi$,
 $w'\not\in L_\varphi$, and $w'$ is obtained from $w$ by a disjoint $(\log
 n)^d$-swap operation. Thus, there exist strings $\stringfont{x},\stringfont{u},\stringfont{y},\stringfont{v},\stringfont{z}$ such that
 $w=\stringfont{xuyvz}$ and $w'=\stringfont{xvyuz}$, and for the positions $i,j,i',j'$ of $w$
 just before
 $\stringfont{u},\stringfont{y},\stringfont{v},\stringfont{z}$ the
 following is true for $r:= (\log n)^d$: 
 the neighbourhoods
 $N^{\S_w}_r(i)$, $N^{\S_w}_r(j)$,  $N^{\S_w}_r(i')$,
 $N^{\S_w}_r(j')$ are pairwise disjoint, and 
 $(\NS^{\S_w}_r(i),i)\isom (\NS^{\S_w}_r(i'),i')$ and
 $(\NS^{\S_w}_r(j),j)\isom (\NS^{\S_w}_r(j'),j')$.

\medskip

\noindent
 The overall proof idea is as follows: 
 \begin{enumerate}
  \item
   Choose an appropriate extension
   $\tilde{\sigma}$ of the signature $\sigma_\Sigma$,
  \item 
   modify the formula $\varphi$ into a suitable
   $\FOMODp(\tilde{\sigma}_\ARB)$-formula $\psi(x_1)$ with one free
   variable, and
 \item
   define for each string $w$ and all tuples $\ov p := (i,i',j,j')$ of positions of $w$ which satisfy $0 \leq i < i' < j < j' < \abs{w}$ a 
   $\tilde{\sigma}$-structure $\A:=\A_w^{\ov p}$ with the same universe as $\S_w$, such that
   the positions $a:=i{+}1$ and $a':=i'{+}1$ have disjoint and
   isomorphic $((\log n)^d{-}1)$-neighbourhoods in $\A$ if the
   $(\log n)^d$-neighbourhoods of $i$ and $i'$ and those of $j$ and $j'$ in $\S_{w}$
   are isomorphic and all these $(\log n)^d$-neighbourhoods are pairwise disjoint
 \end{enumerate}
 \noindent
 such that the following is satisfied: 
 \begin{enumerate}[resume]
  \item $\psi(x_1)$ is arb-invariant on $\A$,
  \item $\A\models\psi[a] \; {\iff} \; \S_w\models\varphi$,
  \item $\A\models\psi[a'] \; {\iff} \; \S_{w'}\models\varphi$.
  \end{enumerate}

 \noindent
 Note that $w\in L_\varphi$ and $w'\not\in L_\varphi$ imply that
 $\A\models\psi[a]$ and $\A\not\models\psi[a']$. In combination
 with (3) this shows that the unary query defined by
 $\psi(x_1)$ is \emph{not} weakly Gaifman $((\log n)^d{-}1)$-local on
 the class $\Class$ containing all structures $\A_{w}^{\ov p}$ which we defined above.
 Property (4) is true for each choice of the string $w$ and positions $\ov p$.
 Hence, $\psi(x_{1})$ is arb-invariant on $\Class$.
 Since $d$ can be chosen arbitrarily large, and our choice of the
 formula $\psi(x_1)$ will not depend on $d$, this contradicts the fact that 
 the formula $\psi(x_{1})$ is weakly Gaifman local on $\Class$ with polylogarithmic locality radius
 according to Theorem~\ref{thm:weakGaifmanLocality}.

 \medskip

 \noindent
 The details described in items (1)--(4) are carried out as follows.

 \medskip

 \noindent
 ad (1): \
 Let $\tilde{\sigma}:=\sigma_\Sigma\cup\set{F,X,Y_1,Y_2,Z}$, where $F$
 is a binary relation symbol and $X,Y_1,Y_2,Z$ are unary relation
 symbols. Thus,
 $\tilde{\sigma}=\set{E,F,X,Y_1,Y_2,Z}\cup\setc{P_a}{a\in \Sigma}$.

 \bigskip

 \noindent
 ad (3): \
 Let $\A:=\A_w^{\ov p}$ be the $\tilde{\sigma}$-structure defined as follows (an
 illustration can be found in Figure~\ref{fig:lemma:disjoint-swaps}):
 \begin{itemize}
  \item
   $\A$ has the same universe as $\S_w$, i.e., $A=\set{1,\ldots,|w|}$.
  \item $E^{\A}$ is obtained from $E^{\S_{w}}$ by removing all edges between
    the strings
    $\stringfont{x},\stringfont{u},\stringfont{y},\stringfont{v},\stringfont{z}$, 
            i.e.\\ 
    $ E^{\A}\ :=\ E^{\S_{w}} \setminus \set{ (i,i{+}1),\
      (j,j{+}1),\ (i', i'{+}1),\ (j', j'{+}1)}. $ 
  \item $F^{\A}$ relates the first and last position of $\stringfont{u}$,
    and the first and last position of $\stringfont{v}$, i.e.\\
    $ F^{\A}\ :=\ \set{(i{+}1,j),\ (i'{+}1,j')}.$
  \item $X^{\A}$ marks the last position of $\stringfont{x}$, \ 
    $Y_1^{\A}$ and $Y_2^{\A}$ mark the first and the last position
    of $\stringfont{y}$, \ and \
    $Z^{\A}$ marks the first position of $\stringfont{z}$, i.e.\\ 
    $X^{\A}:=\set{i}$,\ \ $Y_1^{\A}:=\set{j{+}1}$, \ \ 
    $Y_2^{\A}:=\set{i'}$, \ \  $Z^{\A}:=\set{j'{+}1}$. 
  \item for each $a\in\Sigma$, \ $P_a^{\A}$ is identical to
    $P_a^{\S_w}$, i.e.\ \ $P_a^{\A}=P_a^{\S_w}$.
 \end{itemize}
 
 \noindent
 It is easy to see that, if in $\S_w$ the $(\log n)^d$-neighbourhoods of $i$
 and $i'$ are isomorphic and the $(\log n)^d$-neighbourhoods of $j$ and $j'$ are isomorphic and the $(\log n)^d$-neighbourhoods of $i,i',j,j'$ are pairwise disjoint,
 then in $\A$ the $((\log n)^d{-}1)$-neighbourhoods of $a:=i{+}1$ and
 $a':=i'{+}1$ are disjoint and isomorphic.

  \begin{figure}[ht]
    \centering
    \resizebox{250pt}{!}{\includegraphics{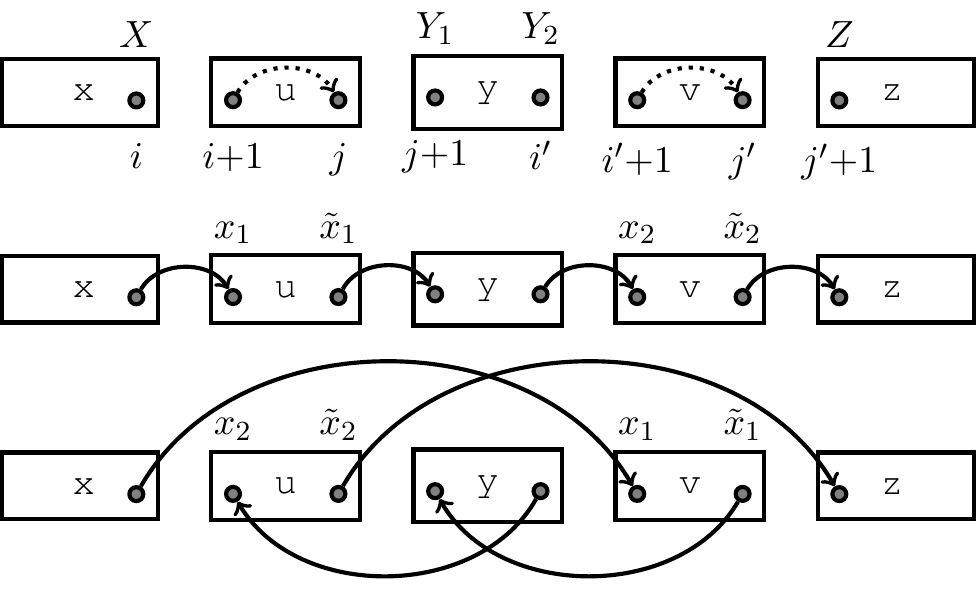}}
    \caption{Structure $\A$ (top) of
      Lemma~\ref{lemma:disjoint-swaps} and the edge relations
      simulated by the formula $\psi(x_1)$ 
      if $x_1$ is assigned the value $a:=i{+}1$ (middle)
      or the value $a':=i'{+}1$ (bottom). $F$-edges in $\A$ are
      depicted by dotted arcs; the $E$-edges in $\A$ are given as successor
      relations within each of the framed boxes. 
    }\label{fig:lemma:disjoint-swaps}
  \end{figure}

 \bigskip

 \noindent
 ad (2) and (4),\,(5),\,(6): \
 We define $\psi(x_1)$ in such a way that, when evaluated in one of the structures $\A:=\A_w^{\ov p}$, it
 does the following:  
  \begin{itemize}
  \item if the variable $x_1$ is assigned the value $a:=i{+}1$, then $\psi(x_1)$ simulates $\varphi$ on $\S_w$ .
  \item if the variable $x_1$ is assigned the value $a':=i'{+}1$, then $\psi(x_1)$ simulates $\varphi$ on $\S_{w'}$ 
  \item if $x_1$ is assigned to a value different from $a,a'$,
     then $\psi(x_1)$ is not satisfied in $\A$.
  \end{itemize}

  \noindent
  The first two items imply items (5) and (6) above.
  Together with the last item and the arb-invariance of $\varphi$ on $\S_{w}$ and $\S_{w'}$, this implies that item (4) is satisfied.\\
 We let \ $\psi(x_1) := \exists \tilde{x}_1 \exists x_2 \exists
 \tilde{x}_2 \; \psi'$, \ where $\psi'$ is
 a conjunction of formulas stating that:
  \begin{itemize}
  \item $F(x_1, \tilde{x}_1)$ holds.
    \ (This ensures that $\psi(x_1)$ is satisfied
    in $\A$ only if $x_1$ is assigned one of the values $a:=i{+}1$
    or $a':=i'{+}1$.)

  \item $x_2 \neq x_1$ and $F(x_2, \tilde{x}_2)$ holds.
  \item
    The formula $\varphi'(x_1, \tilde{x}_1, x_2, \tilde{x}_2)$ is satisfied, where
    $\varphi'$ is obtained from $\varphi$ by
    replacing each atom of the form $E(\mu,\nu)$ with
    the formula $\chi_E(x_1, \tilde{x}_1, x_2, \tilde{x}_2,\mu,\nu)$ defined as follows:\\
    $\chi_E$ is the disjunction of formulas stating that
    \begin{itemize}
    \item $E(\mu,\nu)$ holds
      \
      (i.e., there already exists an $E$-edge from $\mu$ to $\nu$ in $\A$),

    \item $X(\mu)$ holds and $\nu=x_1$
      \
       (i.e., a new $E$-edge from
       the last position of the string $\stringfont{x}$ to the position assigned to
       the variable $x_1$ is introduced)
    \item $\mu =\tilde{x}_1$ and $Y_{1}(\nu)$ holds
      \
      (i.e., a new $E$-edge
      from the position assigned to $\tilde{x}_1$ to the first
      position of the string $\stringfont{y}$
      is introduced)
    \item $Y_{2}(\mu)$ and $\nu=x_2$
     \
      (i.e., a new $E$-edge
      from the last position of the string $\stringfont{y}$ to the
      position assigned to the variable $x_2$ is
      introduced)
    \item $\mu=\tilde{x}_2$ and $Z(\nu)$
     \
      (i.e., a new $E$-edge from the position assigned to the variable $\tilde{x}_2$
      to the first position of the string $\stringfont{z}$ is introduced).
    \end{itemize}

  \noindent It is not difficult to see that $\chi_E$ simulates the edge relation
  of $\S_{w}$ in $\A$ if the variable $x_1$ is assigned the value
  $a:=i{+}1$;
  and it simulates the edge relation of $\S_{w'}$ if the variable $x_1$
  is assigned the value $a':=i'{+}1$ (see
  Figure~\ref{fig:lemma:disjoint-swaps} for an illustration).
  \end{itemize}

  \noindent
  This finishes the construction for items (2) and (4), (5), (6). In summary, the
  proof of Lemma~\ref{lemma:disjoint-swaps} is complete.
\end{proof}
Now that we have proved Theorem~\ref{thm:HanfLocalityOnStrings} for the case $k=0$,
we explain how the general case can be obtained from this.
To this end, we convert a $k$-ary query on strings to a language
over an extended alphabet with the same relevant definability
and locality properties. 
This can be done using a standard technique which encodes variable assignments
in an extended alphabet.

For each alphabet $\Sigma$ and each $k\geq 1$, we let $\Sigma_{\textup{var}(k)} := \Sigma \times 2^{[k]}$.
The subsets of $[k]$ are used to encode an assignment of $k$ variables
to the positions of a string over the alphabet $\Sigma$.
We let $L_{\textup{assign}(k)}$ denote the language of strings $w\in\Sigma_{\textup{var}(k)}^{+}$ 
where for each number $i\in [k]$ there is a unique position $\operatorname{occ}_{i}(w) \in \set{1,\ldots, \abs{w}}$
such that the label of $\occ_{i}(w)$ is $(a,X) \in \Sigma_{\textup{var}(k)}$ with $i\in X$.
Note that this language is clearly $\FO[\sigma_{\Sigma_{\textup{var}(k)}}]$-definable.
For each string $w\in L_{\textup{assign}(k)}$, we let $\tilde w\in \Sigma^{+}$
denote the string where each symbol $(a,X)$ is replaced by $a$.
Furthermore, we let $\ov{\occ_{k}}(w) := (\occ_{0}(w), \ldots, \occ_{k-1}(w))$.

\begin{lemma}\label{lem:string-query-arity-reduction}
  Let $\Sigma$ be a finite alphabet and let $k\in \NNpos$. With each $k$-ary query $q$ on
  $\ClassStrings$ we associate the language
  \[ A_{q} \ := \  \setc{w\in L_{\textup{assign}(k)}}{\ov{\occ}_{k}(w) \in q(S_{\tilde w})}.\] 
  Then 
  \begin{enumerate}
  \item\label{item:definability-q-to-Aq} For each $p\in \NNpos$, if $q$ is definable in $\ordinvFOMODp(\sigma_\Sigma)$
    on $\ClassStrings$, then $A_q$ is definable in $\ordinvFOMODp(\sigma_{\Sigma_{\textup{var}(k)}})$.
  \item\label{item:locality-q-to-Aq} The query $q$ is Hanf $f(n)$-local on $\ClassStrings$ iff the language $A_{q}$ is Hanf $f(n)$-local.
  \end{enumerate}
\end{lemma}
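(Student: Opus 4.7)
The lemma has two parts which I address in turn.

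For part (\ref{item:definability-q-to-Aq}), let $\varphi(x_0,\ldots,x_{k-1})$ be an $\ordinvFOMODp(\sigma_\Sigma)$-formula defining $q$. I would construct a sentence $\varphi^{*}$ of the form
\begin{equation*}
\varphi^{*} \ := \ \varphi_{\textup{string}} \;\und\; \varphi_{\textup{assign}} \;\und\; \exists x_0\cdots\exists x_{k-1}\big(\Und_{i\in [k]}\varphi_{\occ_{i}}(x_i) \;\und\; \varphi'(x_0,\ldots,x_{k-1})\big),
\end{equation*}
where $\varphi_{\textup{string}}$ is a standard $\FO$-sentence cutting out the string structures over $\sigma_{\Sigma_{\textup{var}(k)}}$, $\varphi_{\textup{assign}}$ is an $\FO$-sentence defining $L_{\textup{assign}(k)}$, each $\varphi_{\occ_{i}}(x)$ is an $\FO$-formula pinning down $x=\occ_{i}$, and $\varphi'$ is obtained from $\varphi$ by replacing every atom $P_a(y)$ with $\Oder_{X\subseteq [k]} P_{(a,X)}(y)$ (leaving $E$- and equality atoms alone). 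Outside the string structures in $L_{\textup{assign}(k)}$ the sentence $\varphi^{*}$ is unconditionally false, and on $\S_w$ with $w\in L_{\textup{assign}(k)}$ it amounts to evaluating $\varphi$ on $(\S_{\tilde w},\ov{\occ_k}(w))$ under the induced embedding, whose truth value is embedding-independent by the order-invariance of $\varphi$ on $\ClassStrings$.

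The backbone of part (\ref{item:locality-q-to-Aq}) is the correspondence
\begin{equation*}
\S_w \Hanfequiv_r \S_{w'} \quad\iff\quad (\S_{\tilde w},\ov{\occ_k}(w)) \Hanfequiv_r (\S_{\tilde w'},\ov{\occ_k}(w'))
\end{equation*}
for all $w,w'\in L_{\textup{assign}(k)}$ of equal length and all $r\in\NN$, together with the easy observation that membership in $L_{\textup{assign}(k)}$ is preserved already under $\Hanfequiv_0$, being a label-count condition. Granted these, both directions of the iff in (\ref{item:locality-q-to-Aq}) reduce to unpacking definitions: if $q$ is Hanf $f(n)$-local and $\S_w\Hanfequiv_{f(n)}\S_{w'}$, then either both of $w,w'$ lie outside $L_{\textup{assign}(k)}$ (and hence outside $A_q$) or both lie inside, in which case the correspondence together with Hanf-locality of $q$ yield $w\in A_q \iff w'\in A_q$. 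Conversely, given $(\S_u,\ov a)\Hanfequiv_{f(n)}(\S_v,\ov b)$, encoding $\ov a$ in $u$ (resp.\ $\ov b$ in $v$) via the appropriate index-set labels produces $w,w'\in L_{\textup{assign}(k)}$ with $\S_w\Hanfequiv_{f(n)}\S_{w'}$ by the correspondence, whence Hanf-locality of $A_q$ concludes.

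The $(\Leftarrow)$ direction of the correspondence is direct: any bijection $\beta'$ witnessing tuple Hanf equivalence already serves on the encoded side, since the restriction of each tuple-neighbourhood isomorphism to the $r$-ball around the ``new'' point $c$ must send $\occ_i(w)\to\occ_i(w')$ whenever such a position lies in the ball, and hence preserves labels in $\Sigma_{\textup{var}(k)}$. The $(\Rightarrow)$ direction is the step I expect to require the most care: one must glue the local isomorphisms, provided by the encoded Hanf equivalence on the single-point $r$-balls around the elements of $(\ov{\occ_k}(w),c)$, into a single isomorphism of their union. In string structures this gluing is tractable because the Gaifman graph is a path; each local isomorphism is forced to be a shift on its ball, and the uniquely labelled $\occ_i$-positions constrain the relevant shifts to agree on any overlap of balls. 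Once the gluing is verified, the rest of the argument is purely bookkeeping.
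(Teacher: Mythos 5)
Your treatment of part~(\ref{item:definability-q-to-Aq}) and of the ``$A_q$ local $\Rightarrow q$ local'' half of part~(\ref{item:locality-q-to-Aq}) is essentially the paper's and is correct; the $(\Leftarrow)$ direction of your correspondence is the easy one, and it is all that is actually needed to finish Theorem~\ref{thm:HanfLocalityOnStrings} for $k\geq 1$.

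The $(\Rightarrow)$ direction of the correspondence is where the trouble lies, and your gluing sketch does not save it. You argue that the uniquely labelled $\occ_i$-positions force the local shifts to agree on overlaps; this is sound for an overlap of two $\occ_i$-balls, but it says nothing about an overlap of an $\occ_i$-ball with the ball of the unconstrained point $c$, whose shift $\beta(c)-c$ is not pinned down by any label. In fact the implication is false. Take $\Sigma=\{a\}$, $k=1$, $r=1$, and $u=0^{3}\,1\,0^{4},\ v=0^{4}\,1\,0^{3}\in L_{\textup{assign}(1)}$, writing $0$ for $(a,\emptyset)$ and $1$ for $(a,\{0\})$; then $\tilde u=\tilde v=a^{8}$, $\occ_0(u)=4$, $\occ_0(v)=5$. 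The multisets of $1$-neighbourhood types in $\S_u$ and $\S_v$ agree, so $\S_u\Hanfequiv_1\S_v$; yet $(\S_{a^{8}},4)\not\Hanfequiv_1(\S_{a^{8}},5)$, because the joint $1$-neighbourhood of the pair $(4,1)$ in $\S_{a^{8}}$ is a five-vertex directed path with the distinguished pair at positions $4$ and $1$ of that path, and no element $c$ of $(\S_{a^{8}},5)$ realises this type. So $\S_u\Hanfequiv_r\S_v$ need not imply $(\S_{\tilde u},\ov{\occ}_k(u))\Hanfequiv_r(\S_{\tilde v},\ov{\occ}_k(v))$, and there is no alternative witnessing bijection to retreat to. You inherit this from the paper, whose justification (``a bijection $\beta$ witnesses the first statement iff it witnesses the second statement'') over-claims at exactly the same point; be aware that this means the ``$q$ local $\Rightarrow A_q$ local'' half of part~(\ref{item:locality-q-to-Aq}), with the same radius $f(n)$ on both sides, is not actually true as stated, so a correct write-up would either drop that direction or enlarge the radius for $A_q$.
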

\begin{proof}\mbox{}
  \begin{enumerate}[label=Ad (\arabic*):]
  \item[Ad~\eqref{item:definability-q-to-Aq}:]

    Let $\phi(\ov x)$ be a formula of $\ordinvFOMODp(\sigma_\Sigma)$ defining $q$, where $\ov x := (x_{0}, \ldots, x_{k-1})$.
    For each $i\in [k]$, let $\psi_{\occ,i}(x) := \bigvee_{a\in \Sigma, X \in 2^{[k]}, i\in X} P_{(a,X)}(x)$.
  That is, the formula states that $i$ occurs in the set in the second component of the label at position $x$.
    Let
    \[ \hat \phi := \exists x_{0} \dotsb \exists x_{k-1} \bigwedge_{i\in [k]} \psi_{\occ,i}(x_{i}) \land \phi', \]
    where $\phi'$ is obtained from $\phi$ by replacing each occurrence of a relation symbol $P_{a}(x)$, for $a\in \Sigma$, by the formula $\bigvee_{X \in 2^{[k]}} P_{(a,X)}(x)$
    which states for a position $x$ of $\S_{w}$, for each string $w\in \Sigma_{\textup{var}(k)}^{+}$, that the label at position $x$ in $\tilde w$ is $a$.
    Let $\psi$ be an $\FO$-sentence which defines $L_{\textup{assign}(k)}$ on $\ClassStrings[\Sigma_{\textup{var}(k)}]$.
    It is straightforward to verify that $\hat \phi \land \psi$ is a formula of $\ordinvFOMODp(\sigma_{\Sigma_{\textup{var}(k)}})$ which defines the language $A_{q}$.

  \item[Ad~\eqref{item:locality-q-to-Aq}:]
    Note that
    $\S_{u} \Hanfequiv_{f(n)} \S_{v}$ iff
    $(\S_{\tilde u}, \ov{\occ}_{k}(u)) \Hanfequiv_{f(n)} (\S_{\tilde v}, \ov{\occ}_{k}(v))$,
    for all $u,v \in L_{\textup{assign}(k)}$ of length $n$.
    More concretely, a bijection $\beta$
    witnesses the first statement iff it witnesses the second statement.
    This follows from the fact that  $\beta$ preserves
    the second component of the labelling of
    $\S_{u}$ (i.e. the label of $\beta(x)$ is the same as the label of $x$, for all elements
    $x\in\set{1, \ldots,n}$) iff $\beta(\occ_{i}(u)) = \occ_{i}(v)$, for all $i\in [k]$,
    and that the label of each position of $\tilde u$ and $\tilde v$
    is the first component of the labelling of that position in $u$ and $v$.

    Note also that $L_{\textup{assign}(k)}$ is $0$-local since its definition depends only on the labelling.

    First, we show that locality of $q$ implies locality of $A_{q}$.
    Suppose that $\S_{u} \Hanfequiv_{f(n)} \S_{v}$, for 
    $u,v \in \Sigma_{\textup{var}(k)}^{n}$.
    We have $u\in L_{\textup{assign}(k)}$ iff $v\in L_{\textup{assign}(k)}$.
    If $u,v \notin L_{\textup{assign}(k)}$, then $u,v \notin A_{q}$.
    Suppose that $u,v \in L_{\textup{assign}(k)}$.
    Then $(\S_{\tilde u}, \ov{\occ}_k(u)) \Hanfequiv_{f(n)} (\S_{\tilde v}, \ov{\occ}_k(v))$
    and we obtain that $\ov{\occ}_k(u) \in q(\S_{\tilde u})$ iff $\ov{\occ}_k(v) \in q(\S_{\tilde v})$, by $f(n)$-locality of $q$, and so $u \in A_{q}$ iff $v\in A_{q}$.

    Now we show that locality of $A_{q}$ implies locality of $q$.
    Let $u,v \in \Sigma^{n}$ and $\ov a = (a_{0}, \ldots, a_{k-1}), \ov b := (b_{0}, \ldots, b_{k-1})\in \set{1,\ldots,n}^{k}$
    be such that $(\S_{u},\ov a) \Hanfequiv_{f(n)} (\S_{v}, \ov b)$.
    There exist $u',v' \in L_{\textup{assign}(k)}$ such that $u = \tilde u'$ and $v = \tilde v'$ and such that $a_{i} = \occ_{i}(u')$ and $b_{i} = \occ_{i}(v')$, for each $i\in [k]$.  Then $(\S_{u},\ov a) = (\S_{\tilde u'},\ov{\occ}_k(u'))$ and $(\S_{v},\ov b) = (\S_{\tilde v'},\ov{\occ}_k(v'))$. Since $(\S_{u},\ov a) \Hanfequiv_{f(n)} (\S_{v}, \ov b)$, we also have
    $\S_{u'} \Hanfequiv_{f(n)} S_{v'}$. By Hanf $f(n)$-locality of $A_{q}$, this means that
    $u' \in A_{q}$ iff $v' \in A_{q}$. Hence, $\ov{\occ}_k(u') \in q(\S_{\tilde u'})$ iff $\ov{\occ}_k(v') \in q(\S_{\tilde v'})$. That is, $\ov a \in q(\S_{u})$ iff $\ov b \in q(\S_{v})$
    and hence $q$ is Hanf $f(n)$-local on $\ClassStrings$.\qedhere
  \end{enumerate}
\end{proof}

\noindent The proof of Theorem~\ref{thm:HanfLocalityOnStrings} can now be finished as follows.
Suppose that 
there is a $k$-ary query $q$, for some $k>0$, which is definable in
$\arbinvFOMODp(\sigma_\Sigma)$ on $\ClassStrings$,
for some odd prime power $p$ and some finite alphabet $\Sigma$,
such that 
there is no $c\in\NN$ for which $q$ is Hanf $(\log n)^c$-local on $\ClassStrings$.
By the previous lemma, we obtain a language $A_{q}$ which is definable in 
$\arbinvFOMODp(\sigma_{\Sigma_{\textup{var}(k)}})$ and which is also not Hanf $(\log n)^{c}$-local. But we have already proved that this is impossible.

\section{Conclusion}\label{section:conclusion}
We have introduced a new notion of locality called \emph{shift locality} which generalises
Niemistö's notion of alternating locality with constant locality radius \cite{Niemistoe-PhD} and Libkin's notion of weak Gaifman locality \cite{Libkin2004}.
We have presented a comprehensive picture of the locality of $\arbinvFOMODp$ for prime powers $p$.

We have also investigated notions of locality on string structures.
Here some natural questions remain open.
We have shown that there is an $\ordinvFOMODnum{2}$-definable language which is not Hanf local and hence not $\FOMOD{q}$-definable for any modulus $q$. It would be interesting to understand the expressive power of $\ordinvFOMODnum{2}$ on strings in more detail. Is there a decidable algebraic characterisation of the $\ordinvFOMODnum{2}$-definable languages? Is there a logic with the same expressive power as $\ordinvFOMODnum{2}$ but with an effective syntax?

We have derived Hanf locality on string structures from weak Gaifman locality. To accomplish this, we have used the characterisation of Hanf locality in terms of disjoint swap operations.
The origin of these swap operations goes back to \cite{BenediktSegoufin2009b} where similar operations are, among other things, used to obtain a new proof of an algebraic characterisation of first-order logic on string structures
from \cite{BP89}. As a first step towards proving an algebraic characterisation of $\ordinvFOMODnum{2}$, we believe that one could define a variant of Hanf locality whose relation to shift locality is similar to the relation of Hanf locality and weak Gaifman locality that we established.
Then one could try to characterise this notion of locality by a variant of the disjoint swap operations.

\bibliographystyle{abbrv} 
 
\end{document}